\documentclass[letterpaper,11pt]{article}
\usepackage[margin=1in]{geometry}
\usepackage{amsmath, amsfonts, amssymb, amsthm}
\usepackage[table]{xcolor}
\usepackage{mathtools}
\usepackage{array}
\usepackage{enumitem}
\usepackage{xcolor}
\usepackage{comment}
\usepackage[hidelinks,bookmarksnumbered=true]{hyperref}
\usepackage{algorithm,algpseudocode}
\usepackage{xspace}
\usepackage[capitalize,sort]{cleveref}
\usepackage{tikz}
\usepackage{thm-restate}
\usepackage{multirow}
\usepackage{multicol}
\usepackage[symbol]{footmisc}
\usepackage{physics}

\newtheorem{lemma}{Lemma}
\newtheorem{theorem}{Theorem}
\newtheorem{observation}{Observation}
\newtheorem{corollary}{Corollary}
\newtheorem{definition}{Definition}
\newtheorem{claim}{Claim}
\newtheorem*{claim*}{Claim}

\newtheorem{proposition}{Proposition}

\let\eps\varepsilon

\DeclareMathOperator*{\argmin}{argmin}
\DeclareMathOperator*{\argmax}{argmax}

\newcommand{\I}{\mathcal{I}}
\newcommand{\D}{\mathcal{D}}
\newcommand{\Val}{\mathcal{U}}
\newcommand{\N}{\mathcal{N}}

\newcommand{\poly}{\text{poly}}

\newcommand{\Fair}{Reciprocal}
\newcommand{\val}{u}
\newcommand{\cs}{core-stable\xspace}
\newcommand{\alloc}{\mathbf{x}}
\newcommand{\x}{\mathbf{x}}
\newcommand{\y}{\mathbf{y}}
\newcommand{\z}{\mathbf{z}}
\newcommand{\e}{\mathbf{e}}

\newcommand{\elltwo}[1]{{\lVert#1\rVert}_2}

\title{On the Theoretical Foundations of Data Exchange Economies\thanks{Work supported by NSF Grants CCF-1942321 and CCF-2334461.}}
\author{Hannaneh Akrami\thanks{Max Planck Institute for Informatics and Graduiertenschule Informatik, Universit\"at des Saarlandes}\\ \texttt{\small hakrami@mpi-inf.mpg.de} \and Bhaskar Ray Chaudhury\thanks{University of Illinois at Urbana-Champaign}\\ \texttt{\small braycha@illinois.edu} \and Jugal Garg\thanks{University of Illinois at Urbana-Champaign}\\ \texttt{\small jugal@illinois.edu} \and  Aniket Murhekar\thanks{University of Illinois at Urbana-Champaign}\\ \texttt{\small aniket2@illinois.edu}}
\date{}

\begin{document}
\maketitle

\thispagestyle{empty}
\begin{abstract}
The immense success of ML systems relies heavily on large-scale, high-quality data. The high demand for such data has led to several paradigms that involve selling, exchanging, and sharing data. This naturally motivates studying economic processes involving data as an asset. However, data differs from classical economic assets in terms of free duplication -- meaning there is no concept of limited supply, as data can be replicated at zero marginal cost. This distinction introduces fundamental differences between economic processes involving data and those concerning other assets.

We investigate a parallel to exchange (Arrow-Debreu) markets in settings where data is the asset. In this setup, agents possessing datasets exchange data fairly and voluntarily, aiming for mutual benefit without monetary compensation. This framework is particularly relevant in settings involving non-profit organizations that seek to improve their ML models through data exchange with other organizations, yet are restricted from selling their data for profit. 

We propose a general framework for data exchange from first principles, focusing on two key properties: (i) \emph{fairness}, ensuring that each agent receives utility proportional to their contribution to other agent utilities -- agent contributions are quantifiable using well-established credit-sharing functions, such as the Shapley value, and (ii) \emph{stability}, guaranteeing that no coalition of agents can identify an exchange among themselves which they all unanimously prefer to the current exchange. 
We show that fair and stable exchanges exist for \emph{all} monotone continuous utility functions. 
Building on this, we investigate the computational complexity of finding approximate fair and stable exchanges. We present a local search algorithm for instances with monotone submodular utility functions, where each agent's data contribution is measured using the Shapley value. We prove that this problem belongs to CLS ($=$ PPAD $\cap$ PLS) under mild assumptions. Our framework opens up several intriguing theoretical avenues for future research on data economics.
\end{abstract}
\newpage
\setcounter{page}{1}
\section{Introduction}
\label{intro} 
Data has evolved to be an invaluable asset in the digital world. The surge in data-driven decision-making processes in several sectors like healthcare~\cite{basile2023business}, retail~\cite{qi2020data}, behavioral analytics~\cite{troisi2018big}, supply chain~\cite{sanders2014big}, and banking and finance~\cite{adeniran2024integrating} has created a tremendous demand for high-quality data. Acumen Research estimates the total value of the big data economy in the US to be around 473.6 billion USD by 2030 (growing at a compounding annual growth rate of 12.7 percent from 2022)~\cite{AcumenDM, BhaskaraGIKMS24}. All of this underscores the necessity of principled frameworks for economic processes handling data.

In this paper, we introduce a formal model of \emph{data-exchange} -- a data economy, where a group of agents, each valuing the data held by others, voluntarily exchange data for mutual benefit. This framework is particularly relevant in settings where organizations with heterogeneous datasets seek to enhance their private ML models through shared data, yet are restricted from selling data for profit. 
Such exchanges are pertinent to non-profit organizations like hospitals or universities. There is ample evidence underscoring the importance of collaborative data exchange; for example, mid-level hospital managers from 18 European countries reported a notable increase in data sharing among healthcare organizations during the COVID-19 pandemic~\cite{ivankovic2023data}. Additionally, networks like HEDS~\cite{HEDS} exemplify the commitment of colleges and universities to collaboratively share data, knowledge, and expertise, promoting undergraduate 
education, inclusivity, and student success. Given the pressing relevance of this topic, in this paper, we explore the theoretical foundations of data exchange economies.

\vspace{-0.35cm}

\paragraph{Differences from Classic Exchange Economies.} Exchange markets, also known as Arrow-Debreu markets~\cite{walras2013elements}, have been extensively studied in economics and theoretical computer science. In these settings, agents with endowments of goods engage in trades to maximize their utility. The final allocation is obtained through a \emph{competitive equilibrium} that determines prices for each good, and an exchange, such that demand meets supply for each good. In contrast, data exchange economy differs fundamentally due to the fact that  
data is a \emph{replicable asset}: it can be duplicated at zero marginal cost and therefore has no concept of limited supply. As a result, applying competitive equilibrium as a solution concept becomes less suitable for data exchange economies.

\subsection{Model Description}

\paragraph{The Exchange Economy.} A data exchange economy $\mathcal D$ comprises of set $\N$ of $n$ agents. Agent $i$ owns dataset $D_i$, and is willing to exchange parts of $D_i$ with other agents. An exchange is represented by $\x \in [0,1]^{n \times n} $, where $x_{ij}$ represents the fraction\footnote{The notion of fractional data exchange, representing the fraction of data points from a dataset, is well established in data economics~\cite{ChenLX22} and incentives in learning~\cite{KGJ22, BlumHPS21} --- notably, a fractional exchange can also be interpreted as a distribution over integral exchanges (datasets assigned as a whole), allowing us to formulate the guarantees of our paper using expected utility and expected contributions.} of $D_i$ shared with agent $j$. 
Agent $i$ has a utility function $u_i \colon [0,1]^n \rightarrow \mathbb{R}_{\geq 0}$ which captures the value an agent has for the data bundle $\x_i = \langle x_{1i}, x_{2i}, \dots , x_{ni} \rangle$ she gets from an exchange. We assume that the utility functions are \emph{monotone} and \emph{continuous}. 
\smallskip

\noindent The data exchange is operated through a central trusted server, similar to data market economies\\ \cite{AgarwalDS19} and decentralized machine learning paradigms like federated learning~\cite{mcmahan2017communication}, with which all agents share their datasets and prediction tasks. The server can train ML models on data samples collected from different agents. It can therefore estimate an agent's prediction gain/ utility on any data bundle, i.e., can compute $u_i(\x_i)$ for any $\x_i \in [0,1]^n$ and any $i$ efficiently.

\paragraph{Desiderata for the Exchange.}  Similar to the welfare guarantees in classical exchange economies \\\cite{walras2013elements} and other co-operative economies, we want the data exchange to satisfy \emph{fairness} and \emph{stability} --- by fairness we require agents providing ``good quality data'' to others to be reciprocated by valuable data in return, and by stability we require that no coalition of agents have an incentive to deviate from the current exchange, i.e., no set of agents can identify an exchange amongst themselves which they all strictly prefer to the current exchange. In contrast to classic exchange with non-replicable goods, we formulate the foregoing requirements of our data exchange without involving competitive prices for the datasets. We formally introduce the concepts of fairness and stability.
\begin{itemize}
    \item \emph{Fairness (Reciprocity):} The utility an agent gets from an exchange $\x$ should be at least the sum of her contributions to other agents utilities, i.e., $u_i(\x_i) \geq \sum_{j} \psi_{ij}(\x_j)$ for all $i$, where $\psi_{ij}(\x_j)$ measures agent $i$'s contribution to $u_j(\x_j)$. The contribution $\psi_{ij}(\x_j)$ is estimated through standard credit-sharing rules. In this paper, we only require the following three minimal assumptions about $\psi_{ij}(\cdot)$:
    \begin{itemize}
        \item \emph{monotonicity}, implying that $\psi_{ij}(\x_j)$ is weakly monotone in $x_{ij}$,
        \item \emph{normalization}, implying that $\psi_{ij}(\x_j) =0$ if $x_{ij}=0$, and
        \item \emph{efficiency}, implying that $\sum_i \psi_{ij}(\x_j) = u_j(\x_j)$ for all $j$.  
    \end{itemize}
    
    Our assumptions are satisfied by popular credit-sharing rules, such as the Shapley share~\cite{shapley1951notes, shapley1953value}.
   
    \item \emph{Core-Stability:} An exchange $\x$ is core-stable if there exists no subset of agents $S \subseteq \N$ and no exchange $\y$ among agents in $S$ such that $u_i(\y_i) > u_i(\x_i)$ for all $i \in S$. We highlight a subtle but important feature of our core-stability desiderata: note that we do not impose any fairness requirement on the deviating coalition, while we impose the same on our solution, i.e., for a coalition to deviate from the exchange $\x$, it suffices for them to identify an exchange $\y$ (in particular, the exchange where all agents in the coalition share everything within themselves) which gives all agents in the coalition strictly higher utility than their utility in the exchange $\x$. \emph{In particular, the deviating coalition is allowed to conduct unfair exchanges, while the solution is not!} 
\end{itemize}

\paragraph{Main Question.} It is trivial to determine an exchange satisfying only one of the foregoing desiderata -- an exchange where all agents share nothing is trivially fair, and an exchange where all agents share everything is trivially core-stable, as there exists no exchange where any agent can get strictly higher utility. However, determining an exchange satisfying both the desiderata is non-trivial. 

To build some intuition on the non-triviality of the problem, we highlight a problem that arises when we approach the problem using a natural, iterative algorithm: (1) Maintain a fair exchange $\x$, (2) if $\x$ is not core-stable, identify a deviating coalition $S$ and a deviating exchange $\y$, (3) use $S$ and $\y$ to compute another fair exchange $\x'$, where we strictly improve the utilities of the agents in $S$, without reducing the utilities of the agents in $\N \setminus S$, and (4) update $\x \gets \x'$, and repeat until we get a fair, core-stable exchange. The main problem is that even if we were able to find an exchange that satisfies the fairness condition for all agents in $S$ while still improving their utility, it may be impossible to maintain the fairness condition for agents outside of $S$. This is primarily due to the  \emph{non-separability of $\psi_{ij}(\cdot)$}, i.e., an agent $i$'s contribution $\psi_{ij}(\x_j)$ in $j$'s utility is not only dependent on $x_{ij}$, but on all $x_{i' j}$ where $i' \neq i$. For instance, if datasets $D_i$ and $D_{i'}$ are similar, then increasing $x_{i'j}$ can decrease $\psi_{ij}(\x_j)$ and if $D_i$ and $D_j$ are complimentary, increasing $x_{i'j}$ can increase $\psi_{ij}(\x_j)$. This implies that any local exchanges done within agents in $S$ can potentially violate the fairness condition for agents in $\N \setminus S$. See Figure~\ref{Fig:separable} for an illustration.

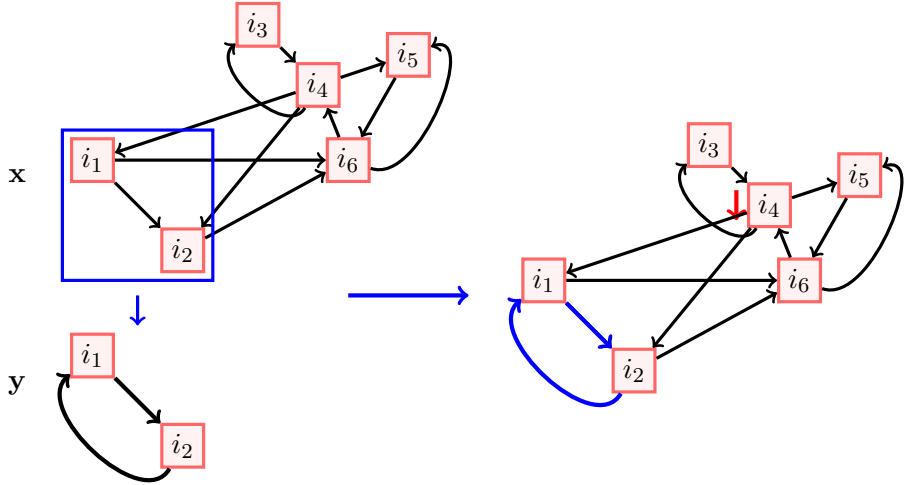
\begin{figure}
    \centering 
    \begin{tikzpicture}[scale=0.8,
roundnode/.style={circle, draw=green!60, fill=green!5, very thick, minimum size=7mm},
squarednode/.style={rectangle, draw=red!60, fill=red!5, very thick, minimum size=5mm},
]

%
%
%

{

\node at (-2,0.5) {$\mathbf{x}$};

\node [squarednode] at (-0.75,0.75) (a1) {$i_1$};
\node [squarednode] at (0.75,-0.75) (a2) {$i_2$};
\node [squarednode] at (2,3) (a3) {$i_3$};
\node [squarednode] at (3,2) (a4) {$i_4$};
\node [squarednode] at (4.5,2.5) (a5) {$i_5$};
\node [squarednode] at (3.5,0.75) (a6) {$i_6$};


\draw[->, black, very thick] (a3)--(a4);
\draw[->, black, very thick] (a4)--(a5);
\draw[->, black, very thick] (a5)--(a6);
\draw[->, black, very thick] (a6)--(a4);
\draw[->, black, very thick] (a4)--(a1);
\draw[->, black, very thick] (a1)--(a6);
\draw[->, black, very thick] (a2)--(a6);
\draw[->, black, very thick] (a4)--(a2);
\draw[->, black, very thick] (a1)--(a2);
\draw[->, black, very thick] (a4) to[out=-120,in=220] (a3);
\draw[->, black, very thick] (a6) to[out=-20,in=20] (a5);}

{
\node at (-2,-3) {$\mathbf{y}$};
\draw[blue,  very thick] (-1.25,1.25) rectangle (1.25,-1.25);
\draw[blue, very thick,->] (0,-1.5)--(0,-2);
\node [squarednode] at (-0.75,0.75-3.25) (b1) {$i_1$};
\node [squarednode] at (0.75,-0.75-3.25) (b2) {$i_2$};
\draw[->, black, ultra thick] (b1)--(b2);
\draw[->, black, ultra thick] (b2) to[out=-120,in=220] (b1);
}


{

\draw[blue,->, ultra thick] (3.5,0.5 -2)--(5.5,0.5-2);

\node [squarednode] at (-0.75+ 7.5,0.75-2) (c1) {$i_1$};
\node [squarednode] at (0.75+ 7.5,-0.75-2) (c2) {$i_2$};
\node [squarednode] at (2+ 7.5,3-2) (c3) {$i_3$};
\node [squarednode] at (3+ 7.5,2-2) (c4) {$i_4$};
\node [squarednode] at (4.5+ 7.5,2.5-2) (c5) {$i_5$};
\node [squarednode] at (3.5+ 7.5,0.75-2) (c6) {$i_6$};

\draw[red,->, ultra thick] (2.45+7.5,2.25-2)--(2.45+7.5,1.75-2);


\draw[->, black, very thick] (c3)--(c4);
\draw[->, black, very thick] (c4)--(c5);
\draw[->, black, very thick] (c5)--(c6);
\draw[->, black, very thick] (c6)--(c4);
\draw[->, black, very thick] (c4)--(c1);
\draw[->, black, very thick] (c1)--(c6);
\draw[->, black, very thick] (c2)--(c6);
\draw[->, black, very thick] (c4)--(c2);
\draw[->, black, very thick] (c1)--(c2);
\draw[->, black, very thick] (c4) to[out=-120,in=220] (c3);
\draw[->, black, very thick] (c6) to[out=-20,in=20] (c5);

\draw[->, blue, ultra thick] (c1)--(c2);
\draw[->, blue, ultra thick] (c2) to[out=-120,in=220] (c1);

}

\end{tikzpicture}
    \caption{Illustration of the issue of non-separability. The squared nodes correspond to the agents and an edge $\overrightarrow{(i,j)}$ between agents $i$ and $j$ imply that $x_{ij} > 0$, i.e., $i$ shares some of her data with $j$. Given a reciprocal exchange $\x$, say we have a coalition $S$ ($S = \{i_1,  i_2\}$ in our example) such that all agents in $S$ prefer $\y$ to $\x$. Performing any local update to the exchange in $\x$ (say increasing data-flow between $i_1$ and $i_2$) can affect the utility-contributions ($\psi_{ij}(\cdot)$s) of all agents that send data flow to agents in $S$ (agent $i_4$).}    \label{Fig:separable}
\end{figure}

\subsection{Our Contributions}

In classic exchange markets, fairness\footnote{The fairness condition is guaranteed through the equilibrium prices. In particular, each agent gets a utility-maximizing bundle subject to the total price of her initial endowment, i.e., the higher the price of the initial endowment, the higher is the utility of the agent.} and core-stability are consequences of the competitive exchange (also called \emph{competitive equilibrium (CE)}). However, the existence of a CE is only guaranteed under quasi-concave utility functions~\cite{arrow1954existence}. It is only natural to seek existence under similar guarantees. However, as our first main result, we show the existence of fair and core-stable exchange much more generally. 

\begin{theorem}
    \label{thm-intro:informal}
     A reciprocal and core-stable exchange exists for all monotone continuous utility functions and any credit-sharing functions satisfying monotonicity, normalization, and efficiency. 
\end{theorem}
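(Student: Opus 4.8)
The plan is to first recast both desiderata into a cleaner form and then obtain the exchange by a fixed-point (cooperative game-theoretic) argument. A useful first observation is that efficiency forces the ``surpluses'' $\sigma_i(\x) := u_i(\x_i) - \sum_j \psi_{ij}(\x_j)$ to sum to zero for \emph{every} exchange $\x$, since $\sum_i \sum_j \psi_{ij}(\x_j) = \sum_j \sum_i \psi_{ij}(\x_j) = \sum_j u_j(\x_j) = \sum_i u_i(\x_i)$. Hence reciprocity, which asks $\sigma_i(\x) \ge 0$ for all $i$, is equivalent to $\sigma_i(\x) = 0$ for all $i$, so the reciprocal exchanges form the (generally non-convex) ``surplus-zero'' set $\mathcal{F} = \{\x : \sigma_i(\x) = 0\ \forall i\}$. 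On the stability side, since a deviating coalition is free to run the full internal exchange and utilities are monotone, $\x$ is core-stable iff for every nonempty $S \subseteq \N$ there is some $i \in S$ with $u_i(\x_i) \ge u_i(\mathbf{1}^S)$, where $\mathbf{1}^S \in [0,1]^n$ is $1$ on coordinates in $S$ and $0$ elsewhere. So the goal becomes: find a single $\x \in \mathcal{F}$ meeting these $2^n-1$ coalitional ``satisfaction'' conditions.

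I would then realize such an $\x$ as a core element of the non-transferable-utility game in which each $S \subsetneq \N$ has the comprehensive, box-shaped feasible set $V(S) = \{(a_i)_{i \in S} : a_i \le u_i(\mathbf{1}^S)\ \forall i \in S\}$, while the grand coalition gets $V(\N) = \{(a_i)_{i \in \N} : \exists\ \text{reciprocal}\ \x\ \text{with}\ u_i(\x_i) \ge a_i\ \forall i\}$. By construction a core element of this game is a profile realized by a reciprocal exchange that no coalition can block, which is exactly \Cref{thm-intro:informal}. To show the core is nonempty I would invoke Scarf's theorem on balanced NTU games, reducing the task to verifying balancedness: given a balanced family $\mathcal{B}$ with weights $(\delta_S)_{S \in \mathcal{B}}$ and a profile $(a_i)$ with $a_i \le u_i(\mathbf{1}^S)$ for every $S \in \mathcal{B}$ containing $i$, produce a reciprocal $\x$ with $u_i(\x_i) \ge a_i$ for all $i$. (A more self-contained alternative is a Kakutani fixed point on the convex compact set $[0,1]^{n \times n}$ for a correspondence that simultaneously reduces the rows of over-contributing agents, raises the rows of under-contributing ones so as to drive $\sigma$ to $0$, and inflates the within-$S$ edges of any blocking coalition $S$; its fixed points are precisely the reciprocal, core-stable exchanges.)

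The crux is building the reciprocal grand-coalition exchange out of the coalitional guarantees, and this is where the replicability of data does real work: to get the lower bounds $u_i(\x_i) \ge a_i$ one should combine the full-internal exchanges $\{\mathbf{1}^S : S \in \mathcal{B}\}$ by coordinatewise \emph{maximum} --- a ``union'' of shared datasets --- rather than by a convex combination, which is unavailable since $u_i$ need not be concave; then for each $i$ the combined bundle dominates $\mathbf{1}^S$ for some $S \in \mathcal{B}$ with $i \in S$, so $u_i(\x_i) \ge u_i(\mathbf{1}^S) \ge a_i$ by monotonicity. I expect the main obstacle to be that this combined $\x$ need not lie in $\mathcal{F}$: because $\psi_{ij}(\x_j)$ is non-separable (it depends on the whole bundle $\x_j$, not only on $x_{ij}$), neither taking maxima nor locally adjusting edges to restore $\sigma_i = 0$ is innocuous --- exactly the phenomenon illustrated in \Cref{Fig:separable}. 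Overcoming it requires either a reciprocity-restoration lemma (monotonically deforming $\x$ onto $\mathcal{F}$ while preserving every bound $u_i(\x_i) \ge a_i$) or, in the Kakutani formulation, designing the correspondence so that reciprocity holds along the entire trajectory rather than being patched at the end; the closed-graph/continuity bookkeeping there, together with the mild continuity hypothesis on the credit-sharing functions it needs, is the technical heart of the argument.
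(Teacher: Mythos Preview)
Your preliminary reductions are correct: efficiency forces $\sum_i \sigma_i(\x)=0$ so reciprocity is $\sigma\equiv 0$, and by monotonicity core-stability is equivalent to ``for every $S$ some $i\in S$ has $u_i(\x_i)\ge u_i(\mathbf{1}^S)$''. But the proposal stops exactly where the real work begins. In the Scarf route, everything rests on the \emph{balancedness} of your NTU game, and you do not prove it; you only observe that the coordinatewise max of the $\mathbf{1}^S$'s meets the utility targets and then defer to an unspecified ``reciprocity-restoration lemma''. That lemma is not a technicality---it is the theorem. Because $\psi_{ij}(\x_j)$ depends on the whole column $\x_j$, any deformation that pushes $\sigma$ back to zero can move every agent's contributions simultaneously, and there is no monotone path on $\mathcal{F}$ that is guaranteed to preserve all the lower bounds $u_i(\x_i)\ge a_i$ (this is precisely the non-separability obstruction in \Cref{Fig:separable}). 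The Kakutani alternative has the same gap in a different guise: a correspondence that both ``drives $\sigma$ to $0$'' and ``inflates edges of a blocking $S$'' will in general point in conflicting directions on the same coordinates, and you give no argument that its values are nonempty and convex or that it has closed graph.

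The paper avoids this impasse by \emph{not} trying to land on $\mathcal{F}$ and on the core simultaneously. Instead it isolates a subset of $\varepsilon$-core-stable exchanges---those whose exchange graph $G(\x,\alpha(\varepsilon))$ is acyclic---and shows this subset is homeomorphic to a convex compact polytope $Z$ via $z_{ij}=\log\frac{1}{1-x_{ij}}$ (the acyclicity constraints become linear after the log). On $Z$ it runs Brouwer, not Kakutani or Scarf: the continuous map nudges $z_{ij}$ down and $z_{ji}$ up whenever $\Delta_i>\Delta_j$, clipped so as to stay in $Z$, and a short combinatorial argument (if decreasing $z_{ij}$ is blocked, some edge along the resulting path in $G(\x,\alpha(\varepsilon))$ can be increased) shows every fixed point is reciprocal. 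This gives reciprocal $+$ $\varepsilon$-core-stable; exact core-stability then follows by sequential compactness of the reciprocal and $\varepsilon$-core-stable sets. The point you are missing is that one should search for reciprocity \emph{inside} a convex proxy for the core, rather than search for core-stability inside the non-convex reciprocal set or try to verify a balancedness condition that is at least as hard as the theorem itself.
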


\noindent Theorem~\ref{thm-intro:informal} is intriguing, given that the set of reciprocal exchanges, and the set of core-stable exchanges are both non-convex and therefore not well-behaved for the application of standard fixed point theorems. Furthermore, applying fixed point theorems such as the \emph{Scarf Lemma}~\cite{scarf1967core} ---  used to prove the existence of core stability in matching~\cite{gale1962college}, and housing allocation~\cite{shapley1974cores} --- would require a weaker definition of core stability. In particular, the deviating coalition must also comply with the reciprocity constraint, as Scarf's lemma ensures the absence of such coalitions only within a defined domain, which here refers to a subset of reciprocal exchanges. Consequently, whatever constraints apply to exchange in consideration, also apply to the deviating exchanges. In contrast, our techniques offer broad applicability, particularly in scenarios where core stability is required alongside other constraints, and deviating coalitions are not required to meet these constraints. An overview of our methodology is provided in Section~\ref{sec:overview-exist}, with a more detailed explanation in Section~\ref{sec:existence}.
\medskip

\noindent In this paper, we also initiate the study on the computational complexity of $\varepsilon$-core-stable and $\varepsilon$-fair exchanges (obtained by adding an additive error of $\varepsilon$ to reciprocity and core-stability).  Our existence proof (Section~\ref{sec:existence}) invokes Brouwer's fixed point theorem on a function $f$ to show the existence of a core stable and fair exchange. However, the domain in our fixed point formulation is the intersection of exponentially many hyperplanes and therefore does not admit an explicit polynomial representation. We thus embed the domain in our fixed point formulation to a \emph{box}, and define a new fixed point function $\tilde{f}$, such that $\varepsilon$-approximate fixed points of $\tilde{f}$ can be mapped to  $\textup{poly}(\varepsilon)$-fair and $\textup{poly}(\varepsilon)$-core stable exchange in polynomial time. This puts the problem in PPAD. We suspect that for better computational results, one needs to make more assumptions on the utility and the share functions. As an initiation, we consider monotone $L$-Lipschitz utility functions and \emph{cross-monotone share functions}, i.e., $\psi_{ij}(\x_j)$ is monotone non-increasing in $x_{i'j}$ for all $i' \neq i$~\cite{BhaskaraGIKMS24}. This condition is satisfied when we consider $L$-Lipschitz continuous submodular utility functions, and Shapley shares to measure $\psi_{ij}(\cdot)$. For instances satisfying the foregoing conditions, we give a \emph{local search algorithm} (defined formally in Section~\ref{sec:comp}), that determines a $\varepsilon$-core stable and $\varepsilon$-fair exchange. When the utility functions are $L$-Lipschitz, and  $L/ {\varepsilon} = \textup{poly}(n)$, then our local search algorithm shows that the problem is in PPAD $\cap$ PLS $=$ CLS~\cite{FearnleyGHS23}. 

\begin{theorem}
For monotone $L$-Lipschitz utility functions, and cross-monotone share functions, there exists a local-search procedure that determines a $\varepsilon$-core stable and $\varepsilon$-fair exchange. For $L/ {\varepsilon} = \textup{poly}(n)$, finding a $\varepsilon$-core stable and $\varepsilon$-fair exchange is in CLS.
\end{theorem}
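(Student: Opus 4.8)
The plan is to collapse both requirements into a single local-search problem whose local optima are precisely the $\eps$-core-stable and $\eps$-fair exchanges, and then to observe that, once $L/\eps=\poly(n)$, this local-search problem is a polynomially bounded instance of PLS; since the preceding discussion already places the problem in PPAD, the equality CLS $=$ PPAD $\cap$ PLS of~\cite{FearnleyGHS23} finishes the proof. The first ingredient is a reformulation of both desiderata. Monotonicity of the utilities implies that the most profitable deviation available to a coalition $S$ is the exchange in which every member of $S$ shares its whole dataset with every other member; writing $\mathbf{1}^i_S$ for the bundle that $i\in S$ then receives, $\x$ is $\eps$-core-stable iff every $S\subseteq\N$ has a member $i$ with $u_i(\x_i)\ge u_i(\mathbf{1}^i_S)-\eps$. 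For fairness, efficiency of the credit-sharing rule gives that the deficits $\mathrm{def}_i(\x):=\sum_j\psi_{ij}(\x_j)-u_i(\x_i)$ satisfy $\sum_i\mathrm{def}_i(\x)=\sum_j u_j(\x_j)-\sum_i u_i(\x_i)=0$ identically, so $\eps$-fairness is exactly the condition $\max_i\mathrm{def}_i(\x)\le\eps$: an already balanced deficit vector that must be kept small.

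The procedure works on the grid of exchanges whose coordinates are multiples of a mesh $\Delta=\Theta(\eps/(Ln))$ chosen fine enough that, by $L$-Lipschitzness, rounding a bundle to the grid perturbs each $u_i$ and each $\psi_{ij}$ only negligibly relative to $\eps$. It starts from the full-sharing exchange $\mathbf{1}$ --- which is exactly core-stable but in general not fair --- and repeatedly passes to a neighbouring grid exchange that improves a potential $\Phi$ trading social welfare against a penalty for violated fairness (morally $\Phi(\x)\approx -\sum_i u_i(\x_i)+M\sum_i[\mathrm{def}_i(\x)]_+$ with $M=\poly(n)$, although the exact potential and the exact neighbourhood must be set up with care). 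Two facts have to be proved. If $\x$ is not $\eps$-fair, then lowering a suitable outgoing flow $x_{ij}$ of a maximum-deficit agent $i$ improves $\Phi$: lowering $x_{ij}$ weakly lowers $\psi_{ij}(\x_j)$ by monotonicity of the share while, by cross-monotonicity, only weakly raising the competing shares $\psi_{i'j}(\x_j)$, the welfare loss being $O(L\Delta)$ and hence negligible. If instead $\x$ is $\eps$-fair but some coalition $S$ can deviate, then raising the within-$S$ flows toward $\mathbf{1}^S$ improves $\Phi$: by cross-monotonicity this weakly lowers every outside share $\psi_{i'j}(\x_j)$ with $i'\notin S$, so no agent outside $S$ loses fairness, while the strict gains $u_i(\mathbf{1}^i_S)>u_i(\x_i)+\eps$ cover any fairness lost inside $S$. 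Consequently every local optimum of $\Phi$ is simultaneously $\eps$-core-stable and $\eps$-fair, and since the grid is finite the procedure terminates; this gives the first assertion, for all $L$ and $\eps$.

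For the complexity statement, note that when $L/\eps=\poly(n)$ each coordinate ranges over only $\poly(n)$ grid values, so a configuration has a $\poly(n)$-bit description, the neighbourhood is of polynomial size, and $\Phi$ is evaluable in polynomial time from the server's oracle for the $u_i$ together with the polynomial-time computability of the shares $\psi_{ij}$ (for instance, Shapley shares of $L$-Lipschitz submodular utilities). This exhibits the search problem as an instance of PLS; combined with its PPAD membership and CLS $=$ PPAD $\cap$ PLS~\cite{FearnleyGHS23}, it lies in CLS.

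I expect the main obstacle to be exactly the non-separability stressed in the introduction: a single local move perturbs $u_j(\x_j)$ and, at the same time, \emph{every} share $\psi_{1j}(\x_j),\dots,\psi_{nj}(\x_j)$, so proving that $\Phi$ strictly improves away from optima --- and quantifying the improvement well enough to bound the number of steps, and hence the reductions to PLS and PPAD --- requires playing cross-monotonicity off against efficiency to keep this chain reaction under control. It is precisely at this point that $L$-Lipschitzness (bounding each per-step change in welfare) and the hypothesis $L/\eps=\poly(n)$ (keeping the grid, the neighbourhood, and the range of $\Phi$ polynomial) are needed.
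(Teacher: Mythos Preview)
Your outline has the right high-level architecture (PLS via a potential-driven local search, then PPAD $\cap$ PLS $=$ CLS), but two concrete steps fail as written.

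First, your potential $\Phi(\x)\approx -\sum_i u_i(\x_i)+M\sum_i[\mathrm{def}_i(\x)]_+$ does not decrease under the ``lower $x_{ij}$ for a max-deficit agent $i$'' move. Lowering $x_{ij}$ reduces $u_j(\x_j)$, which \emph{raises} $\mathrm{def}_j$; and by cross-monotonicity every competing share $\psi_{i'j}$ weakly rises, raising each $\mathrm{def}_{i'}$. Since $\sum_k\mathrm{def}_k\equiv 0$, whatever deficit you remove from $i$ must reappear elsewhere, so $\sum_k[\mathrm{def}_k]_+$ need not drop at all---the deficit can simply migrate to another agent whose deficit becomes the new maximum. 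This is exactly the non-separability obstruction you flag but do not overcome. The paper's fix is substantially more delicate: it uses a lexicographic potential on the \emph{sorted} surplus vector (encoded as $P(\x)=\sum_i(nL/\eps)^{2(n-i)}\Delta_{\sigma(i)}(\x)$), identifies a set $S$ of top-surplus agents separated by a gap of $\eps/n^2$ from the rest, and performs a \emph{threshold-based} reduction that caps how far any single surplus in $S$ is allowed to fall, so that no surplus outside $S$ can overtake one inside $S$. Without this gap-and-threshold machinery the lexicographic decrease is not guaranteed.

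Second, your ``if $\x$ is $\eps$-fair but some coalition $S$ can deviate, raise the within-$S$ flows'' step assumes you can find a deviating coalition in polynomial time; with $2^n$ coalitions this is not obvious, and it would have to be part of the PLS neighbourhood map. The paper never tests core-stability directly. Instead it maintains the invariant that the exchange graph $G(\x,\eps/(nL))$ is \emph{acyclic}---a polynomial-time certificate of $\eps$-core stability---and the choice between ``decrease flow $S\to\N\setminus S$'' and ``increase flow $\N\setminus S\to S$'' is dictated not by a deviating coalition but by whether this graph has an edge from $\N\setminus S$ to $S$. This also prevents the cycle your scheme is vulnerable to, where fixing fairness breaks core-stability and vice versa: core-stability is an invariant throughout, and the only progress measure is the fairness potential.
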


\noindent {The above theorem contrasts computational results on finding solutions in classic exchange economies with utility functions that mildly generalize linear utilities. It is well known that computing a constant-factor approximation of an equilibrium in Arrow-Debreu markets with additively separable piecewise linear concave (SPLC) utilities is already PPAD-hard, where $L/\varepsilon$ is constant~\cite{chen2009settling, Rubinstein18}. Moreover, this problem remains PPAD-hard even for the special case of Fisher markets~\cite{ChenT09,VaziraniY11,DeligkasFHM24}.}

\subsection{Technical Overview: Existence and Fixed Point Formulation}\label{sec:overview-exist}
We use fixed point theorems to show the existence of reciprocal and core-stable exchanges. Given a function $f \colon Z \rightarrow Z$, a point $z \in Z$ is called the fixed point of $f$ if and only if $f(z) = z$. Fixed point theorems provide sufficient conditions on $Z$ and $f$ for the existence of a fixed point. One of the most popular fixed point theorems is the Brouwer's fixed point theorem, which states that as long as $Z$ is convex and compact, and $f$ is continuous, $f$ always admits a fixed point. Most fixed point theorems require convexity and compactness of the domain, and continuity-like properties for the function/ correspondence. 

Since our exchange $\x$ needs to satisfy two properties, namely fairness and core-stability, a natural attempt would be to use $Z$ to capture one property and $f$ to capture the other, e.g., we could define $Z$ to be the set of all fair exchanges and $f$ such that every fixed point of $f$ corresponds to a core-stable exchange. Unfortunately, the set of all fair exchanges and the set of all core-stable exchanges is non-convex!  Therefore, our strategy is to (i) identify a non-trivial subset of core-stable exchanges $Z$ which is convex or at least \emph{homeomorphic} to a convex set, and (ii) define a continuous function $f: Z \rightarrow Z$, such that the fixed points of $f$ correspond to a fair exchange.

\paragraph{Identifying a non-trivial set of core-stable exchanges, homeomorphic to a convex set.} We look into the set of core-stable exchanges that can be generated in the same spirit as the core-stable housing allocations generated by the top trading cycle algorithm~\cite{shapley1974cores}. To characterize the foregoing set of exchanges, we define the exchange graph $G(\x) = (\N,E)$ of an exchange $\x$ as a directed graph with nodes as the set of agents, and $\overrightarrow{(i,j)} \in E$ if and only if $x_{ij} < 1$, i.e., if agent $i$ does not provide its data entirely to agent $j$. Observe that if $G(\x)$ is acyclic, then $\x$ is core-stable: consider the sources $S$ in $G(\x)$. Agents in $S$ get complete data from all other agents, implying that there exists no exchange that give them strictly better utility. Therefore, agents in $S$ will not be part of any deviating coalition. Pick the set of agents in the topological order in $G(\x)$, and argue that no agent can be part of a deviating coalition unless their predecessors (agents higher up in the topological ordering) are part of the deviating coalition. This would end up showing that no agent is a part of a deviating coalition (see Figure~\ref{fig:dependency}). Let $X = \{\x \mid G(\x) \text{ is acyclic} \}$. We can equivalently write $X$ as the set of all exchanges $x$ such that for any cycle of agents $C$ in the complete graph $K_n$, we have $\prod_{\overrightarrow{e} \in C} (1-x_e) = 0$, as this condition $\prod_{\overrightarrow{e} \in C} (1-x_e) = 0$ will make sure that at least one edge from the cycle $C$ is not present in $G(\x)$. 

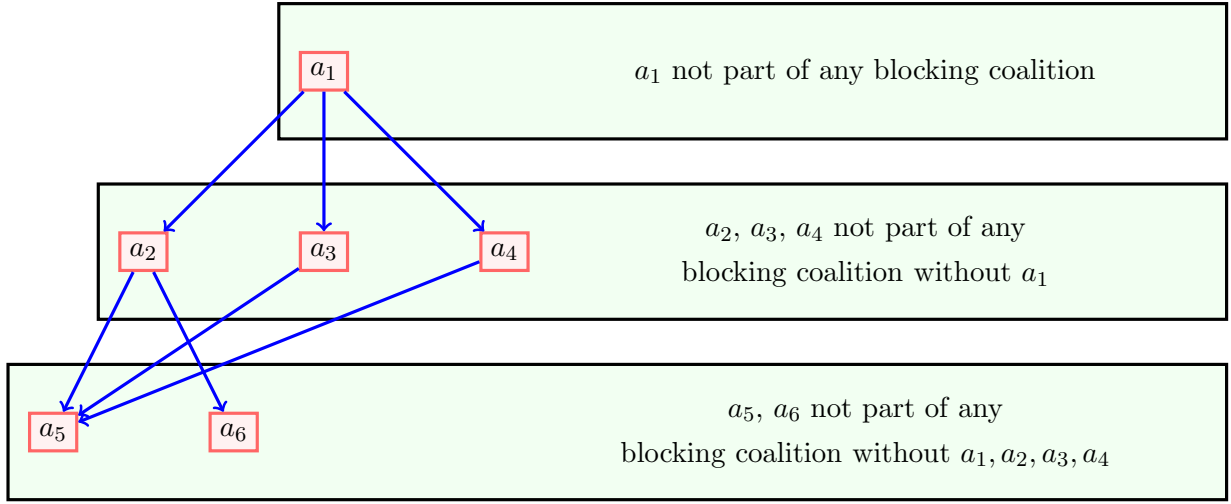
\begin{figure}
    \centering 
    \begin{tikzpicture}[scale=1.2,
roundnode/.style={circle, draw=green!60, fill=green!5, very thick, minimum size=7mm},
squarednode/.style={rectangle, draw=red!60, fill=red!5, very thick, minimum size=5mm},
]

\draw[black, fill=green!5, very thick] (-0.5,0.75) rectangle (10,-0.75);
\draw[black, fill=green!5, very thick] (-2.5,-1.25) rectangle (10,-2.75);
\draw[black, fill=green!5, very thick] (-3.5,-3.25) rectangle (10,-4.75);


\node[squarednode] at (0,0) (a1) {$a_1$};
\node[squarednode] at (-2,-2) (a2) {$a_2$};
\node[squarednode] at (0,-2) (a3) {$a_3$};
\node[squarednode] at (2,-2) (a4) {$a_4$};
\node[squarednode] at (-3,-4) (a5) {$a_5$};
\node[squarednode] at (-1,-4) (a6) {$a_6$};

\draw[blue,->, very thick] (a1)--(a2);
\draw[blue,->, very thick] (a1)--(a3);
\draw[blue,->, very thick] (a1)--(a4);
\draw[blue,->, very thick] (a2)--(a5);
\draw[blue,->, very thick] (a2)--(a6);
\draw[blue,->, very thick] (a3)--(a5);
\draw[blue,->, very thick] (a4)--(a5);

\node at (6,0) {$a_1$ not part of any blocking coalition};
\node at (6,-1.75) {$a_2$, $a_3$, $a_4$ not part of any};
\node at (6,-2.25)  {blocking coalition without $a_1$};
\node at (6,-3.75)  {$a_5$, $a_6$ not part of any};
\node at (6,-4.25)  {blocking coalition without $a_1, a_2, a_3, a_4$};

\end{tikzpicture}
    \caption{Illustration of how an acyclic exchange graph will ensure core-stability.}
    \label{fig:dependency}
\end{figure}

\vspace{-0.25cm}
\begin{center}
\begin{align*}
    \prod_{\overrightarrow{e} \in C} &(1-x_e) = 0 \quad \quad \forall C \in K_n\\
    x_{ij} &\in [0,1]
\end{align*}   
\end{center}

Unfortunately, $X$ still does not satisfy the desirable properties (like convexity) for our fixed point formulation. Therefore, we now look into a relaxation of $X$: given a sufficiently small $\varepsilon > 0$, we define a carefully chosen parameter $\alpha(\varepsilon)>0$, and the set of approximate core-stable exchanges $X_{\varepsilon}$ as
\begin{center}
\begin{align*}
    \prod_{\overrightarrow{e} \in C} &(1-x_e) \leq  \alpha(\varepsilon)^n \quad \quad \forall C \in K_n\\
    x_{ij} &\in [0,1-1/M]
\end{align*}   
\end{center}
Interpret $M$ as a sufficiently large number such that $1/M \ll \alpha(\varepsilon)^n$. Our choice of $\alpha(\varepsilon)$ ensures that for any exchange $\x \in X_{\varepsilon}$, for every cycle $C$ of agents in $K_n$, there exists at least one edge $\overrightarrow{(i,j)} \in C$, such that $x_{ij} \geq 1 - \alpha(\varepsilon)$ --  intuitively, for every $\x \in X_{\varepsilon}$, $G(\x)$ is ``almost'' acyclic. In particular, define the directed graph  $G(\x,\alpha(\varepsilon)) = (\N,E)$ such that there exists an edge $\overrightarrow{(i,j)} \in E$ if and only if $x_{ij} < 1 -\alpha(\varepsilon)$; then $\x \in X_{\varepsilon}$ would imply that $G(\x,\alpha(\varepsilon))$ is acyclic. Our choice of $\alpha(\varepsilon)$ ensures that  $\x \in X_{\varepsilon}$ is $\varepsilon$-core stable. While set $X_{\varepsilon}$ is still non-convex, it still has desirable properties -- \emph{in particular, there exists a {homeomorphism} from $X_{\varepsilon}$ to a convex compact set!} We define the set $Z \subset \mathbb{R}^{n \times n}_{\geq 0}$ as follows, 
\begin{center}
\begin{align*}
    \sum_{\overrightarrow{e} \in C} &z_e \geq n\log(1/\alpha(\varepsilon)) \quad \quad \forall C \in K_n\\
    z_{ij} &\in [0,\log M]
\end{align*}   
\end{center}

Observe that there exists a homeomorphism $h \colon X_{\varepsilon} \rightarrow Z$, where $h(\x)=z$, and $(h(\x))_{ij} = z_{ij} = \log(1/(1-x_{ij}))$ for all $i,j$.  Therefore, we have a convex, compact set $Z$, every point of which can be mapped uniquely to an approximate core-stable allocation. This will be the domain for our fixed point formulation. Although, with this, we can only show the existence of an approximate core-stable and reciprocal exchange; we can subsequently leverage the compactness of the fairness condition and the set $Z$ to argue that an exact fair and core-stable exchange must exist.

In what follows, we introduce a function $f \colon Z  \rightarrow Z$, fixed points of which correspond to a fair exchange. It is important to note that $z_{ij}$ is monotone in $x_{ij}$. To facilitate an intuitive understanding, we urge the reader to interpret $z_{ij}$ as $x_{ij}$, i.e., whenever the function increases (decreases) $z_{ij}$, interpret it as the function increasing (decreasing) $x_{ij}$.

\paragraph{Defining $f \colon Z \rightarrow Z$ to capture fairness.}

Given an exchange $\x$, for every agent $i$, we define her surplus $\Delta_i(\x)$ as the difference between the contribution of $i$ to the utilities of the other agents and the utility gain of $i$ (intuitively, the utility outflow from $i$ minus the utility inflow), i.e., $\Delta_i(\x) = \sum_{j} \psi_{ij}(\x_j) - u_i(\x_i)$. Recall that our domain is $Z$, and there is a homeomorphism $h$ from $Z$ to $X_{\varepsilon}$. For notational convenience, we use $\Delta_i(z)$, $\psi_{ij}(z)$ and $u_i(z_i)$, instead of $\Delta_i(h(z))$, $\psi_{ij}(h(z))$ and $u_i(h(z)_i)$ respectively. We need to show that there exists a $z \in Z$ such that $\Delta_i(z) = 0$ for all $i$. By the \emph{efficiency} property of share functions, we have $\sum_{i} \Delta_i(z) = 0$ for all $z \in Z$. Thus, it suffices to show that there exists a $z \in Z$, with $\Delta_i(z) = \Delta_j(z)$ for all $i,j$.   

We can interpret $\Delta_i(z) > \Delta_j(z)$ as agent $i$ giving more utility into the exchange than agent $j$, and agent $j$ taking more utility from the exchange than agent $i$. Thus, it is intuitive that our function $f$ should decrease $z_{ij}$ and increase $z_{ji}$. The main caveat is that we still require $f(z) \in Z$. In Section~\ref{sec:existence}, we construct a \emph{continuous} function $f$ that decreases $z_{ij}$ and/ or increases $z_{ji}$ whenever $\Delta_i(z) > \Delta_j(z)$, \emph{as long as it is still feasible}. We remark that ensuring continuity of our ``\emph{increase/decrease when feasible scheme}'', crucially requires the convexity of our domain $Z$ -- implying that convexity of $Z$ plays a more significant role than satisfying the pre-conditions such that any continuous function defined on it admits a fixed point.

Now, it suffices to show that whenever $\sum_i \Delta_i(z) \neq 0$, then there exists a pair of agents $i,j$ such that $\Delta_i(z) > \Delta_j(z)$, and it is feasible to reduce $z_{ij}$ or increase $z_{ji}$. To this end, let $Z^{+}$ be the set of all agents with non-negative surplus ($\Delta_i(z) \geq 0$), and $Z^{-}$ be the set of agents with strictly negative surplus $\Delta_i(z) < 0$. Since the agents in $Z^{-}$ get more than they give and agents in $Z^{+}$ give more than they get, there must be an agent $i \in Z^{+}$ and $j \in Z^{-}$ such that $z_{ij} > 0$\footnote{Interpret this as $x_{ij} > 0$.} (and obviously $\Delta_i(z) > \Delta_j(z)$). If decreasing $z_{ij}$ is feasible here, then we are done! If decreasing $z_{ij}$ is not feasible, it means that there exists a path $P$  from $j$ to $i$ in $G(\x, \alpha(\varepsilon))$, where $\x = h(z)$. Since this path starts with an agent with negative surplus and moves to an agent with non-negative surplus, there must be agents $b$ and $c$ along $P$ such that $\Delta_b(z) < \Delta_c(z)$ and $z_{bc} <  \log(1/\alpha(\varepsilon)) \ll \log M$, implying that it is feasible to increase $z_{bc}$ without violating any constraint (See Figure~\ref{fig:fixed-point} for an illustration). 

\begin{figure}
    \begin{center}
        \begin{tikzpicture}[
roundnode/.style={circle, draw=green!60, fill=green!5, very thick, minimum size=7mm},
squarednode/.style={rectangle, draw=red!60, fill=red!5, very thick, minimum size=5mm},
]

\draw[blue, fill=green!5, very thick] (0,0) rectangle (6,3);
\node at (-2,1.5) {$\Delta_i(z) \geq 0$};

\draw[blue, fill=green!5, very thick] (0,-1) rectangle (6,-4);
\node at (-2,-2.5) {$\Delta_i(z) < 0$};

{
\node [squarednode] at (2,-3.5) (j) {$j$};
\node [squarednode] at (2,2.5)  (i) {$i$};}


{
\node at (1.8,1.5) {$\mathbf{z_{ij} \downarrow}$};
\draw[->, very thick]    (i) to[out=220,in=-220] (j);
\node at (-1.8,-0.5) {$\exists \overrightarrow{(i,j)}$ with sufficient flow};
}

{
\draw[->, very thick]    (i) to[out=220,in=-220] (j);
\node [squarednode] at (3.5,-3) (a) {$a$};
\node [squarednode] at (5,-2) (b) {$b$};
\node [squarednode] at (5, 1) (c) {$c$};
\node [squarednode] at (3.5, 2) (d) {$d$};
\draw[blue,->, very thick] (j)--(a);
\draw[blue,->, very thick] (a)--(b);
\draw[blue,->, very thick] (b)--(c);
\draw[blue,->, very thick] (c)--(d);
\draw[blue,->, very thick] (d)--(i);
\node at (6.75,-0.65) {$\textcolor{blue}{G(\x, \alpha(\varepsilon))}$};
\node at (9,2) {\textcolor{blue}{Sufficient Data reduced on each $\rightarrow$}};
}

{
 \draw[black, very thick] (4.5,1.5) rectangle (12,-2.5);
 \draw[blue,->, ultra thick] (b)--(c);
 \node at (10,-0.65) {{$\mathbf{z_{bc} \uparrow}$}};
}

\end{tikzpicture}
    \end{center}
    \caption{Illustration of our fixed point proof. Since $Z^{+}$ is the set of agents with non-negative surplus and $Z^{-}$ is the set of agents with negative surplus, there exists a $i \in Z^{+}$ and a $j \in Z^{-}$ such that $z_{ij} > 0$ (meaning $x_{ij} > 0$). If reducing $z_{ij}$ is not feasible, then there exists a path $j \rightarrow a \rightarrow b \rightarrow c \rightarrow d \rightarrow i$ in $G(\x, \alpha(\varepsilon))$. Clearly $\Delta_i(b) < \Delta_i(c)$ and $z_{bc} < \log M$. So increasing $z_{bc}$ is feasible, implying that if $Z^{-} \neq \emptyset$, then there exists agents $i$ and $j$ such that $\Delta_i(z) > \Delta_j(z)$ and either decreasing $z_{ij}$ is feasible or increasing $z_{ji}$ is feasible.}\label{fig:fixed-point}
\end{figure}
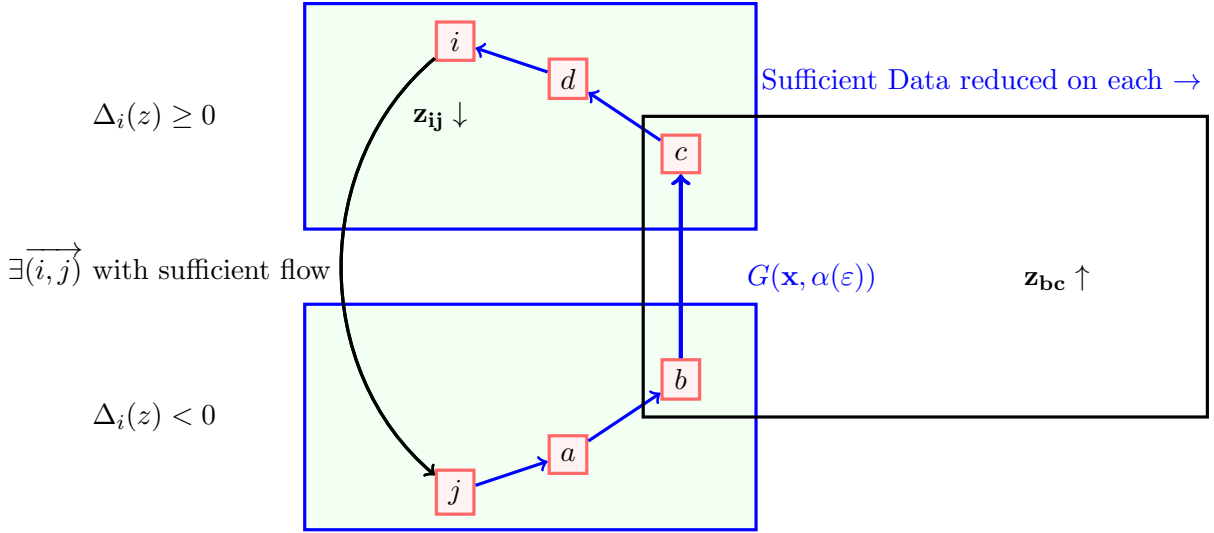

\paragraph{PPAD Membership.} For all our computational results, we assume oracle access to the utilities and share functions, i.e., given exchange $\x$, we assume oracle access to $u_i(\x_i)$ and $\psi_{ij}(\x_j)$ for all $i, j \in \N$. \cite{etessami2010fixedpoint} introduce sufficient conditions for a fixed point problem to be in PPAD. One of these conditions requires the domain to be a polytope with polynomially many linear constraints. Our current formulation $f \colon Z \rightarrow Z$ does not satisfy this condition. We bypass this problem by embedding our domain $Z$ in a bigger polytope $Z' \supseteq Z$ with polynomially many linear inequalities, and defining a $\tilde{g} \colon Z' \rightarrow Z'$, such that $\tilde{g} (z) = f(\pi(z))$, where $\pi(z)$ is the Euclidean-projection of $z$ onto $Z'$. This projection ensures that all fixed points of $\tilde{g}(\cdot)$ in $Z'$ correspond to fixed points of $f(\cdot)$ in $Z$. Further, it can be argued that this projection is unique, rational, and \emph{polynomial time computable}. Polynomial time computability crucially exploits the facts that (i) while our original domain $Z'$ may have exponentially many constraints, it still admits a poly-time separation oracle, and (ii) the projection from $z \notin Z$ onto $Z$ is rational and has polynomial bit length. This puts the problem of determining a $\varepsilon$-core-stable and $\varepsilon$-reciprocal exchange in PPAD.  We elaborate on the details in Section~\ref{sec:comp}.

\subsection{Technical Overview: Local Search Algorithm and Membership in CLS for Cross-Monotone Share Functions}
 Recall that modifying the data contribution of any single agent (increasing/ decreasing $x_{ij}$), may potentially change the utility contributions ($\psi_{i'j}(\x_j)$ for $i' \neq i$) of all agents. Designing an algorithm with no assumption on the structure of these changes seems very challenging -- In fact, we suspect that in its most generality, computing an $\varepsilon$-core-stable and $\varepsilon$-reciprocal exchange is PPAD-complete.

In this paper, we consider \emph{cross-montone data share} functions: instances where increasing $x_{ij}$ can only decrease $\psi_{i'j}(\x_j)$ for all $i' \neq i$. This captures instances exhibiting diminishing marginal gains with data-- in particular, when $u_i(\cdot)$ is continuous monotone submodular for all $i$, and $\psi_{ij}(\cdot)$ is the Shapley share of $i$'s data in $u_j(\x_j)$, then $\psi_{ij}(\cdot)$ satisfies cross-monotonicity. For such instances, we outline a \emph{local-search} algorithm to compute a $\varepsilon$-core-stable and $\varepsilon$-fair exchange.  We briefly outline our key ideas behind the local search procedure.

At a high level, our algorithm maintains the invariant that $G(\x, \varepsilon/(nL))$ is acyclic and reduces the surplus of the agents with high surpluses. Under the $L$-Lipschitzness assumption on the utility functions, it can be shown that every $\x$ such that $G(\x, \varepsilon/(nL))$ is acyclic is $\varepsilon$-core stable. Given an exchange $\x$, define $\Delta(\x) = \langle \Delta_{\sigma(1)}(\x), \Delta_{\sigma(2)}(\x), \dots,$ $ \Delta_{\sigma(n)}(\x) \rangle$, where $\sigma \colon \N \rightarrow \N$ sorts the agents according to the surplus levels, i.e., $\Delta_{\sigma(1)}(\x) \geq \Delta_{\sigma(2)}(\x) \geq  \dots \geq \Delta_{\sigma(n)}(\x)$. Our algorithm maintains $\x \in G(\x,\varepsilon/(nL))$, and decreases $\Delta(\x)$ lexicographically in every iteration. The algorithm terminates when $\max_{i \in \N} \Delta_i(\x) \leq \varepsilon/n$, as this will ensure that no positive surplus is larger than $\varepsilon/n$, and no negative surplus is smaller than $\varepsilon$: since $\sum_i \Delta_i(\x) = 0$, we have $\Delta_j(\x) \geq -n \max_i \Delta_i(\x)$, implying that $x$ is $\varepsilon$-fair.

Given an exchange $x$, with $\max_{i \in \N} \Delta_i(\x) > \varepsilon/n$,  we partition the agents into two sets based on their surpluses by the following procedure: Let $S = \{i_h\}$, where $i_h$ is the agent with the highest surplus. Iteratively, select the agent in $\N \setminus S$ with the highest surplus, say $i$, and add $i$ to $S$ only if the surplus of $i$ is at most $\varepsilon/n^2$ less than the lowest surplus in $S$, i.e., $\Delta_i(\x) +\varepsilon/n^2 \geq \min_{j \in S} \Delta_j(\x)$. Observe that at termination, $S$ will comprise only positive surplus agents as the lowest surplus in $S$ is at least $\max_{i \in \N} \Delta_i(\x) - |S| \cdot (\varepsilon/n^2) > 0$. Intuitively, $S$ is the set of agents with high surpluses (they give more than they get). Also note that we have $\min_{i \in S} \Delta_i(\x) > \max_{j \in \N \setminus S} \Delta_j(\x) + \varepsilon/n^2$. The algorithm's goal would be to reduce the surplus of agents in $S$ (with at least one agent experiencing a strict reduction), and ensure that no surplus in $\N \setminus S$ becomes more than the surplus of some agent in $S$, thereby lexicographically reducing $\Delta(\x)$. This is achieved (i) by carefully reducing the data flow from $S$ to $\N \setminus S$, or (ii) by increasing data flow from $\N \setminus S$ to $S$. We discuss the two cases below,
\vspace{-0.25cm}
\paragraph{Decreasing Data Flow from $S$ to $\N \setminus S$:} This case is nuanced, as one needs to be careful about maintaining the invariant of the algorithm. Decreasing data flow from an agent $i \in S$ to an agent $j \in \N \setminus S$, when there is a path from $j$ to $i$ in $G(\x, \varepsilon/(nL))$, can violate the invariant that $G(\x, \varepsilon/(nL))$ is acyclic. So we will only decrease flow when there are no edges in $G(\x,\varepsilon/(nL))$ from $\N \setminus S$ to $S$.

Since agents in $S$ have high surplus (their utility outflow is larger than the inflow) and agents in $\N \setminus S$ (their utility inflow is larger than the outflow) have lower surplus, we show that there is an agent $j \in \N \setminus S$, that receives substantial data flow from $S$-- in particular, we show that there is an agent $j \in \N \setminus S$ such that $\sum_{i \in S} \psi_{ij}(\x_j) > \varepsilon/n^2$. Our goal would be to reduce the surpluses of the agents in $S$ by reducing their data contributions to $j$.

Let $S(j) \subseteq S$ be the agents such that $x_{ij} = 0$ for all $i \notin S(j)$. Pick any agent $\ell \in S(j)$. Note that when we decrease $x_{\ell j}$, $\psi_{\ell j}(\x_j)$ decreases, and $\psi_{\ell'j}(\x_j)$ is non-decreasing for all $\ell' \neq \ell$ (due to cross-monotonicity). Also note that $u_j(\x_j)$ increases and $u_{j'}(\x_{j'})$ remains unchanged for all $j' \neq j$, implying that $\Delta_{\ell}(\x)$ decreases, and $\Delta_{\ell'}(\x)$  for all $\ell' \neq \ell$ (including $j$) is non-increasing. This could potentially increase the surplus of a high-surplus agent in $S$, inadvertently causing a lexicographic increase in $\Delta(\x)$ (see Figure~\ref{fig:algorithm-case1})! We fix the foregoing issue by a \emph{threshold based data reduction} from $S$ to $\N \setminus S$.

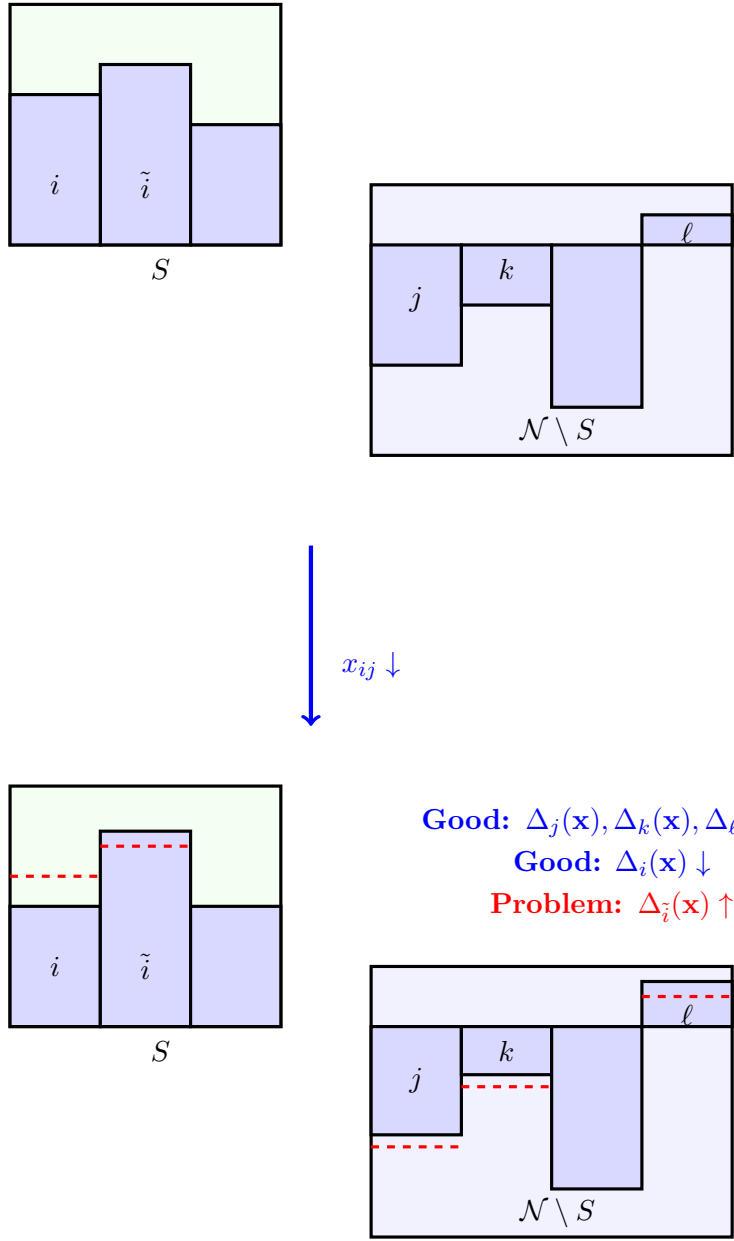
\begin{figure}
    \begin{center}
        \begin{tikzpicture}[scale=0.8,
roundnode/.style={circle, draw=green!60, fill=green!5, very thick, minimum size=7mm},
squarednode/.style={rectangle, draw=red!60, fill=red!5, very thick, minimum size=5mm},
]
\draw[black, fill=green!5, very thick] (0,0) rectangle (4.5,4);
\draw[black, fill=blue!5, very thick]  (6,-3.5) rectangle (12, 1);
\node at (2.5,-0.4) {$S$};
\node at (9.1,-3.1) {$\N \setminus S$};

{

\draw[black, fill=blue!15, very thick] (1.5,0) rectangle (3,3);
\draw[black, fill=blue!15, very thick] (3,0) rectangle (4.5,2);

\draw[black, fill=blue!15, very thick] (6,0) rectangle (7.5,-2);
\draw[black, fill=blue!15, very thick] (7.5,0) rectangle (9,-1);
\draw[black, fill=blue!15, very thick] (9,0) rectangle (10.5,-2.7);
\draw[black, fill=blue!15, very thick] (10.5,0) rectangle (12,0.5);

}

{
\draw[black, fill=blue!15, very thick] (0,0) rectangle (1.5,2.5);

}






\draw[->, blue, ultra thick] (5,-5)--(5,-8);
\node at (6,-7)  {{\textcolor{blue}{$x_{ij} \downarrow$}}};


\draw[black, fill=green!5, very thick] (0,0-13) rectangle (4.5,4-13);
\draw[black, fill=blue!5, very thick]  (6,-3.5-13) rectangle (12, 1-13);
\node at (2.5,-0.4-13) {$S$};
\node at (9.1,-3.1-13) {$\N \setminus S$};

\node at (0.75,1) {$i$};
\node at (6.75,-0.9) {${\textcolor{black}{j}}$};
\node at (8.25,-0.4) {${\textcolor{black}{k}}$};
\node at (11.25,0.2) {${\textcolor{black}{\ell}}$};
\node at (2.25,1) {${\textcolor{black}{\tilde{i}}}$};

\node at (0.75,1-13) {$i$};

\draw[black, fill=blue!15, very thick] (0,0-13) rectangle (1.5,2-13);
\draw[red, dashed, very thick] (0,2.5-13)--(1.5,2.5-13);
\node at (0.75,1-13) {$i$};

\draw[black, fill=blue!15, very thick] (1.5,0-13) rectangle (3,3.25-13);
\draw[black, fill=blue!15, very thick] (3,0-13) rectangle (4.5,2-13);
\draw[red, dashed, very thick] (1.5,3-13)--(3,3-13);

\draw[black, fill=blue!15, very thick] (6,0-13) rectangle (7.5,-1.8-13);
\draw[black, fill=blue!15, very thick] (7.5,0-13) rectangle (9,-0.8-13);
\draw[black, fill=blue!15, very thick] (9,0-13) rectangle (10.5,-2.7-13);
\draw[red, dashed, very thick] (6,-2-13)--(7.5,-2-13);
\draw[red, dashed, very thick] (7.5,-1-13)--(9,-1-13);

\draw[black, fill=blue!15, very thick] (10.5,0-13) rectangle (12,0.75-13);
\draw[red, dashed, very thick] (10.5,0.5-13)--(12,0.5-13);

\node at (10,2.7-13){\textcolor{blue}{\textbf{Good:} $\Delta_i(\x) \downarrow$}};

\node at (10,2-13) {\textbf{\textcolor{red}{Problem: $\Delta_{\tilde{i}}(\x) \uparrow$}}};
\node at (10,3.4-13) {\textbf{\textcolor{blue}{Good: $\Delta_j(\x), \Delta_k(\x), \Delta_{\ell}(\x) \uparrow$}}};

\node at (6.75,-0.9-13) {${\textcolor{black}{j}}$};
\node at (8.25,-0.4-13) {${\textcolor{black}{k}}$};
\node at (11.25,0.2-13) {${\textcolor{black}{\ell}}$};
\node at (2.25,1-13) {${\textcolor{black}{\tilde{i}}}$};

\end{tikzpicture}
    \end{center}
    \caption{Illustration of the main bottleneck when we decrease data flow from $S$ to $\N\setminus S$. The top shows the surplus profiles before we reduced data flow from $i \in S$ to $j \in \N \setminus S$. The bottom shows the surplus levels after the decrease. The red-dashed lines in the histogram shows the surplus level prior to the change. The surplus of all agents in $\N \setminus S$ cannot decrease (in fact surpluses of agents $j,k$ and $\ell$ strictly increase) and the surplus of agent $i$ decreases. Both the foregoing changes are ideal as it is balancing the surplus profiles. Unfortunately, the surplus of another agent $\tilde{i} \in S$ can also increase, as $\psi_{\tilde{i}j}$ can increase as we decrease data flow from $i$ to $j$.}
    \label{fig:algorithm-case1}
    \end{figure}

\vspace{0.25cm}

\noindent \emph{Threshold Based Data Reduction from $S$ to $\N \setminus S$:} For each agent $i$, we define a lower threshold limit $\delta_i = \Delta_i(\x) - \varepsilon/2n^3$ for her surplus. Throughout the data reduction procedure, we will never let the surplus of agent $i$ drop below $\delta_i$, i.e.,  we perform reductions in $x_{ij}$ until either $x_{ij}$ becomes $0$, or surplus of agent $i$ drops by at least $\varepsilon/4n^3$ (from the beginning). We now describe the data reduction: \emph{while there are agents $i \in S$ with $\Delta_i(\x) - \delta_i > \varepsilon/4n^3$ and $x_{ij} > 0$, then reduce $x_{ij}$ until either $x_{ij} = 0$ or $\Delta_i(\x) = \delta_i$.} Note that while an agent $i$'s surplus may be reduced to $\delta_i$ in an iteration, it can again increase in the next iteration when $x_{i'j}$ is reduced (for some $i' \neq i$) and as a result $\psi_{ij}(\x_j)$ increases. However, we show that every iteration will involve an $x_{ij} = 0$ or some $x_{ij}$ being reduced by $\varepsilon/(4n^3L)$ (follows from the $L$-Lipschitzness of the utility). Also, throughout the algorithm, the data contributions from $S$ to $\N \setminus S$ only decrease. Therefore, the number of iterations is $\textup{poly}(n,L,1/\varepsilon)$. Now, observe crucially that at the end of the iteration, 
\begin{enumerate}[label=(\roman*)]
    \item \emph{no surplus in $S$ will increase:}  for any agent $i \in S$, if $x_{ij} = 0$, then $\psi_{ij}(\x_j) = 0$, implying that its contribution (and consequently its surplus $\Delta_i(\x)$) has not increased. Therefore, if there is an agent $\tilde{i} \in S$, that has larger surplus than before, then $x_{\tilde{i}j} > 0$ and since $\Delta_{\tilde{i}}(\x) - \delta_{\tilde{i}} > \varepsilon/2n^3 > \varepsilon/4n^3$, the algorithm has not terminated.

    \item \emph{surplus of at least one agent in $S$ will reduce by $\varepsilon/4n^3$:} at the beginning of the procedure, there exists one agent $i \in S$ such that $\psi_{ij}(\x_j) > \varepsilon/n^3$ (as $\sum_{i \in S} \psi_{ij}(\x_j) > \varepsilon/n^2$). At the end of this procedure, we have either (a) $\Delta_i(\x) - \delta_i < \varepsilon/4n^3$, in which case $\Delta_i(\x)$ has reduced by at least $\varepsilon/4n^3$, or (b) $x_{ij}= 0$, in which case, $\Delta_i(\x)$ will reduce by $\varepsilon/n^3 > \varepsilon/4n^3$.
    
    \item \emph{no agent in $\N \setminus S$ will have higher surplus than any agent in $S$:} since the sum of surpluses is always zero, the total decrease in the surpluses of the agents in $S$, equals the total increase in the surplus of the agents in $\N \setminus S$. Since, no agent's surplus in $S$ decreases by more than $\varepsilon/2n^3$, the maximum increase in the surplus of any agent $j$ in $\N \setminus S$ is $n \cdot \varepsilon/2n^3 = \varepsilon/2n^2$. Since there was an initial gap of $\varepsilon/n^2$ between any surplus in $S$ and the surplus of $j$, the surplus of $j$ will still be lower than the surplus of any agent in $S$.
\end{enumerate}
 
Points (i), (ii), and (iii) show that threshold based data reduction will lexicographically reduce $\Delta(\x)$ while still maintaining $\varepsilon$-core stability. 

\paragraph{Increasing Data Flow from $\N \setminus S$ to $S$:} We now consider the case when there are edges from $\N \setminus S$ to $S$ in $G(\x,\varepsilon/(nL))$. This would make it infeasible to reduce flow from $S$ to $\N \setminus S$. We resort to increasing flow along the edges from $\N \setminus S$ to $S$.

Consider an agent $j \in \N \setminus S$ and $i \in S$ such that $\overrightarrow{(j,i)} \in G(\x, \varepsilon/(nL))$. Since $\overrightarrow{(j,i)} \in G(\x, \varepsilon/(nL))$, we have $x_{ij} < 1 - \varepsilon/(nL)$. Observe that increasing $x_{ij}$ will increase $\psi_{ji}(\x_i)$, and decrease $\psi_{j'i}(\x_i)$ for all $j' \neq j$. Additionally, $u_i(\x_i)$ will decrease, while the utility of other agents remains the same. Therefore, increasing $x_{ij}$ will increase $\Delta_j(\x)$ and will decrease or leave unchanged  $\Delta_{j'}(\x)$ for all $j' \neq j$. Our algorithm increases $x_{ji}$ by $\varepsilon/n^3L$. This way, we argue that the surplus of at least one agent in $S$ decreases by $\varepsilon^2/n^4L$, and the surpluses of the agents in $\N \setminus S$ remain at or below the lowest surplus in $S$. Also, given that the surplus of no other agent in $S$ strictly increases, $\Delta(\x)$ decreases lexicographically.

\paragraph{CLS Membership.} Our local search algorithm can be used to show that determining a $\varepsilon$-reciprocal and $\varepsilon$-core stable exchange is in PLS when $L/\varepsilon = \textup{poly}(n)$. Since the problem also lies in PPAD,~\cite{FearnleyGHS23} implies the membership in CLS $=$ PPAD $\cap$ PLS.

\subsection{Discussion}
In this paper, we introduce a formal model of data exchange. We outline our desiderata, drawing parallels to established theories in classical exchange economies. We show the existence of an exchange that satisfies the desiderata under minimal assumptions. Finally, we outline a local search algorithm for instances where agent utilities exhibit diminishing marginal gains with data. Under mild assumptions, we show that our problem is in CLS. We view this work as a first step in the theoretical exploration of data exchange economies, highlighting several promising avenues for further research. We highlight significant directions that emerge from our current results, as well as mention intriguing possibilities for generalizing our existing framework.

\vspace{-0.25cm}

\paragraph{Computational Complexity of Finding a $\varepsilon$-Core-Stable and $\varepsilon$-Fair Exchange under Monotone Utilities and Cross-Monotone Share Functions.} While we show membership in CLS, the complexity of the problem is still open, even when $L/\varepsilon = \textup{poly}(n)$. We outline the main problem that arises when we try to generalize our local search algorithm. This will also highlight the main technical non-triviality in the next major direction. 

Observe that our local search algorithm attempts to ``balance'' the surpluses. When, we decrease the flow from $S$ to $\N \setminus S$ (when there are no edges from $\N \setminus S$ to $S$ in $G(\x, \varepsilon)$), note that surpluses in $S$ decrease or remain unchanged and surpluses in $\N \setminus S$ increase or remain unchanged. This creates an optimal balancing scenario: high surpluses in $S$  decrease, while low surpluses in $ \N \setminus S$ increase. Under these conditions, we can leverage more sophisticated balancing potential functions, such as the Euclidean norm of the surpluses $\sum_i \Delta^2_i(\x)$, which is polynomially bounded, and improves by an additive factor of $\textup{poly}(\varepsilon, 1/n, 1/L)$ every time we decrease flow from $S$ to $\N \setminus S$.

One would hope to adopt a symmetric approach while increasing flow from $\N \setminus S$ to $S$ (when decreasing flow from $S$ to $\N \setminus S$ is not feasible). However, this method encounters a major bottleneck:  note that increasing a flow from $j \in \N \setminus S$ to $i \in S$ causes the surplus of $j$ to increase (which is good), and all other surpluses to decrease or remain unchanged. While it is ideal for surpluses in $S$ to decrease, it could be problematic if surpluses in $\N \setminus S$ (low surplus) decrease. A natural solution would be to boost the outflow from these agents in $\N \setminus S$ whose surpluses have dropped to $S$. \emph{Unfortunately, the outflows of these agents may already be at maximum capacity ($x_{ji} = 1$ for all $i \in S$), making this approach infeasible! This situation contrasts sharply with reducing flow from  $S$ to $\N \setminus S$, where if an agent $i \in S$ is at minimum capacity (with $x_{ij} = 0$ for all $j \in \N \setminus S$), its surplus cannot have increased.}

We suspect that without the assumption that $L/\varepsilon = \textup{poly}(n)$, the problem is probably not in CLS, but it is also another question that demands a thorough investigation. Thus, the computational complexity of determining a $\varepsilon$-core-stable and $\varepsilon$-fair exchange is a deeply intriguing direction with many interesting open problems.

\noindent \paragraph{Computational Complexity of Finding a $\varepsilon$-Core-Stable and $\varepsilon$-Fair Exchange under Supermodular utilities.}
All our computational results discussed so far, apply to instances where agents experience diminishing marginal gains with data. However, there is another interesting subclass of instances, where the datasets are \emph{complimentary}, i.e., the utility gained from a combination of datasets is more than the sum of utilities gained from the individual datasets-- for example when data is divided by features~\cite{liu2024vertical}, it has been observed that combining datasets turn out to be significantly more valuable~\cite{castro2023data}. In particular, these are instances where  increasing $x_{ij}$ can increase or leave unchanged $\psi_{i'j}(\x_j)$ for all $i' \neq i$. Exploring the computational complexity of these complementary instances presents a promising avenue for future research.

We highlight the challenges encountered when adapting our local search algorithm. Decreasing data flow from an agent $i \in S$ to an agent in $j \in \N \setminus S$ can only cause the surplus of all agents other than $j$ to decrease (non-increase). This behavior aligns well with our potential function, as the high surpluses are decreasing. However, while increasing surplus from $j \in \N \setminus S$ to $i \in S$, the surpluses of all agents other than $i$ increase, including the high-surplus agents in $S$.

\emph{Interestingly, the core issue in both scenarios appears to stem from the need to implement the “\emph{increase-flow}” operation carefully-- transferring data from low-surplus agents to high-surplus agents.}

\paragraph{Generalizing Current Framework.} Building on the framework outlined in the paper, a wide range of intriguing problems can be explored. Our current model relies on a single trusted central server that facilitates data exchange and allows for the retraining and refinement of machine learning models using data samples collected from multiple agents. However, the concept of \emph{decentralized data exchange} presents a fascinating avenue for investigation, where independent agents attempt to share data without a central authority. This shift opens up several compelling questions: (i) How do agents assess their utility and contribution functions for various data bundles? (ii) What forms of communication are permissible among the agents? (iii) Are there inherent dynamics that could guide us towards a desired solution? Each of these questions offers significant potential for foundational research in this emerging area.

\subsection{Related Work}
Our work borrows concepts, techniques, and formulation broadly from data economics, classical microeconomic theory on markets, and modern decentralized collaborative ML paradigms. Each of the preceding fields is a vibrant area of research in its own right, making it impossible for us to provide a comprehensive survey. Instead, we highlight only parts of the literature that closely relate to our study.

\paragraph{Data Markets.} Data markets are two-sided real-time platforms facilitating pricing and selling data to data-seekers. Theoretical research on data markets has recently gained traction, given the current importance of data economies. There is a long line of work~\cite{admati1986monopolistic, admati1990direct,  bergemann2018design, BabaioffKP12} that investigates revenue-maximizing strategies of a monopolist data seller. In fact, several studies investigate the pricing of data/ information from first principles in different settings~\cite{mehta2021sell, pei2020survey, cai2020sell, bergemann2022economics}. Competitive pricing and allocation rules have also been discussed in the context of digital goods that behave similarly to data~\cite{jain2010equilibrium}. 

~\cite{AgarwalDS19} design principles and algorithms for a centralized online data marketplace where there are multiple buyers seeking data for their private ML tasks, and multiple sellers hosting their datasets on the platform. This model talks about pricing, training, revenue collection, and fair revenue distribution among the sellers. There is also a line of work that discusses equilibrium and auctions in datamarkets in the presence of externalities~\cite{AgarwalDHR20, Hossain024}.

\paragraph{Classic Exchange Markets.} Exchange economies was introduced by Leon Walras~\cite{walras2013elements} in 1874. The canonical exchange happens through the notion of a \emph{competitive equilibrium (CE)}, which sets prices for each good such that its demand equals supply.  The existence of a CE in exchange economies was proven by Arrow and Debreu in 1954~\cite{arrow1954existence}. Since then, there has been a long line of work on the properties structure of CE (e.g., convex programming formulations~\cite{devanur2016rational}) and its computation complexity-- In particular, polynomial time algorithms are known for linear utilities~\cite{DuanM15, ChaudhuryM18, DuanGM16, GargV23}, and WGS utilities ~\cite{codenotti2005market, garg2019auction}, and beyond WGS is essentially PPAD-hard~\cite{chen2017complexity, chen2009settling, codenotti2006leontief, garg2017settling}. 

\paragraph{Federated Learning.} 
 Federated learning (FL) provides a privacy-preserving effective distributed learning paradigm where a group of agents holding local data samples can train a joint machine learning model~\cite{mcmahan2017communication}.  The paradigm has been widely successful in autonomous vehicles~\cite{elbir2020federated} and digital healthcare~\cite{dayan2021federated, xu2021federated}. \emph{Data exchange} can be seen as a \emph{private learning paradigm}, where each agent exchanges their own data for other valuable data to train their own private ML model, while the latter (FL) is a \emph{public learning paradigm} where all agents train the same ML model from data shared by the federating agents.  ~\cite{donahue2021optimality} work on coalition formation in FL, but they assume that all agents have the same utility function (learning objective), but have biases in their local distributions which create sub-optimal local models. The primary distinction to our setting is that we assume agents have heterogeneous learning objectives.

 Broadly speaking, FL has witnessed substantial integration of concepts from game theory. In particular, Stackelberg games~\cite{khan2020federated, pandey2019incentivize}, non co-operative games~\cite{zou2019dynamic, cheng2021dynamic}, auctions~\cite{roy2021distributed}, budget balanced reward mechanisms~\cite{MurhekarYCLM23} have been adopted for incentivizing participation in FL.

\paragraph{Independent Work.} Independent of our work,~\cite{BhaskaraGIKMS24} also presents an intriguing study on data exchange, addressing computational aspects of welfare maximization under reciprocity constraints and the existence of reciprocal and core-stable exchanges. 

Their latter result shares the same spirit as ours, though with key differences in model formulation. Their model can be viewed as a special case of ours because $(i)$ their definition of core-stability allows only \emph{reciprocal exchanges} within deviating collations -- meaning if $\x$ is core-stable, no coalition $S$ and no \emph{reciprocal exchange} $\y$ among agents in $S$ exists, such that every agent in $S$ prefers $\y$ over $\x$. In contrast, our model does not impose such a constraint on deviating exchanges $\y$. $(ii)$ Moreover, their exchange mechanism is randomized, where each agent's data allocation represents a distribution over datasets owned by other agents. Consequently, their utility and  share functions are interim -- reflecting expected gains and contributions from a randomized exchange -- and are more specialized than the general utilities and share functions we consider.\footnote{In fact, their functions are linear in terms of $x_{iS}$, where $x_{iS}$ is the probability that agent $i$ receives datasets from agents in $S$. We remark that one can model $x_{iS} = \prod_{j \in S}x_{ij} \cdot \prod_{j \notin S} (1-x_{ij})$, and have the desired utility function in our model as a function of $\x$.}

Furthermore, they establish the existence of reciprocal and $\varepsilon$-core stable exchanges, while we show the existence of exact reciprocal and exact core-stable exchanges in more general settings. Regarding computational results, their work is primarily centered on welfare maximization, whereas our focus is on identifying reciprocal and core-stable exchanges.

\section{Model and Preliminaries}
We begin by describing our model of data exchange. We assume there is a set $\N$ of $n$ agents, and each agent $i$ brings a dataset $D_i$ to a centralized data exchange platform. We allow the datasets to be fractionally exchanged between all possible pairs of agents, and hence represent an \textit{exchange} as a matrix $\x \in [0, 1]^{n\times n}$, with $x_{ij}$ denoting the fraction of dataset $D_i$ that agent $i$ gives to agent $j$. We let $\x_i = (x_{1i}, x_{2i}, \dots, x_{ni})$ denote the \textit{bundle} of datasets made available to agent $i$ in the exchange $\x$. An agent $i$ has an associated utility function $u_i : [0,1]^n \rightarrow [0,1]$ which captures the utility $u_i(\x_i)$ agent $i$ receives from the bundle $\x_i$ in the exchange $\x$; for simplicity we interchangeably use $u_i(\x)$ and $u_i(\x_i)$. We nominally assume that utility functions are \textit{normalized}, i.e., $u_i(0^{n\times n}) = 0$, and \textit{monotone}, i.e., $u_i(\cdot)$ is non-decreasing in $x_{ji}$ for all $i,j \in \N$.

\begin{definition}[Data Exchange Problem] \normalfont The input to the data exchange problem is a tuple $\I = \langle \N, \D, \Val \rangle$, where $\N=[n]$ is a set of agents, $\D = (D_1, \ldots, D_n)$ is a list of agents' datasets, with $D_i$ denoting the dataset of agent $i$, and $\Val = (\val_1, \ldots, \val_n)$ is a list of agent utility functions, where $\val_i : [0,1]^{n\times n} \rightarrow [0, 1]$. The output is an exchange $\alloc \in [0,1]^{n\times n}$ where $x_{ij}$ denotes the fraction of dataset $D_i$ that agent $i$ gives to agent $j$.
\end{definition}

We define the desiderata for our exchanges, beginning with our notion of fairness.

\subsection{Utility sharing functions and reciprocal fairness} 
Intuitively, we regard an exchange to be fair, or \textit{reciprocal}, if every agent receives as much utility from the exchange as their \textit{contribution} to other agents in the exchange. Formalizing this intuition necessitates defining the contribution of an agent towards the utility of another agent in the exchange, for which we define utility sharing functions.  

\paragraph{Share functions.}
Conceptually, the utility $u_j(\x)$ of an agent $j$ in an exchange $\x$ can be viewed as a sum of contributions of the other agents given by $\sum_{i\in \N} \psi_{ij}(\x_j)$. Here, $\psi_{ij} : [0, 1]^n \rightarrow [0, 1]$ is a \textit{sharing} rule that measures the contribution of agent $i$ towards the utility of agent $j$ given that $j$ receives the bundle $\x_j$; we interchangeably use $\psi_{ij}(\x)$ and $\psi_{ij}(\x_j)$. 

\begin{definition}[Utility sharing function]\label{def:share-function} \normalfont For an agent $i$, a collection of functions $\{\psi_{ij}\}_{j\in \N}$ where $\psi_{ij}:[0,1]^n \rightarrow [0,1]$ are considered share functions if for any exchange $\alloc$ and agent $i\in \N$, we have 
\begin{enumerate}
\item[(i)] (Monotonicity) $\psi_{ij}(\x_j)$ is non-decreasing in  $x_{ij}$ for every $j\in \N$,
\item[(ii)] (Normalization) $\psi_{ij}(\x_j) = 0$ if $x_{ij} = 0$ for every $j\in \N$,
\item[(iii)] (Efficiency) $u_j(\x_j) = \sum_{i\in \N}\psi_{ij}(\x_j)$.
\end{enumerate}
\end{definition}

We discuss some examples of sharing functions that have been studied in the literature.

\paragraph{Shapley share.} For a subset $S\subseteq \N$, let $\x_j[S] \in [0,1]^n$ denote the restriction of the bundle $\x_j$ to the set $S$, i.e., $(\x_j[S])_k = x_{kj}$ if $k\in S$ and $0$ otherwise. The Shapley share measures the contribution of an agent $i$ to agent $j$ in an exchange $\x$ as follows. For every possible subset $S\subseteq \N \setminus \{i\}$ of agents, we compute the marginal gain in the utility of $j$ obtained by adding $i$ to the subset $S$, i.e., $\mu(S) = u_j(\x_i[S\cup \{i\}]) - u_j(\x_j[S])$. The Shapley share $\psi_{ij}$ is then the average marginal gain $\mu(S)$ taken over all possible such subsets $S$. Formally,
\begin{equation}\label{eq:shapley-share}
\psi_{ij}(\alloc_j) = \frac{1}{n}\sum_{S\subseteq \N\setminus\{i\}} \binom{n-1}{|S|}^{-1} \cdot (u_j(\alloc_j[S\cup \{i\}]) - u_j(\alloc_j[S])).
\end{equation}
The fact that Shapley share functions as defined above are actually share functions as per \cref{def:share-function} can be observed in the following.
\begin{proposition}[Shapley shares] For every $j\in \N$ and exchange $\x$, we have $u_j(\x_j) = \sum_{i\in \N} \psi_{ij}(\x_j)$.
\end{proposition}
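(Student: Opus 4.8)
The statement to prove is that the Shapley share functions as defined in equation~\eqref{eq:shapley-share} satisfy the efficiency property $u_j(\x_j) = \sum_{i \in \N} \psi_{ij}(\x_j)$. This is the classical efficiency axiom of the Shapley value, specialized to the coalitional game where the ``players'' are the $n$ agents, and the value of a coalition $S \subseteq \N$ is $v_j(S) := u_j(\x_j[S])$, i.e., the utility agent $j$ derives from receiving only the data contributions of agents in $S$ (at their prescribed fractions $x_{kj}$). The plan is simply to invoke (or re-derive) the telescoping argument behind the Shapley efficiency axiom.

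\textbf{Proof plan.} First I would fix $j$ and the exchange $\x$, and set $v(S) = u_j(\x_j[S])$ for $S \subseteq \N$; note $v(\emptyset) = u_j(0^{n\times n}) = 0$ by normalization of $u_j$, and $v(\N) = u_j(\x_j)$. Rewrite the Shapley share in its permutation form: for a uniformly random ordering $\pi$ of the $n$ agents, $\psi_{ij}(\x_j) = \mathbb{E}_\pi\big[ v(P_i^\pi \cup \{i\}) - v(P_i^\pi) \big]$, where $P_i^\pi$ is the set of agents preceding $i$ in $\pi$; the equivalence with the binomial-coefficient formula in~\eqref{eq:shapley-share} is the standard counting identity (the number of orderings in which $i$ is preceded by exactly the set $S$ is $|S|!\,(n-1-|S|)!$, and dividing by $n!$ gives $\frac{1}{n}\binom{n-1}{|S|}^{-1}$). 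Then for any \emph{fixed} ordering $\pi = (\pi_1, \dots, \pi_n)$, the sum of marginal contributions telescopes:
\begin{equation*}
\sum_{i \in \N} \big( v(P_i^\pi \cup \{i\}) - v(P_i^\pi) \big) = \sum_{k=1}^{n} \big( v(\{\pi_1,\dots,\pi_k\}) - v(\{\pi_1,\dots,\pi_{k-1}\}) \big) = v(\N) - v(\emptyset) = u_j(\x_j).
\end{equation*}
Taking expectation over $\pi$ and swapping the (finite) sum with the expectation yields $\sum_{i \in \N} \psi_{ij}(\x_j) = u_j(\x_j)$, as desired.

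If one prefers to avoid the permutation reformulation and work directly with the binomial form, the alternative is a direct double-counting argument: expand $\sum_{i} \psi_{ij}(\x_j)$ using~\eqref{eq:shapley-share}, and for each subset $T \subseteq \N$ collect the coefficient of $v(T)$ in the resulting linear combination of the $v(T)$'s. A term $v(T)$ arises as $u_j(\x_j[S \cup \{i\}])$ when $S \cup \{i\} = T$ (so $i \in T$, $S = T \setminus \{i\}$, contributing $+\frac{1}{n}\binom{n-1}{|T|-1}^{-1}$ for each of the $|T|$ choices of $i$) and as $-u_j(\x_j[S])$ when $S = T$ (so $i \notin T$, contributing $-\frac{1}{n}\binom{n-1}{|T|}^{-1}$ for each of the $n - |T|$ choices of $i$). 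One then checks that these coefficients cancel for every $T$ with $\emptyset \neq T \neq \N$, leaving coefficient $1$ on $v(\N)$ and $0$ on $v(\emptyset)$; since $v(\emptyset)=0$ and $v(\N) = u_j(\x_j)$, efficiency follows.

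\textbf{Main obstacle.} There is no real obstacle here — the result is the textbook Shapley efficiency axiom — so the only ``work'' is bookkeeping. The cleanest route is the permutation-form telescoping argument, and the one place to be slightly careful is justifying the equivalence between the permutation form and the stated binomial-coefficient formula (the counting identity $|S|!\,(n-1-|S|)! / n! = \frac{1}{n}\binom{n-1}{|S|}^{-1}$) and making sure the normalization hypothesis $u_j(0^{n\times n}) = 0$ is explicitly used to kill the $v(\emptyset)$ term. I would present the permutation-form proof as the main one and perhaps remark that the direct coefficient-cancellation computation gives an alternative.
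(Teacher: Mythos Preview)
Your proposal is correct. Your alternative route---the direct coefficient-cancellation computation on the binomial form---is exactly the argument the paper gives: it fixes a proper nonempty subset $T$, counts that $u_j(\x_j[T])$ appears $|T|$ times with coefficient $\frac{1}{n}\binom{n-1}{|T|-1}^{-1}$ and $n-|T|$ times with coefficient $-\frac{1}{n}\binom{n-1}{|T|}^{-1}$, verifies these cancel, and then handles $T=\emptyset$ via normalization and $T=\N$ separately. Your preferred permutation-form telescoping argument is a genuinely different (and slightly cleaner) packaging that the paper does not use; it trades the coefficient arithmetic for the counting identity $|S|!(n{-}1{-}|S|)!/n! = \tfrac{1}{n}\binom{n-1}{|S|}^{-1}$ plus a one-line telescope, which some readers will find more transparent.
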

\begin{proof}
Consider $\sum_{i\in \N} \psi_{ij}(\x_j)$. Fix a subset $\emptyset \neq S \subsetneq \N$. Let us examine how many times $u_j(\alloc_j[S])$ occurs as either the first or the second term of the difference in \cref{eq:shapley-share}. We observe that $u_j(\alloc_j[S])$ occurs as the first term exactly $|S|$ times, once for each $i\in S$, and as the second term exactly $(n-|S|)$ times, once for each $i\notin S$. Thus, the coefficient of $u_j(\x_j[S])$ in $\sum_{i\in\N} \psi_{ij}(\x)$ is $\frac{1}{n}\cdot \left(|S|\cdot  \binom{n-1}{|S|-1}^{-1} - (n-|S|)\cdot \binom{n-1}{|S|}^{-1} \right)$, which equals zero. Next, we note that the coefficient of $u_j(\x_j[\emptyset])$ is also zero, since $u_j(\cdot)$ is normalized. Finally, note that $u_j(\alloc_j[\N])$ appears only as the first term in the difference, and exactly $n$ times. Hence, its coefficient is $\frac{1}{n}\cdot (n-1)\cdot \binom{n-1}{n-1}^{-1} = 1$. In conclusion, $\sum_{i\in \N} \psi_{ij}(\x_j) = u_j(\x_j[\N]) = u_j(\x_j)$.
\end{proof}

{Another commonly used share function is the \emph{proportional share}, which attributes agent $i$'s contribution to agent $j$ in an exchange $\x$ as a fixed fraction of $j$'s utility, $u_j(\x_j)$, proportional to a weight $w_{ij}$~\cite{BhaskaraGIKMS24}.}

\paragraph{Reciprocity.} Having defined utility sharing functions, we formally define \textit{reciprocal} exchanges. Given an exchange $\x$, we let $\Psi_i(\x)=\sum_{j \in \N}\psi_{ij}(\x_j)$ denote the total contribution of agent $i$ towards other agents in $\x$. Let $\Delta_i(\x) := \Psi_i(\x) - u_i(\x)$ denote the \textit{surplus} of agent $i$ in $\x$. By the definition of a sharing rules, we have that $\sum_{i\in \N} \Delta_i(\x) = 0$ for any exchange $\x\in[0,1]^{n\times n}$.
\begin{definition}[Reciprocal exchange]\label{def:reciprocity} An exchange $\alloc$ is said to be $\delta$-reciprocal if for all agents $i\in \N$, $|\Delta_i(\alloc)| \le \delta$. A $0$-reciprocal exchange is called a reciprocal exchange.
\end{definition}

Note that the exchange $\alloc = 0^{n\times n}$ is trivially reciprocal. 

\subsection{Core-stability}
Next, we define another desideratum of exchange called \textit{core-stability}, which captures notions of \textit{stability} and \textit{efficiency}. Intuitively, an exchange is core-stable if no subset of agents can form a coalition and exchange data only among themselves such that every agent in the coalition gets better utility. 
\begin{definition}\label{def:core-stable}
An exchange $\alloc$ is $\eps$-\cs if and only if for all $\N' \subseteq \N$, there is no exchange $\x'$ of the instance $\langle \N', (D_i)_{i\in \N'}, (\val_i)_{i\in \N'} \rangle$ such that $\val_i(\alloc'_i) > \val_i(\alloc_i) + \eps$ for all $i \in \N'$. A $0$-\cs exchange is called a \cs exchange.
\end{definition}

We now define a structure which enables us to capture a subset of core-stable allocations.
\begin{definition}\label{def:exchange-graph}
Given an exchange $\alloc$ for an instance $(\N, \D, \mathcal{U})$ and a parameter $\alpha \in (0, 1)$, we define the \textit{exchange graph} $G(\alloc, \alpha)$ as a directed graph with the vertex set $V(G(\x, \alpha)) = \N$ and the edge set $E(G(\x, \alpha)) = \{(i, j) : x_{ij} < 1- \alpha\}$. 
\end{definition}

Intuitively, our goal is ensure that if an exchange $\x$ is not $\eps$-core-stable, then for some appropriately chosen parameter $\alpha(\eps)$ the exchange graph $G(\x, \alpha(\eps))$ contains a cycle $C$ which represents a coalition of agents who would rather share data among themselves.

It is towards this end that we only record an edge from $i$ to $j$ in the exchange graph if $i$ is not sharing data fully to $j$, and a sufficient increment in the `flow' from $i$ to $j$ is possible if $(i,j)$ appears in $C$. Next, we choose the parameter $\alpha(\eps)$ carefully so that by increasing the `flow' along the edges of such a cycle $C$, we obtain an exchange $\y$ which each agent in $C$ prefers over $\x$ by an additive value of $\eps$. Thus, an exchange $\x$ whose exchange graph $G(\x,\alpha(\eps))$ has a cycle $C$ is not $\eps$-core stable, since agents in $C$ can significantly improve their utility (i.e. at least by $\eps)$) by sharing data among themselves. Conversely, if the exchange graph $G(\x, \alpha(\eps))$ is acyclic for an exchange $\x$, then $\x$ must be $\eps$-core-stable. We now formalize the intuition.

Let $I_{ji} = \{\x \in [0, 1]^{n\times n} : x_{ji} = 1\}$ be the set of exchanges where $j$ shares her dataset entirely with $i$. Decreasing $x_{ji}$ reduces the utility of $i$. Given an exchange $\x \in I_{ji}$, we are interested in the maximum additive decrease in $x_{ji}$ which does not cause a significant drop in the utility of agent $i$. We define this quantity below. Let $\e_{ji} \in [0,1]^{n\times n}$ be the exchange where no data is shared among agents, except from $j$ to $i$. 
\begin{definition}\label{def:alpha-eps}
For $i, j \in \N$, we define $a_{ji}^\eps :I_{ji} \rightarrow [0,1]$ as the function given by 
\[
a_{ji}^\eps(\x) = \max\{ a\in [0,1] : u_i(\x - a\cdot \e_{ji}) \ge u_i(\x) - \eps/n\},
\]
that is, $a_{ji}^\eps(\x)$ is the maximum additive decrease $a \in [0,1]$ in $x_{ji}$ which decreases the utility of $i$ by additively by $\eps/n$. Define $\alpha(\eps) = \min_{j,i\in \N} \inf_{\x\in{I_{ji}}} a_{ji}^\eps(\x)$.
\end{definition}

By definition of $\alpha(\eps)$, we observe that:
\begin{observation}\label{obs:alpha}
For all $i,j \in [n]$ and $\alloc \in [0,1]^{n\times n}$ with $x_{ji} = 1$, we have 
\begin{align*}
\val_i(\x) \leq \val_i(\x - \alpha(\eps)\cdot \e_{ji}) + \eps/n.
\end{align*}
\end{observation}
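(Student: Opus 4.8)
The plan is to unwind the definitions of $\alpha(\eps)$ and of the functions $a_{ji}^\eps$; the statement is essentially immediate once one is careful about two small points. Fix $i, j \in [n]$ and an exchange $\x$ with $x_{ji} = 1$, so that $\x \in I_{ji}$ and $a_{ji}^\eps(\x)$ is well-defined.

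First I would verify that the maximum in the definition of $a_{ji}^\eps(\x)$ is actually attained, so that the defining inequality $u_i(\x - a_{ji}^\eps(\x)\cdot\e_{ji}) \ge u_i(\x) - \eps/n$ genuinely holds at $a = a_{ji}^\eps(\x)$. This is because the map $a \mapsto u_i(\x - a\cdot\e_{ji})$ is continuous (utilities are continuous), so the feasible set $\{a \in [0,1] : u_i(\x - a\cdot\e_{ji}) \ge u_i(\x) - \eps/n\}$ is closed, and it is nonempty since $a = 0$ belongs to it; hence the supremum over this nonempty compact set is achieved. (Monotonicity further tells us this set is an interval $[0, a_{ji}^\eps(\x)]$, though we do not strictly need this.)

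Then I would run the key chain of inequalities. By the definition $\alpha(\eps) = \min_{j,i\in\N}\inf_{\x'\in I_{ji}} a_{ji}^\eps(\x')$, we get $\alpha(\eps) \le a_{ji}^\eps(\x)$ for our particular triple $(i,j,\x)$. Since $x_{ji} = 1$ and $\alpha(\eps) \in [0, a_{ji}^\eps(\x)] \subseteq [0,1]$, the matrix $\x - \alpha(\eps)\cdot\e_{ji}$ is again a valid exchange in $[0,1]^{n\times n}$, and it coordinate-wise dominates $\x - a_{ji}^\eps(\x)\cdot\e_{ji}$ (it lowers the $(j,i)$-entry by less, and agrees everywhere else). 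By monotonicity of $u_i$ it follows that $u_i(\x - \alpha(\eps)\cdot\e_{ji}) \ge u_i(\x - a_{ji}^\eps(\x)\cdot\e_{ji}) \ge u_i(\x) - \eps/n$, and rearranging gives $\val_i(\x) \le \val_i(\x - \alpha(\eps)\cdot\e_{ji}) + \eps/n$.

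There is no real obstacle here; the only mild subtleties are the attainment of the max (handled by continuity plus compactness of $[0,1]$) and ensuring the inequality $\alpha(\eps) \le a_{ji}^\eps(\x)$ points the right way so that monotonicity of $u_i$ applies in the correct direction — both follow directly from the definitions of $a_{ji}^\eps$ and $\alpha(\eps)$.
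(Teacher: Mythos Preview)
Your proof is correct and takes essentially the same approach as the paper, which simply states the observation follows ``by definition of $\alpha(\eps)$'' without further elaboration. Your version is more careful in making explicit the attainment of the maximum and the use of monotonicity to pass from $a_{ji}^\eps(\x)$ down to $\alpha(\eps)$, both of which are implicit in the paper's one-line justification.
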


\noindent We now prove that acyclic exchange graphs imply approximately-\cs exchanges.
\begin{lemma}\label{lem:acyclic-cs}
If $G(\x, \alpha(\eps))$ is acyclic for an exchange $\x$, then $\x$ is $\eps$-core-stable.
\end{lemma}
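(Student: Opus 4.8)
The plan is to argue by contradiction using the topological order of the acyclic graph $G(\x,\alpha(\eps))$, exactly along the lines sketched in Figure~\ref{fig:dependency}. Suppose $\x$ is not $\eps$-core-stable, so there is a coalition $\N' \subseteq \N$ and an exchange $\x'$ of the sub-instance $\langle \N', (D_i)_{i\in\N'}, (\val_i)_{i\in\N'}\rangle$ with $\val_i(\x'_i) > \val_i(\x_i) + \eps$ for every $i \in \N'$. The key observation is that in the sub-instance the only data available to agent $i \in \N'$ comes from agents in $\N'$, so for each $i\in\N'$ the bundle $\x'_i$ has support contained in $\N'$. Since $G(\x,\alpha(\eps))$ is acyclic, fix a topological order $a_1, a_2, \dots, a_n$ of $\N$, so that every edge $(a_k,a_\ell)$ of $G(\x,\alpha(\eps))$ has $k < \ell$; equivalently, whenever $k \ge \ell$ we have $x_{a_k a_\ell} \ge 1 - \alpha(\eps)$, and in particular whenever $k > \ell$.

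First I would handle the "source'' agents and then induct down the topological order. Concretely, I claim by induction on $\ell$ that $a_\ell \notin \N'$. For the inductive step (which subsumes the base case $\ell=1$, where the hypothesis is vacuous), assume $a_1,\dots,a_{\ell-1}\notin\N'$ and suppose for contradiction $a_\ell \in \N'$. Then every agent $a_k$ with $k < \ell$ that lies in $\N'$ — there are none by the inductive hypothesis — but more to the point, every agent $j \in \N'$ satisfies: $j = a_k$ for some $k > \ell$, hence $x_{j a_\ell} \ge 1 - \alpha(\eps)$ since $(a_k, a_\ell)$ cannot be an edge of $G(\x,\alpha(\eps))$. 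So in the current exchange $\x$, agent $a_\ell$ already receives a bundle $\x_{a_\ell}$ whose coordinate $x_{j a_\ell} \ge 1-\alpha(\eps)$ for every $j\in\N'$. Now compare $\val_{a_\ell}(\x'_{a_\ell})$, where $\x'_{a_\ell}$ is supported on $\N'$ with all coordinates at most $1$, against $\val_{a_\ell}(\x_{a_\ell})$: by monotonicity of $u_{a_\ell}$, $\val_{a_\ell}(\x'_{a_\ell}) \le \val_{a_\ell}(\mathbf{1}_{\N'})$ where $\mathbf{1}_{\N'}$ is the bundle with $1$ on $\N'$ and $0$ elsewhere. Repeatedly applying Observation~\ref{obs:alpha} — once for each $j\in\N'$, moving each coordinate from $1$ down to the actual value $x_{j a_\ell}\ge 1-\alpha(\eps)$ (this only decreases the coordinate, so monotonicity lets me bound the drop by the $\eps/n$ guaranteed for a decrease of $\alpha(\eps)$ from a coordinate equal to $1$) — gives $\val_{a_\ell}(\mathbf{1}_{\N'}) \le \val_{a_\ell}(\x_{a_\ell}) + |\N'|\cdot \eps/n \le \val_{a_\ell}(\x_{a_\ell}) + \eps$. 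Chaining these inequalities yields $\val_{a_\ell}(\x'_{a_\ell}) \le \val_{a_\ell}(\x_{a_\ell}) + \eps$, contradicting $\val_{a_\ell}(\x'_{a_\ell}) > \val_{a_\ell}(\x_{a_\ell}) + \eps$. Hence $a_\ell \notin \N'$, completing the induction, and therefore $\N' = \emptyset$, a contradiction.

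The main obstacle is getting Observation~\ref{obs:alpha} applied correctly coordinate-by-coordinate: Observation~\ref{obs:alpha} as stated bounds the utility loss when a single coordinate currently equal to $1$ is decreased by exactly $\alpha(\eps)$, whereas I need to decrease several coordinates, each from $1$ to some value that is $\ge 1-\alpha(\eps)$ but not exactly $1-\alpha(\eps)$. I would resolve this by peeling off one coordinate at a time: after fixing the coordinates already lowered, for the next coordinate $j$ I start from a bundle with $x_{j\cdot}=1$, apply Observation~\ref{obs:alpha} to bound the loss from lowering it by $\alpha(\eps)$, and then use monotonicity of $u_{a_\ell}$ to note that lowering it further (down to the true $x_{j a_\ell}$) or less does not increase the utility beyond that bound — I only ever need an upper bound on $\val_{a_\ell}(\x'_{a_\ell})$, so monotonicity is enough to sandwich things. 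A minor subtlety to state cleanly is that the sub-instance exchange $\x'$ is a genuine point of $[0,1]^{\N'\times\N'}$, so $\x'_{a_\ell}$ really is dominated coordinatewise by $\mathbf{1}_{\N'}$; this is immediate from the definition of the sub-instance. Everything else is bookkeeping on the topological order.
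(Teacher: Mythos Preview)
Your proof is correct and follows essentially the same approach as the paper's: the paper simply picks a source $s$ of the subgraph of $G(\x,\alpha(\eps))$ induced on $\N'$ (equivalently, the first agent of $\N'$ in a topological order), observes $x_{js}\ge 1-\alpha(\eps)$ for all $j\in\N'$, and then applies Observation~\ref{obs:alpha} coordinate-by-coordinate together with monotonicity exactly as you do. Your induction over the full topological order is just a slightly more elaborate packaging of the same one-step argument; the only cosmetic slip is that ``$k>\ell$'' should be ``$k\ge\ell$'' (to cover $j=a_\ell$ itself), which is harmless since acyclicity rules out self-loops.
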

\begin{proof}
Suppose the claim does not hold and there is a subset $\N'$ of agents who can exchange data among themselves and improve their utility additively by $\eps$. Let $\x'\in[0,1]^{n\times n}$ be the exchange where agents in $\N'$ fully share data with each other, i.e., $x'_{ij} = 1$ for all $i,j\in \N'$ and $x'_{ij} = 0$ otherwise. Then, for every $i\in \N'$, we have $u_i(\x') > u_i(\x) + \eps$.

Since $G(\x, \alpha(\eps))$ is acyclic, there must be an agent $s\in \N'$ who is a source in the subgraph of $G(\x, \alpha(\eps))$ induced by $\N'$. Hence, $x_{is} \ge 1-\alpha(\eps)$ for all $i\in \N'\setminus \{s\}$. For simplicity, let us re-index the agents so that $\N' = \{1,2,\dots, |\N'|\}$. Then by repeatedly using \cref{obs:alpha}, we have:
\begin{align*}
\val_s(\x'_s) &\leq \val_s(\x'_s - \alpha(\eps)\cdot \e_{1s}) + \eps/n \tag{Observation \ref{obs:alpha}}\\
&\leq \val_s(\x'_s - \alpha(\eps)\cdot (\e_{1s} + \e_{2s})) + 2\cdot(\eps/n) \tag{Observation \ref{obs:alpha}}\\
&\ldots \\
&\leq \val_s(\x'_s - \alpha(\eps)\cdot \sum_{i\in \N'} \e_{is}) + |\N'|\cdot (\eps/n) \tag{Observation \ref{obs:alpha}}\\
&\leq \val_s(\x_s) + |\N'|\cdot (\eps/n) \tag{since $x_{is} \ge 1-\alpha(\eps)$ for all $i\in \N'\setminus\{s\}$}\\
&\leq \val_s(\x_s) + \eps, \tag{since $|\N'|\le n$}
\end{align*}
which contradicts the fact that $u_s(\x'_s) > u_s(\x_s) + \eps$. Thus, $\x$ must be $\eps$-core-stable.
\end{proof}
\section{Existence of \Fair~and Core-Stable Exchanges}\label{sec:existence}
In this section, we prove the existence of exchanges that are reciprocal and core-stable when agent utility functions and share functions are continuous and monotone. First, we show the existence of a reciprocal and $\eps$-\cs exchange for $\eps \in (0, 1)$.

\begin{theorem}\label{thm:fair-cs-existence}
For any $\eps \in (0,1)$, any data exchange instance $\I$ where utility functions and sharing functions are continuous and monotone admits a reciprocal and $\varepsilon$-\cs exchange.
\end{theorem}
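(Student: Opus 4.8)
The plan is to prove Theorem~\ref{thm:fair-cs-existence} via a Brouwer fixed-point argument, following the roadmap sketched in Section~\ref{sec:overview-exist}. First I would fix a sufficiently large parameter $M$ (with $1/M \ll \alpha(\eps)^n$) and work over the set
\[
Z = \Bigl\{ z \in [0,\log M]^{n\times n} : \sum_{\overrightarrow{e}\in C} z_e \ge n\log(1/\alpha(\eps))\ \text{ for every cycle } C \text{ in } K_n \Bigr\},
\]
together with the homeomorphism $h\colon Z \to X_\eps$ given by $(h(z))_{ij} = 1 - \exp(-z_{ij})$, whose inverse sends $\x$ to $z_{ij} = \log(1/(1-x_{ij}))$. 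I would first verify that $Z$ is convex and compact (it is an intersection of finitely many closed halfspaces with a box), and that $h$ is a well-defined homeomorphism onto $X_\eps$ which preserves the relevant monotonicity ($z_{ij}$ is increasing in $x_{ij}$). Then, using the choice of $\alpha(\eps)$ from Definition~\ref{def:alpha-eps} and Observation~\ref{obs:alpha}, I would show that every $\x \in X_\eps$ has $G(\x,\alpha(\eps))$ acyclic, so that Lemma~\ref{lem:acyclic-cs} gives $\eps$-core-stability for free for every point in the domain.

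The heart of the argument is to construct a continuous map $f\colon Z \to Z$ whose fixed points correspond to reciprocal exchanges. Writing $\Delta_i(z) := \Delta_i(h(z)) = \sum_j \psi_{ij}(h(z)_j) - u_i(h(z)_i)$, efficiency of the share functions gives $\sum_i \Delta_i(z) = 0$, so it suffices to drive all the $\Delta_i$ to a common value. I would define $f$ so that, coordinatewise, $f(z)_{ij}$ decreases $z_{ij}$ (and/or $f(z)_{ji}$ increases $z_{ji}$) by an amount proportional to $\max\{0,\Delta_i(z)-\Delta_j(z)\}$ — an ``increase/decrease when feasible'' rule — truncated so that the result stays in $Z$; concretely one can take something like moving toward the boundary of the feasible interval along each edge and then averaging, or a potential-type update, chosen so continuity is manifest. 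Continuity is where convexity of $Z$ is essential: clipping an affine adjustment to a convex set is continuous, whereas clipping to $X_\eps$ directly would not be. I would then check $f(Z)\subseteq Z$ by construction, invoke Brouwer to get a fixed point $z^\star$, and argue that at $z^\star$ no coordinate can be adjusted, which forces $\Delta_i(z^\star) = \Delta_j(z^\star)$ for all $i,j$.

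The key combinatorial lemma — and the step I expect to be the main obstacle — is showing that whenever $z$ is not already reciprocal (equivalently, $Z^- := \{i : \Delta_i(z) < 0\} \ne \emptyset$), there exists a pair $i,j$ with $\Delta_i(z) > \Delta_j(z)$ such that either decreasing $z_{ij}$ or increasing $z_{ji}$ is feasible; this guarantees the fixed point is genuinely reciprocal rather than an artifact of the truncation. The argument is the path-chasing one illustrated in Figure~\ref{fig:fixed-point}: pick $i\in Z^+$, $j\in Z^-$ with $x_{ij} = h(z)_{ij} > 0$ (such a pair exists since inflow outweighs outflow on $Z^-$); if $z_{ij} > n\log(1/\alpha(\eps))$ we may decrease it, otherwise decreasing $z_{ij}$ would violate some cycle constraint, which means there is a path from $j$ to $i$ in $G(h(z),\alpha(\eps))$; traversing this path from a negative-surplus endpoint to a non-negative one, there must be a consecutive pair $b,c$ with $\Delta_b(z) < \Delta_c(z)$ and $z_{bc}$ bounded below $\log M$, so increasing $z_{bc}$ is feasible. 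Care is needed to make the ``feasible adjustment'' quantitative and uniform enough that $f$ is well-defined and continuous everywhere, not just at a single point.

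Finally, I would promote the $\eps$-statement to the theorem as stated: apply the above for a sequence $\eps_k \to 0$ to get reciprocal, $\eps_k$-core-stable exchanges $\x^{(k)}$, pass to a convergent subsequence by compactness of $[0,1]^{n\times n}$, and use continuity of $u_i$ and $\psi_{ij}$ together with continuity of the core-stability defect to conclude the limit is reciprocal and ($0$-)core-stable. (This last promotion may be deferred to after Theorem~\ref{thm:fair-cs-existence} if the theorem here is only the $\eps$-version, in which case the proof ends at the fixed-point argument.)
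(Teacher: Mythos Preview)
Your proposal is correct and follows essentially the same route as the paper: the convex polytope $Z$, the homeomorphism to $X_\eps$, a Brouwer map built by scaling per-edge adjustments by $\Delta_i-\Delta_j$ and clipping to the boundary (the paper does this via distance-to-boundary functions $\beta^\pm_{ij}$ and then averaging the $n^2$ single-coordinate updates), together with the path-chasing argument to show a fixed point cannot be stuck at the boundary unless all surpluses agree. Your last paragraph is the content of the paper's Theorem~\ref{thm:fair-cs-existence-exact}, not Theorem~\ref{thm:fair-cs-existence}, so the proof here ends at the fixed point.
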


At a high-level, the proof of \cref{thm:fair-cs-existence} has two main components: first, we define a compact, convex set $Z$ which corresponds to a set of $\varepsilon$-\cs exchanges, and second, we define a continuous function $g: Z \rightarrow Z$, such that the fixed point of $g$ is a reciprocal exchange. Then, Brouwer's fixed point theorem guarantees that $g$ admits a fixed point $z^*$, which we show corresponds to an exchange that is reciprocal and $\eps$-\cs.

\subsection{Defining the set $Z$ to capture core-stable exchanges}
Given an $\eps>0$, let $\alpha := \alpha(\varepsilon)$. Let $M > cn \log(1/\alpha)$ for some large constant $c > 1$. We define $Z$ to be a subset of $[0, M]^{n\times n}$, and associate each point in $[0, M]^{n\times n}$ with an exchange in $[0, 1]^{n\times n}$. Concretely, we define a function $f: [0, M]^{n\times n} \rightarrow [0, 1]^{n\times n}$ as $f(z)_{ij} = 1 - \exp(-z_{ij})$, which maps the point $z$ to the exchange $f(z)$. Equivalently, $z_{ij} = \log(\frac{1}{1-f(z)_{ij}})$ for all $i,j\in [n]$.

Recall from \cref{lem:acyclic-cs} that an if the exchange graph of an exchange is acyclic, then the exchange is $\eps$-\cs. We define the set $Z$ to algebraically capture the set of acyclic exchange graphs, as follows:

\begin{equation}\label{eq:set-z}
\begin{aligned}
Z = \{z \in [0, M]^{n\times n} : \text{for all } & \text{cyclic permutations } (i_1, \dots, i_k) \text{ of } [n], \\ &\sum_{j=1}^k z_{i_j i_{j+1}} \ge n \log (1/\alpha) \},
\end{aligned}
\end{equation}
where the definition used the convention that $i_{k+1} = i_1$. 

First note that by choice of $M > cn\log(1/\alpha)$, an element $z \in [0,M]^{n\times n}$, where $z_{ij} = M$ for all $i, j\in [n]$ satisfies $\sum_{j=1}^k z_{i_j i_{j+1}} \ge n\log(1/\alpha)$ for all cyclic permutations $(i_1, \dots, i_k)$ of $[n]$, and hence is in $Z$. Thus, $Z$ is non-empty. Since $Z$ is defined by a set of linear constraints, and is contained in $[0, M]^{n\times n}$, $Z$ is a polytope. Thus, $Z$ is convex and compact and meets the domain requirements for Brouwer's fixed point theorem.

\begin{restatable}{observation}{lemZconvex}\label{obs:convexcompact}
$Z$ is a convex and compact set.
\end{restatable}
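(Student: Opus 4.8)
The statement to prove is Observation~\ref{obs:convexcompact}: that $Z$, as defined in \eqref{eq:set-z}, is a convex and compact set. This is a routine structural fact, so the plan is short.

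\medskip

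\textbf{Plan.} The set $Z$ is defined as the intersection of $[0,M]^{n\times n}$ with a collection of half-spaces, one for each cyclic permutation $(i_1,\dots,i_k)$ of a subset of $[n]$ (the constraint being $\sum_{j=1}^k z_{i_j i_{j+1}} \ge n\log(1/\alpha)$). I would first observe that each such constraint $\sum_{j=1}^k z_{i_j i_{j+1}} \ge n\log(1/\alpha)$ is a linear inequality in the entries of $z$, hence defines a closed half-space, which is convex and closed. The box $[0,M]^{n\times n}$ is itself a product of closed intervals, hence closed, bounded, and convex. Since there are only finitely many cyclic permutations of subsets of $[n]$ (at most, say, $\sum_{k=2}^n \binom{n}{k}(k-1)! $ of them, which is finite), $Z$ is a finite intersection of convex closed sets, hence convex and closed; being contained in the bounded set $[0,M]^{n\times n}$, it is also bounded, and therefore compact by Heine–Borel in $\mathbb{R}^{n\times n}$. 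In fact this shows $Z$ is a polytope. I would also recall, as already noted in the surrounding text, that $Z$ is nonempty (the point $z_{ij}=M$ for all $i,j$ lies in $Z$, since $kM \ge nM > n\log(1/\alpha)$ for every cycle length $k\ge 1$, using $M > cn\log(1/\alpha) > \log(1/\alpha)$), so $Z$ is a nonempty compact convex set — exactly what Brouwer's fixed point theorem requires.

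\medskip

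\textbf{Main obstacle.} There is essentially no obstacle here; the only thing to be slightly careful about is making explicit that the family of cycle-constraints is \emph{finite}, so that the intersection stays closed (an infinite intersection of closed sets is still closed, so even this is not strictly necessary, but finiteness is what makes $Z$ a genuine polytope and keeps later computational claims — e.g. the separation oracle — meaningful). One should also be careful that the indexing in \eqref{eq:set-z} ranges over cyclic permutations of all subsets of $[n]$ of size $\ge 2$ (not just of $[n]$ itself), since a cycle in the exchange graph need not visit every vertex; this does not affect convexity or compactness but is worth stating cleanly. With that remark in place, the proof is a one-paragraph appeal to the facts that half-spaces and boxes are convex and closed and that finite intersections preserve both properties, together with boundedness inside $[0,M]^{n\times n}$.
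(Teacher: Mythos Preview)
Your proposal is correct and matches the paper's own justification essentially verbatim: the paper simply notes (in the sentence immediately preceding the observation) that $Z$ is defined by linear constraints and is contained in $[0,M]^{n\times n}$, hence is a polytope and therefore convex and compact. Your added remarks about finiteness of the cycle constraints, nonemptiness, and the need to range over cycles of all lengths are accurate elaborations but not required for the bare claim.
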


Next, we show that for every point $z \in Z$, $f(z)$ corresponds to a $\eps$-\cs exchange, implying that we have successfully identified a subset of $\eps$-core-stable exchanges that is homeomorphic to a convex compact set.

\begin{lemma}\label{lem:z-cs}
For all $z \in Z$, the exchange $f(z)$ is $\varepsilon$-\cs.
\end{lemma}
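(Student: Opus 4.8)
The plan is to show that for every $z \in Z$, the exchange graph $G(f(z), \alpha(\eps))$ is acyclic, and then invoke \cref{lem:acyclic-cs} to conclude that $f(z)$ is $\eps$-\cs. This reduces the lemma to a purely combinatorial claim about the defining inequalities of $Z$.

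First I would make precise the correspondence between the constraints defining $Z$ and the edges of the exchange graph. Recall $f(z)_{ij} = 1 - \exp(-z_{ij})$, so $z_{ij} = \log\!\left(\tfrac{1}{1 - f(z)_{ij}}\right)$, and this is a monotone increasing bijection between $[0,M]$ and $[0, 1 - e^{-M}]$. The key observation is that $f(z)_{ij} < 1 - \alpha$ if and only if $z_{ij} < \log(1/\alpha)$. Hence an edge $(i,j)$ belongs to $E(G(f(z), \alpha))$ precisely when $z_{ij} < \log(1/\alpha)$.

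Next I would argue by contradiction: suppose $G(f(z), \alpha)$ contains a directed cycle, which (after possibly taking a sub-cycle) we may assume visits distinct vertices $i_1, i_2, \dots, i_k, i_{k+1} = i_1$, giving a cyclic permutation of a subset of $[n]$; one can extend this to a cyclic permutation of all of $[n]$ while only adding further edges (or, more carefully, note that the constraints in \eqref{eq:set-z} range over cyclic permutations of all of $[n]$, and a short cycle can be padded — I would handle this by observing that if there is a short cycle on a vertex subset, there is also one of length exactly $n$ using the remaining edges with $z$-values at least $0$, or simply restate \eqref{eq:set-z} as quantifying over all cycles, which is what the surrounding text intends). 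For each edge $(i_j, i_{j+1})$ on the cycle we have $z_{i_j i_{j+1}} < \log(1/\alpha)$, so summing over the $k \le n$ edges gives $\sum_{j=1}^{k} z_{i_j i_{j+1}} < n \log(1/\alpha)$, which directly contradicts the defining inequality of $Z$ for that cyclic permutation. Therefore $G(f(z), \alpha)$ is acyclic, and \cref{lem:acyclic-cs} gives that $f(z)$ is $\eps$-\cs.

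The main obstacle I anticipate is the bookkeeping around cycle lengths: the constraints in \eqref{eq:set-z} are written for cyclic permutations of the full vertex set $[n]$ (length exactly $n$), whereas a cycle in $G(f(z), \alpha)$ may be shorter. The clean fix is to note that any short directed cycle can be closed up into a Hamiltonian-style closed walk on $[n]$ by appending the remaining vertices in any order; each appended edge contributes a nonnegative $z$-value, so if the true cycle edges already sum to less than $n\log(1/\alpha)$ we still need the padding edges to also be small — this does not immediately work, so instead I would argue the contrapositive at the level of the constraint that is actually indexed by the cycle's own length, i.e. interpret \eqref{eq:set-z} (consistently with the narrative in \cref{sec:overview-exist}, where the condition is stated for all cycles $C$ in $K_n$) as ranging over \emph{all} directed cycles on vertex subsets of $[n]$. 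Under that reading the argument above is immediate and no padding is needed. I would state this interpretation explicitly at the start of the proof so the reduction to \cref{lem:acyclic-cs} goes through cleanly.
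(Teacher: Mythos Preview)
Your approach is correct and essentially the same as the paper's: both show that $G(f(z),\alpha)$ is acyclic and invoke \cref{lem:acyclic-cs}, using the equivalence $f(z)_{ij}<1-\alpha \iff z_{ij}<\log(1/\alpha)$; the paper phrases it forward (pigeonhole gives an edge with $z_{i_ji_{j+1}}\ge\log(1/\alpha)$, so that edge is absent from the graph) while you phrase it as the contrapositive. Your worry about cycle lengths is unnecessary: the constraints in \eqref{eq:set-z} are indexed by $(i_1,\dots,i_k)$ with variable $k$ and the paper's own proof writes ``since $k\le n$'', so the intended reading is all directed cycles in $K_n$, not just Hamiltonian ones---you can simply adopt that reading and drop the padding discussion.
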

\begin{proof}
Consider an arbitrary $z \in Z$. We prove that $f(z)$ is $\eps$-\cs by showing that the exchange graph $G(f(z), \alpha)$ is acyclic and then invoking \cref{lem:acyclic-cs}. Consider any cyclic permutation $(i_1, \ldots, i_k)$ of $[n]$. By definition of $Z$, we have $\sum_{j \in [k]} z_{i_j i_{j+1}} \geq n\log (1 / \alpha)$. Since $k\le n$, this implies that for some $j \in [k]$, $z_{i_j i_{j+1}} \geq \log (1 / \alpha)$. By definition of $f$, we obtain that: 
\[
\log(\frac{1}{1-f(z)_{i_j i_{j+1}}}) \ge \log(\frac{1}{\alpha}),
\]
which implies $f(z)_{i_j i_{j+1}} \ge 1 - \alpha$. This means that the exchange graph $G(f(z), \alpha)$ does not have the edge from $i_j$ to $i_{j+1}$, and hence the cyclic permutation $(i_1, \dots, i_k)$ can correspond to a cycle in $G(f(z), \alpha)$. Thus, $G(f(z), \alpha)$ is acyclic, and \cref{lem:acyclic-cs} implies that $f(z)$ is $\varepsilon$-\cs.
\end{proof}

\subsection{Defining the function $g: Z\rightarrow Z$ to capture reciprocity} Having defined the set $Z$, we proceed to define a continuous function $g: Z\rightarrow Z$ whose fixed points correspond to reciprocal exchanges. To define $g$, consider a point $z\in Z$ such that the exchange $\x := f(z)$ is not reciprocal. Since $\sum_i \Delta_i(\x) \neq 0$, there must be agents whose surpluses are unequal: agents $i, j$ s.t. $\Delta_i(\x) > \Delta_j(\x)$, i.e.,  agent $i$ contributes more than $j$. The function $g$ then maps $z$ to an point $z'$ to equalize the surpluses of $i$ and $j$ by decreasing $z_{ij}$ and increasing $z_{ji}$ (since that corresponds to decreasing $f(z)_{ij}$ and increasing $f(z)_{ji}$). However, the extent of increase/decrease must be so that $z'$ is in $Z$. For this, we define functions which capture that maximum increase/decrease in $z_{ij}$ which still maintains feasibility.

\paragraph{Traversing to the boundary of $Z$.} We define functions $\beta^+_{ij}$ and $\beta^-_{ij}$ as follows:

\begin{itemize}
\item $\beta^+_{ij}(z) = \arg\max\{b\ge 0: (z + b\cdot \e_{ij}) \in Z$\}
\item $\beta^-_{ij}(z) = \arg\max\{b\ge 0: (z - b\cdot \e_{ij}) \in Z$\}
\end{itemize}

Thus, $\beta^+_{ij}(z)$ (resp. $\beta^-_{ij}(z)$) is the maximum increase (resp. decrease) $b$ in $z_{ij}$ such that $z+b\cdot\e_{ij}$ (resp. $z-b\cdot\e_{ij}$) remains in $Z$; see \cref{fig:beta}.
\begin{figure}
    \begin{center}
        \begin{tikzpicture}
    \filldraw[fill=blue!10, draw=blue] (0,0) ellipse (3 and 2);
    \node at (2.5,1.5) {$Z$};

    \filldraw (0,0) circle (2pt);
    \node[below] at (0,0) {$z$};

    \draw[->, thick] (0,0) -- (3,0);
    \node[above] at (1.5,0) {$\beta^+_{ij}(z)$};
    \node[below] at (2.5,0) {$\e_{ij}$};

    \draw[->, thick] (0,0) -- (-3,0);
    \node[above] at (-1.5,0) {$\beta^-_{ij}(z)$};
    \node[below] at (-2.5,0) {$-\e_{ij}$};u
\end{tikzpicture}
    \end{center}
    \caption{Definition of the functions $\beta^+_{ij}(z)$ and $\beta^-_{ij}(z)$. The values $\beta^+_{ij}(z)$ and $\beta^-_{ij}(z)$ are the distances from $z$ to the boundary of the convex set $Z$ along the directions $\e_{ij}$ and $-\e_{ij}$ respectively.}\label{fig:beta}
\end{figure}
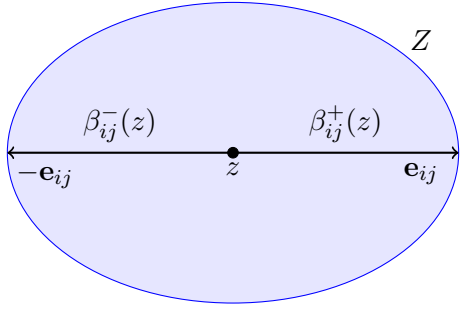
We now prove that the functions $\beta^+_{ij}(z)$ and $\beta^-_{ij}(z)$ are continuous in $z$; in fact we show they are concave. The lemma below records the more general observation that the distance $d(x)$ from a point $x$ in interior of a convex set $X$ to the boundary of $X$ along a particular direction $\e$ is a concave function in $x$.

\begin{lemma}\label{lem:convex-func}
Given a compact convex set $C \in \mathbb{R}^n$, let $\e\in \mathbb{R}^n$ be a unit vector along a particular direction. Define a function $d:X \rightarrow \mathbb{R}$ given by $d(x) = \max \{\alpha \ge 0 \mid x + \alpha \cdot \e \in C\}$. Then $d$ is a concave function.
\end{lemma}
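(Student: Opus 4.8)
The plan is to prove concavity directly from the definition by showing that $d\bigl(\lambda x_1 + (1-\lambda) x_2\bigr) \ge \lambda\, d(x_1) + (1-\lambda)\, d(x_2)$ for any $x_1, x_2 \in C$ and $\lambda \in [0,1]$. First I would observe that, since $C$ is convex and compact and $\e$ is a fixed direction, the set $\{\alpha \ge 0 : x + \alpha \e \in C\}$ is a closed bounded interval containing $0$ whenever $x \in C$, so $d(x)$ is well-defined and the maximum is attained; in particular $x + d(x)\e \in C$.

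The key step is then a single application of convexity of $C$. Set $\alpha_1 = d(x_1)$ and $\alpha_2 = d(x_2)$, so that $x_1 + \alpha_1 \e \in C$ and $x_2 + \alpha_2 \e \in C$. Because $C$ is convex, the convex combination
\[
\lambda\,(x_1 + \alpha_1 \e) + (1-\lambda)\,(x_2 + \alpha_2 \e) \;=\; \bigl(\lambda x_1 + (1-\lambda) x_2\bigr) + \bigl(\lambda \alpha_1 + (1-\lambda)\alpha_2\bigr)\e
\]
lies in $C$. Hence $\lambda \alpha_1 + (1-\lambda)\alpha_2$ is a feasible value of $\alpha$ for the point $x := \lambda x_1 + (1-\lambda) x_2$ (it is nonnegative since $\alpha_1, \alpha_2 \ge 0$), so by maximality $d(x) \ge \lambda \alpha_1 + (1-\lambda)\alpha_2 = \lambda\, d(x_1) + (1-\lambda)\, d(x_2)$, which is exactly concavity.

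I do not expect a serious obstacle here; the only points needing a little care are the well-definedness of $d$ (attainment of the max, using compactness of $C$ so the interval is closed and bounded) and checking that the statement as intended applies to points where the ray actually meets $C$ — i.e. one should read the domain so that $d$ is finite. One could also note as a corollary that concavity on $\mathbb{R}^n$ (or on a convex domain) implies continuity on the interior, which is what is actually invoked for $\beta^+_{ij}$ and $\beta^-_{ij}$; but the core of the lemma is the one-line convexity argument above.
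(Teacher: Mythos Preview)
Your proposal is correct and follows essentially the same argument as the paper: take the two maximizing points $x_i + d(x_i)\e \in C$, form their convex combination, and read off concavity from the feasibility of $\lambda d(x_1) + (1-\lambda)d(x_2)$ at $\lambda x_1 + (1-\lambda)x_2$. Your added remark on well-definedness of the maximum via compactness is a slight elaboration beyond what the paper states explicitly, but otherwise the proofs coincide.
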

\begin{proof}
Consider two arbitrary points $x,y \in C$. For arbitrary $\lambda\in[0,1]$, we prove $d(\lambda x + (1-\lambda)y) \geq \lambda d(x) + (1-\lambda) d(y)$. By definition of $d$, there exists $x^*,y^* \in C$ such that $x^*-x = d(x) \cdot \e$ and $y^* - y = d(y) \cdot \e$. We have 
    \begin{align*}
        \big(\lambda x^* + (1-\lambda) y^*\big) - \big(\lambda x + (1-\lambda)y\big) &= \lambda(x^* - x) + (1-\lambda)(y^* - y) \\
        &= \big( \lambda d(x) + (1-\lambda) d(y) \big) \cdot \e.
    \end{align*}
By convexity of $C$, $\lambda x^* + (1-\lambda) y^* \in C$. Therefore, $d(\lambda x + (1-\lambda)y) \geq \lambda d(x) + (1-\lambda) d(y)$. Hence $d$ is concave.
\end{proof}
As a corollary, we obtain the continuity of the functions $\beta^+_{ij}$ and $\beta^-_{ij}$.
\begin{corollary}[of Lemma \ref{lem:convex-func}]\label{cor:continuous}
For any $i,j \in [n]$, $\beta^+_{ij}$ and $\beta^-_{ij}$ are concave and thus continuous functions.
\end{corollary}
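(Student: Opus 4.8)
The plan is to obtain the corollary directly from \cref{lem:convex-func} by instantiating it with the set $Z$ and the two coordinate directions $\pm\e_{ij}$, and then passing from concavity to continuity.

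First I would verify the hypotheses of \cref{lem:convex-func}. By \cref{obs:convexcompact}, $Z$ is a compact convex subset of $\mathbb{R}^{n\times n}$; and the matrix $\e_{ij}$, which has a single $1$ in coordinate $(i,j)$ and zeros elsewhere, is a unit vector in the Euclidean norm, as is $-\e_{ij}$. Applying \cref{lem:convex-func} with $C = Z$ and $\e = \e_{ij}$ shows that $z \mapsto \max\{b \ge 0 : z + b\cdot\e_{ij} \in Z\} = \beta^+_{ij}(z)$ is concave on $Z$; applying it with $\e = -\e_{ij}$ shows $\beta^-_{ij}$ is concave on $Z$.

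It then remains to note that these concave functions are continuous. A finite real-valued concave function on a convex set is continuous on its relative interior, which is all that is actually needed wherever the continuity of $g$ is invoked at interior points; and continuity on all of $Z$ follows as well from the polytope structure of $Z$, since $\beta^+_{ij}(z)$ can be written as the minimum, over the finitely many bounding hyperplanes of $Z$ not parallel to $\e_{ij}$ (together with the term $M - z_{ij}$ coming from the box constraint), of the affine function giving the distance from $z$ to that hyperplane along $\e_{ij}$; hence it is a finite minimum of continuous functions, and the argument for $\beta^-_{ij}$ is symmetric. I do not anticipate any real obstacle, since all the content is in \cref{lem:convex-func}; the only subtlety worth flagging is that concavity alone does not by itself guarantee continuity at the boundary of a closed convex set, which is why I would either confine the continuity claim to the interior (which suffices downstream) or lean on the piecewise-linear/polytope structure of $Z$ as above.
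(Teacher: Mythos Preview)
Your approach is essentially the paper's: the corollary is stated without proof, as an immediate instantiation of \cref{lem:convex-func} with $C=Z$ and $\e=\pm\e_{ij}$, and you do exactly this. Your additional remark that concavity alone does not force continuity at the boundary of a closed convex set, and your fix via the polytope structure of $Z$ (writing $\beta^\pm_{ij}$ as a finite minimum of affine functions), is a correct and useful refinement that the paper glosses over.
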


\paragraph{Updating a non-reciprocal exchange.} Consider an element $z\in Z$ s.t. the exchange $f(z)$ is not reciprocal. Then there must be agents $i, j \in \N$ s.t. $\Delta_i(f(z)) \neq \Delta_j(f(z))$. For any two agents $i,j$, we design a function $g^{ij}$ which updates $z$ towards a element $z'$ s.t. $f(z')$ is more reciprocal than $z$ for the agents $i$ and $j$. Hence, the function $g^{ij}$ should map $z$ to an element $z'$ where $z_{ij}$ is increased if $\Delta_i(f(z)) < \Delta_j(f(z))$ and decreased if $\Delta_i(f(z)) > \Delta_j(f(z))$. Moreover, we perform this update as long as $z'$ remains in $Z$. The extent of such updates is given by $\beta^+_{ij}(z)$ and $\beta^-_{ij}(z)$ by definition. Thus a natural choice of $g^{ij}$ is given by $g^{ij}(z) = z + \e_{ij}\cdot \beta^+_{ij}(z)$ if $\Delta_i(f(z)) < \Delta_j(f(z))$, and $g^{ij}(z) = z - \e_{ij}\cdot \beta^-_{ij}(z)$ if $\Delta_i(f(z)) > \Delta_j(f(z))$. However, this definition compromises continuity of $g^{ij}$ over $Z$, as it depends on the cases of $\Delta_i(f(z)) < \Delta_j(f(z))$ or $\Delta_i(f(z)) > \Delta_j(f(z))$. Thus, for continuity, we incorporate the cases into $g^{ij}$ by including the difference $(\Delta_i(f(z)) - \Delta_j(f(z)))$ as a multiplier. Let $\delta = 1/\max_i u_i(1^{n\times n})$. We define $g^{ij}: Z \rightarrow Z$ as follows:
\begin{equation}\label{eq:g-ij}
g^{ij}(z) = \begin{cases}
z + \e_{ij}\cdot\delta\cdot\beta^+_{ij}(z)\cdot (\Delta_j(f(z))-\Delta_i(f(z))), \text{ if } \Delta_i(f(z))\le \Delta_j(f(z)), \\
z - \e_{ij}\cdot\delta\cdot\beta^-_{ij}(z)\cdot (\Delta_i(f(z))-\Delta_j(f(z))), \text{ otherwise.}
\end{cases}
\end{equation}

\begin{figure}
    \begin{center}
        \begin{tikzpicture}

    \draw[->] (-0.5, 0) -- (6, 0) node[right] {$z_{12}$};
    \draw[->] (0, -0.5) -- (0, 4) node[above] {$z_{21}$};

    \filldraw[fill=blue!10, draw=blue] (3, 2) ellipse (2.5 and 1.5);
    \node at (4.5, 3.5) {$Z$};

    \filldraw (1.8, 2.4) circle (2pt);
    \node[above left] at (1.8, 2.4) {$z$};

    \draw[->, thick] (1.8, 2.4) -- (4.0, 2.4);
    \node[right] at (4.0, 2.4) {$g^{12}(z)$};

    \draw[->, thick] (1.8, 2.4) -- (1.8, 1.6);
    \node[below] at (1.8, 1.6) {$g^{21}(z)$};

    \draw[->, thick] (1.8, 2.4) -- (2.9, 2.0);
    \filldraw (2.9, 2.0) circle (2pt);
    \node[below right] at (2.8, 2.1) {$g(z)$};

\end{tikzpicture}
    \end{center}
    \caption{Illustrating the function $g(z)$. The point $z \in [0,M]^2$ is such that $\Delta_1(f(z)) < \Delta_2(f(z))$. Hence $g$ increases $z_{12}$ to $g^{12}(z)$ and decreases $z_{21}$ to $g^{21}(z)$, and returns $g(z)$ as a convex combination of $z$, $g^{12}(z)$, and $g^{21}(z)$.}
    \label{fig:g-func}
\end{figure}
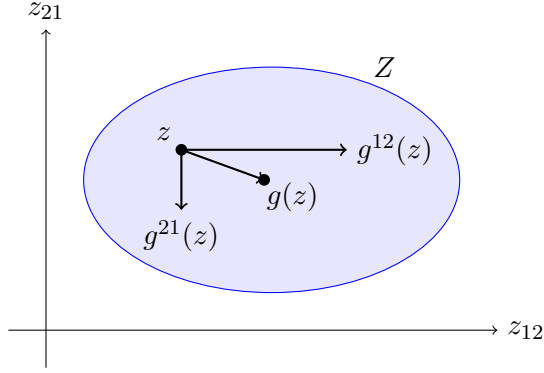

Finally, we define $g(z)$ to combine the updates $g^{ij}(z)$ for all pairs of agents $i,j \in [n]$, as $g(z) = \sum_{i,j} \frac{1}{n^2} g^{ij}(z)$; see Figure~\ref{fig:g-func} for an illustration. We prove below that $g$ indeed maps elements of $Z$ to itself, which crucially uses the convexity of set $Z$.

\begin{lemma}\label{lem:g-func}
For all $z \in Z$, $g(z) \in Z$.
\end{lemma}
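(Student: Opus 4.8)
The plan is to show $g(z) \in Z$ by exhibiting each $g^{ij}(z)$ as a point of $Z$ and then invoking convexity of $Z$. Since $g(z) = \sum_{i,j} \frac{1}{n^2} g^{ij}(z)$ is a convex combination (the coefficients $1/n^2$ are nonnegative and sum to $1$), \cref{obs:convexcompact} immediately gives $g(z) \in Z$ provided $g^{ij}(z) \in Z$ for every pair $i,j$. So the whole task reduces to the claim: for all $i,j \in [n]$ and all $z \in Z$, $g^{ij}(z) \in Z$.

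To prove that claim, fix $z \in Z$ and a pair $i,j$, and consider the two cases in the definition \eqref{eq:g-ij}. In the first case, $g^{ij}(z) = z + \e_{ij} \cdot t$ where $t = \delta \cdot \beta^+_{ij}(z) \cdot (\Delta_j(f(z)) - \Delta_i(f(z)))$. By definition $\beta^+_{ij}(z)$ is the largest $b \ge 0$ with $z + b\,\e_{ij} \in Z$, and since $Z$ is nonempty, compact and contains $z$, this maximum is attained and $z + b\,\e_{ij} \in Z$ for every $b \in [0, \beta^+_{ij}(z)]$ (here I use that the segment from $z$ to $z + \beta^+_{ij}(z)\,\e_{ij}$ lies in the convex set $Z$). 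Thus it suffices to check $0 \le t \le \beta^+_{ij}(z)$. Non-negativity: in this case $\Delta_j(f(z)) - \Delta_i(f(z)) \ge 0$, and $\delta, \beta^+_{ij}(z) \ge 0$, so $t \ge 0$. Upper bound: I need $\delta \cdot (\Delta_j(f(z)) - \Delta_i(f(z))) \le 1$, which follows because $\Delta_j(f(z)) = \Psi_j(f(z)) - u_j(f(z)) \le \Psi_j(f(z)) = u_j(f(z)) \le \max_k u_k(1^{n\times n}) = 1/\delta$ (using monotonicity of $u_j$ and efficiency of the share functions), while $\Delta_i(f(z)) \ge -u_i(f(z)) \ge -1/\delta$ — actually the cleaner bound is $\Delta_j(f(z)) - \Delta_i(f(z)) \le \Delta_j(f(z)) \le u_j(f(z)) \le 1/\delta$ since $\Delta_i(f(z)) \ge 0$ is not guaranteed, so I should instead just use $\Delta_j(f(z)) - \Delta_i(f(z)) \le u_j(f(z)) - (\Psi_i(f(z)) - u_i(f(z)))$ and bound $\Delta_j \le u_j \le 1/\delta$ together with $-\Delta_i \le u_i \le 1/\delta$ if needed; the point is that the difference is at most $2/\delta$, so one may simply take $\delta = 1/(2\max_i u_i(1^{n\times n}))$ or absorb the factor. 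In any event $t \le \beta^+_{ij}(z) \cdot \delta \cdot (\text{bounded by } 1/\delta) = \beta^+_{ij}(z)$, giving $g^{ij}(z) \in Z$. The second case is symmetric using $\beta^-_{ij}(z)$ and the segment from $z$ to $z - \beta^-_{ij}(z)\,\e_{ij}$.

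The main obstacle I anticipate is the bookkeeping in the upper-bound step — specifically making sure the multiplier $\delta \cdot |\Delta_i(f(z)) - \Delta_j(f(z))|$ is genuinely at most $1$ so that the update does not overshoot the boundary of $Z$. This hinges on correctly bounding the surplus differences via efficiency ($\Psi_j = u_j$), monotonicity, and the normalization $u_i \le 1$ (or $u_i \le u_i(1^{n\times n})$), and on the precise definition of $\delta$; if $\delta = 1/\max_i u_i(1^{n\times n})$ only bounds each $\Delta$ in absolute value by $1/\delta$ then the difference is bounded by $2/\delta$ and one needs a factor-$2$ adjustment somewhere (either in $\delta$ or by noting the relevant one-sided bound suffices). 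Everything else — non-negativity of $t$, attainment of the max in $\beta^\pm$, and the reduction via convexity — is routine given \cref{obs:convexcompact} and \cref{cor:continuous}.
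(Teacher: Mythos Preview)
Your approach is essentially identical to the paper's: show each $g^{ij}(z)\in Z$ by bounding the multiplier $\delta\cdot|\Delta_j(f(z))-\Delta_i(f(z))|\le 1$ so that the update stays on the segment from $z$ to $z\pm\beta^\pm_{ij}(z)\e_{ij}\subseteq Z$, and then use convexity of $Z$ (\cref{obs:convexcompact}) for the average $g(z)=\frac{1}{n^2}\sum_{i,j}g^{ij}(z)$. The paper's proof is a terse three-line version of exactly this; it asserts the multiplier bound ``by definition of $\delta$'' without the detailed justification you attempt, so your worry about a possible factor of $2$ in the surplus-difference bound is a point the paper simply does not spell out either.
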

\begin{proof}
We prove that $g^{ij}(z) \in Z$ for all $i,j \in [n]$. By definition of $\delta$, we have $\delta\cdot |\Delta_j(f(z)) - \Delta_i(f(z))| \le 1$. Then by definition of $\beta^+_{ij}(z)$ and $\beta^-_{ij}$, we have that $g^{ij}(z) \in Z$. Finally, by convexity of $Z$ from \cref{obs:convexcompact}, we also have that $g(z) = \sum_{ij} \frac{1}{n^2} g^{ij}(z) \in Z$.
\end{proof}
Next, we prove that $g$ is continuous. 
\begin{lemma}\label{lem:continuous}
$g:Z \rightarrow Z$ is a continuous function.
\end{lemma}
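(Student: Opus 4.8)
The plan is to show continuity of $g$ by reducing it, via the pasting (gluing) lemma for closed sets, to continuity of the two branches in the piecewise definition \eqref{eq:g-ij}, and then to observe that $g$ is a finite convex combination of continuous maps.

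First I would record that the surplus map $z \mapsto \Delta_i(f(z))$ is continuous on $Z$ for every $i$. Indeed, each coordinate $f(z)_{ij} = 1 - \exp(-z_{ij})$ is continuous, the utility functions $u_i$ and the share functions $\psi_{ij}$ are continuous by hypothesis, and $\Delta_i(f(z)) = \sum_{j} \psi_{ij}(f(z)_j) - u_i(f(z)_i)$ is a finite sum of compositions and differences of continuous functions. Combining this with \cref{cor:continuous} (continuity of $\beta^+_{ij}$ and $\beta^-_{ij}$), the constancy of $\delta$, and the fact that all of these scalar factors are bounded on the compact set $Z \subseteq [0,M]^{n\times n}$, it follows that each of the two expressions appearing in \eqref{eq:g-ij}, namely $p(z) := z + \e_{ij}\cdot\delta\cdot\beta^+_{ij}(z)\cdot(\Delta_j(f(z))-\Delta_i(f(z)))$ and $q(z) := z - \e_{ij}\cdot\delta\cdot\beta^-_{ij}(z)\cdot(\Delta_i(f(z))-\Delta_j(f(z)))$, is a continuous $\mathbb{R}^{n\times n}$-valued function on all of $Z$.

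The key step is the gluing. Let $Z^{\le}_{ij} = \{z \in Z : \Delta_i(f(z)) \le \Delta_j(f(z))\}$ and $Z^{\ge}_{ij} = \{z \in Z : \Delta_i(f(z)) \ge \Delta_j(f(z))\}$; both are closed in $Z$, being preimages of closed half-lines under the continuous map $z \mapsto \Delta_i(f(z)) - \Delta_j(f(z))$, and their union is $Z$. By definition, $g^{ij}$ agrees with $p$ on $Z^{\le}_{ij}$ and with $q$ on $Z^{\ge}_{ij} \setminus Z^{\le}_{ij}$. On the overlap $Z^{\le}_{ij} \cap Z^{\ge}_{ij} = \{z : \Delta_i(f(z)) = \Delta_j(f(z))\}$ the multiplier $(\Delta_j(f(z)) - \Delta_i(f(z)))$ in $p$ and the multiplier $(\Delta_i(f(z)) - \Delta_j(f(z)))$ in $q$ both vanish, so $p(z) = z = q(z)$ there; hence $p$ and $q$ coincide on the overlap. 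The pasting lemma then gives that $g^{ij}$ is continuous on $Z$. Finally, $g(z) = \sum_{i,j}\frac{1}{n^2} g^{ij}(z)$ is a finite linear combination of continuous functions, hence continuous, which is what we wanted.

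The only delicate point, and the reason the construction is set up as it is, is verifying that the two branches of \eqref{eq:g-ij} genuinely agree on the level set $\{\Delta_i(f(z)) = \Delta_j(f(z))\}$ — this is exactly why the surplus difference was inserted as a multiplier rather than using a bare case split — together with checking boundedness of $\beta^\pm_{ij}$ and of the surplus differences so that the products are continuous; the latter is immediate from compactness of $Z \subseteq [0,M]^{n\times n}$.
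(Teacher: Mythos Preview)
Your proof is correct and cleaner than the paper's. The paper does not invoke the pasting lemma; instead it fixes $z\in Z$ and runs an explicit $\varepsilon$--$\delta$ argument in three cases according to the sign of $h^{ij}(z):=\Delta_j(f(z))-\Delta_i(f(z))$: when $h^{ij}(z)=0$ it bounds $|g^{ij}(z')-g^{ij}(z)|$ directly using $|\beta^{\pm}_{ij}|\le M$, and when $h^{ij}(z)\neq 0$ it finds a neighborhood on which $h^{ij}$ keeps its sign so that $g^{ij}$ coincides there with a single continuous branch. Your gluing argument packages exactly the same observation (that the two branches agree on $\{\Delta_i(f(z))=\Delta_j(f(z))\}$ because the surplus-difference multiplier vanishes) into one clean invocation of the pasting lemma, and is shorter. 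What the paper's hands-on version buys is reusability: later, for PPAD membership (\cref{lem:PPAD-continuous}), the authors need \emph{polynomial} continuity, i.e., explicit rational $\gamma$ of polynomially bounded size as a function of $\rho$, and they obtain it by rerunning the same three-case estimate with Lipschitz constants tracked. Your pasting argument gives continuity but would need to be unpacked to produce those quantitative bounds. One small quibble: you don't need boundedness of $\beta^{\pm}_{ij}$ or of the surplus differences to conclude that the product maps $p,q$ are continuous---products of continuous real-valued functions are continuous without any boundedness hypothesis---so that remark in your last paragraph can be dropped.
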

\begin{proof}
Observe that it suffices to prove that for all $i,j \in [n]$, $g^{ij}$ is a continuous function. Fix $i, j\in [n]$. Define $h^{ij}: Z\rightarrow \mathbb{R}$ as $h^{ij}(z) = \Delta_j(f(z)) - \Delta_i(f(z))$. The continuity of $\phi_{kh}$ for all $k,h\in [n]$ and the continuity of $f$ imply that $\Delta_i(f(z))$ and $\Delta_j(f(z))$ are continuous in $z$. Thus, $h^{ij}$ is also continuous in $z$.

Let $\mathcal{B}(z, \rho)$ denote  an open ball of radius $\rho$ around $z$. We argue that $g^{ij}$ is continuous at every $z\in Z$, by showing that for every $\rho > 0$, there exists a $\gamma > 0$ s.t. $z' \in \mathcal{B}(z, \gamma)$ implies $g^{ij}(z') \in \mathcal{B}(g^{ij}(z), \rho)$. Fix a point $z\in Z$ and $\rho > 0$. We consider three cases:
\begin{itemize}
\item $h^{ij}(z) = 0$. Since $h^{ij}$ is continuous, there exists some $\gamma' > 0$ such that $z'\in \mathcal{B}(z, \gamma')$ implies $h^{ij}(z') \in \mathcal{B}(h^{ij}(z), \frac{\rho}{2\delta M}) = \mathcal{B}(0, \frac{\rho}{2\delta M})$. For $z'\in \mathcal{B}(z, \gamma')$, we define $\beta_{ij}(z') = \beta^+_{ij}(z')$ if $h^{ij}(z') \ge 0$, and $\beta^-_{ij}(z')$ otherwise. Using the definition of $g^{ij}$ from \cref{eq:g-ij}, we have for every $z'\in\mathcal{B}(z, \gamma')$:
\begin{align*}
|g^{ij}(z') - g^{ij}(z)| &\le |z'-z| + \delta\cdot|\beta_{ij}(z')\cdot h^{ij}(z')| \\
&\le |z'-z| + \delta\cdot M \cdot \frac{\rho}{2\delta M} \tag{using $|\beta^{ij}(z')| \le M$}\\
&\le |z'-z| + \frac{\rho}{2}.
\end{align*}
We set $\gamma := \min(\gamma', \rho/2)$. Then, the above inequality implies that for all $z'\in \mathcal{B}(z, \gamma)$, we have $g^{ij}(z') \in \mathcal{B}(g^{ij}(z), \rho)$. This shows that $g^{ij}$ is continuous at $z$.
\item $h^{ij}(z) = r > 0$. Since $h^{ij}$ is continuous, there exists some $\gamma' > 0$ such that $z' \in \mathcal{B}(z, \gamma)$ implies $h^{ij}(z') \in \mathcal{B}(h^{ij}(z), \frac{r}{2})$. Thus, $|h^{ij}(z') - h^{ij}(z)|  < r/2$. Since $h^{ij}(z) = r$, we conclude $h^{ij}(z') > 0$ for all $z'\in \mathcal{B}(z, \gamma')$. This implies that $g^{ij}(z') = z' + \e_{ij}\cdot \delta\cdot\beta^+_{ij}(z')\cdot h^{ij}(z')$ for all $z'\in \mathcal{B}(z, \gamma')$. 

Now consider the function $\bar{g}^{ij} : Z \rightarrow Z$ given by $\bar{g}^{ij}(z') := z' + \e_{ij}\cdot\delta\cdot\beta^+_{ij}(z')\cdot h^{ij}(z')$. We have that $g^{ij}(z') = \bar{g}^{ij}(z')$ for all $z'\in \mathcal{B}(z, \gamma')$. Further, since $\beta^+_{ij}$ and $h^{ij}$ are continuous functions, $\bar{g}^{ij}$ is also a continuous function. Thus, there exists some $\gamma'' > 0$ s.t. $z'\in\mathcal{B}(z, \gamma'')$ implies $\bar{g}^{ij}(z') \in \mathcal{B}(\bar{g}^{ij}(z), \rho)$. 

We set $\gamma := \min(\gamma', \gamma'')$. We obtain that $z' \in \mathcal{B}(z, \gamma)$ implies $g^{ij}(z') \in \mathcal{B}(g^{ij}(z), \rho)$, since $g^{ij}(z') = \bar{g}^{ij}(z')$ for all $z'\in \mathcal{B}(z, \gamma')$. This proves that $g^{ij}$ is continuous at $z$.
\item $h^{ij}(z) = r < 0$. An analysis analogous to the above case shows that $g^{ij}$ is continuous at $z$ in this case too.
\end{itemize}
In conclusion, we showed that for all $i,j\in \N$, $g^{ij}$ is continuous at all points $z\in Z$. Since $g : = \sum_{i, j\in [n]} \frac{1}{n^2} g^{ij}$, we conclude that $g$ is a continuous function as well.
\end{proof}

We can now prove \cref{thm:fair-cs-existence}.
\begin{proof}[Proof of \cref{thm:fair-cs-existence}] Since $Z$ is a convex, compact set (\cref{obs:convexcompact}), and that the map $g$ is a continuous function from $Z$ to $Z$ in Lemmas~\ref{lem:g-func} and \ref{lem:continuous}. By Brouwer's fixed-point theorem, $g$ has a fixed-point $z^*\in Z$. By definition of $g$, we know $g(z^*)_{ij} = z^* + \e_{ij}\cdot \frac{\delta}{n^2}\cdot \beta_{ij}(z^*)\cdot h^{ij}(z^*)$ for all $i,j\in \N$, where $\beta_{ij}(z) =\beta^+_{ij}(z)$ if $h^{ij}(z) \ge 0$, and $\beta^-_{ij}(z)$ otherwise. Thus $g(z^*) = z^*$ implies that $h^{ij}(z^*) = \Delta_j(f(z^*)) - \Delta_i(f(z^*)) = 0$ for all $i, j\in \N$. Since $\sum_i \Delta_i(f(z^*)) = 0$, we obtain $\Delta_i(f(z^*)) = 0$ for all $i\in \N$. This shows that the exchange $f(z^*)$ is reciprocal. Finally, \cref{lem:z-cs} implies that the exchange $f(z^*)$ is also $\eps$-\cs. 
\end{proof}

\subsection{Existence of reciprocal and exact core-stable exchanges}

Having proved the existence of reciprocal and $\eps$-\cs exchanges for any $\eps \in (0, 1)$, we next prove the existence of reciprocal and exactly core-stable exchanges. Our proof relies on sequential compactness of the sets of reciprocal and $\eps$-\cs exchanges. We prove this in the next two lemmas by using the continuity of utility and sharing functions.

\begin{lemma}\label{lem:reciprocal-compact}
The set of reciprocal exchanges is compact.
\end{lemma}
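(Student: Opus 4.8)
The plan is to observe that the set of reciprocal exchanges is a closed and bounded subset of the finite-dimensional space $\mathbb{R}^{n\times n}$, whence compactness follows from the Heine--Borel theorem.

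Concretely, write $R = \{\x \in [0,1]^{n\times n} : \Delta_i(\x) = 0 \text{ for all } i \in \N\}$ for the set of reciprocal exchanges. Boundedness is immediate since $R \subseteq [0,1]^{n\times n}$. For closedness, I would first argue that each surplus function $\Delta_i$ is continuous on $[0,1]^{n\times n}$: recall $\Delta_i(\x) = \sum_{j\in\N}\psi_{ij}(\x_j) - u_i(\x_i)$; the coordinate maps $\x \mapsto \x_j$ extracting the $j$-th bundle are linear, hence continuous, and $u_i$ together with each $\psi_{ij}$ is continuous by hypothesis, so $\Delta_i$ — a finite sum and difference of compositions of continuous functions — is continuous. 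Consequently $\Delta_i^{-1}(\{0\})$ is closed in $[0,1]^{n\times n}$, and since $[0,1]^{n\times n}$ is itself closed in $\mathbb{R}^{n\times n}$, each $\Delta_i^{-1}(\{0\})$ is closed in $\mathbb{R}^{n\times n}$. Therefore $R = \bigcap_{i\in\N}\Delta_i^{-1}(\{0\})$ is a finite intersection of closed bounded sets, hence closed and bounded, hence compact.

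Equivalently — and this is the form I would actually record, since sequential compactness is what the subsequent existence argument uses — one can verify it directly: any sequence $(\x^{(t)})_t$ in $R$ lies in the compact box $[0,1]^{n\times n}$, so it has a convergent subsequence $\x^{(t_\ell)} \to \x^*$ with $\x^* \in [0,1]^{n\times n}$, and continuity of each $\Delta_i$ gives $\Delta_i(\x^*) = \lim_\ell \Delta_i(\x^{(t_\ell)}) = 0$, so $\x^* \in R$.

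I do not anticipate a genuine obstacle: the only point requiring care is the continuity of $\Delta_i$, which reduces entirely to the assumed continuity of the utility and sharing functions plus continuity of the coordinate projections. The same template — a preimage of a closed set (here $\{0\}$, later an interval) under a continuous map, intersected over finitely many agents or constraints — is exactly what will be reused in the companion lemma establishing compactness of the set of $\varepsilon$-core-stable exchanges, and the combination of the two compactness facts then yields the existence of an exact reciprocal and core-stable exchange via a limiting argument over $\varepsilon \to 0$.
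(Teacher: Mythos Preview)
Your proposal is correct and follows essentially the same approach as the paper: both arguments establish closedness of the set of reciprocal exchanges via continuity of $\Delta_i$ (the paper shows the complement is open as a union of sets $\{\Psi_i > u_i\}$ and $\{\Psi_i < u_i\}$, while you write the set directly as $\bigcap_i \Delta_i^{-1}(\{0\})$), and then invoke boundedness plus Heine--Borel. Your explicit sequential-compactness formulation is a nice touch, since that is precisely the form used in the subsequent limiting argument.
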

\begin{proof}
For $i\in \N$, let $\hat{R}_i^+ = \{\x \in [0,1]^{n\times n} : \phi_i(\x) > u_i(\x)\}$ and $\hat{R}_i^- = \{\x \in [0,1]^{n\times n} : \phi_i(\x) < u_i(\x)\}$. By definition, $\hat{R}_i^+ \cup \hat{R}_i^-$ is the set of exchanges that are not reciprocal for agent $i$. By continuity of $u_i$ and $\phi_i$, we see that both $\hat{R}_i^+$ and $\hat{R}_i^-$ are open sets. Hence the set $\bar{R} = \bigcup_{i\in \N} \hat{R}_i^+ \cup \hat{R}_i^-$ is also an open set. Observe that $\bar{R}$ is exactly the set of exchanges that are not reciprocal. Hence, the set of reciprocal exchanges is closed. Since the set of exchanges is also bounded, the compactness of the set of reciprocal exchanges follows.
\end{proof}

\begin{lemma}\label{lem:eps-cs-compact}
The set of $\eps$-\cs exchanges is compact, for any $\eps \in [0,1]$.
\end{lemma}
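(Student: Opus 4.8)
The plan is to mirror the proof of \cref{lem:reciprocal-compact}. Since the ambient set of exchanges $[0,1]^{n\times n}$ is compact, it suffices to show that the set of $\eps$-\cs exchanges is \emph{closed}, equivalently that its complement --- the exchanges that are \emph{not} $\eps$-\cs --- is open. So the whole argument reduces to an openness check, and the only inputs I need are the continuity of the utility functions and the finiteness of the agent set.

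I would unpack \cref{def:core-stable} as follows: an exchange $\x$ fails to be $\eps$-\cs exactly when there is a subset $\N'\subseteq\N$ and an exchange $\x'$ of the sub-instance $\langle\N',(D_i)_{i\in\N'},(\val_i)_{i\in\N'}\rangle$ with $\val_i(\x'_i) > \val_i(\x_i)+\eps$ for every $i\in\N'$. Fixing such a pair $(\N',\x')$, the set of exchanges that this particular pair blocks is
\[
U_{\N',\x'} \;=\; \bigcap_{i\in\N'}\bigl\{\x\in[0,1]^{n\times n} : \val_i(\x_i) < \val_i(\x'_i)-\eps\bigr\}.
\]
Each map $\x\mapsto\x_i$ is a continuous (coordinate) projection and each $\val_i$ is continuous by hypothesis, so $\x\mapsto\val_i(\x_i)$ is continuous; hence every set $\{\x:\val_i(\x_i)<c\}$ is open, and $U_{\N',\x'}$, being a finite intersection of such sets, is open. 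The set of exchanges that are not $\eps$-\cs is precisely $\bigcup_{\N'\subseteq\N}\bigcup_{\x'}U_{\N',\x'}$ --- a union over finitely many $\N'$ and a continuum of $\x'$ of open sets --- and is therefore open. Consequently the set of $\eps$-\cs exchanges is closed, and since it is also bounded, it is compact.

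I do not expect a genuine obstacle here; the only point that needs care is that the \emph{strict} inequality $\val_i(\x'_i) > \val_i(\x_i)+\eps$ in the blocking condition is exactly what makes each $U_{\N',\x'}$ open (with a non-strict inequality the complement would not obviously be open). As an alternative I could argue by sequential compactness: if $\eps$-\cs exchanges $\x^{(k)}\to\x$ but some pair $(\N',\x')$ blocks $\x$, then $\eta := \min_{i\in\N'}\bigl(\val_i(\x'_i)-\val_i(\x_i)-\eps\bigr) > 0$, and continuity of the finitely many functions $\val_i$ gives $\val_i(\x^{(k)}_i) < \val_i(\x_i)+\eta$ for all $i\in\N'$ once $k$ is large, so $(\N',\x')$ also blocks $\x^{(k)}$, contradicting $\eps$-core-stability of $\x^{(k)}$. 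Either route is short; I would present the openness argument to match \cref{lem:reciprocal-compact}.
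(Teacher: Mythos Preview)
Your argument is correct and follows the same overall strategy as the paper: show that the complement (the set of non-$\eps$-\cs exchanges) is open, then use boundedness to conclude compactness. The one difference is in how the blocking exchanges are indexed. You take the union over \emph{all} pairs $(\N',\x')$, obtaining an uncountable union of open sets $U_{\N',\x'}$. The paper instead exploits monotonicity of the utilities: if any $\x'$ blocks for coalition $\N'$, then so does the full-sharing exchange $\y[\N']$ (where every agent in $\N'$ gives her entire dataset to every other agent in $\N'$), so it suffices to check one candidate per coalition and the union becomes finite. Your version is marginally more general (it does not use monotonicity), while the paper's version keeps the index set finite; both are entirely valid since arbitrary unions of open sets are open.
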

\begin{proof}
For a subset $\N'$ of agents and an agent $i\in \N'$, consider the set $S(\N', i)$ of exchanges in which agent $i$ will gain significantly if agents in $\N'$ fully exchange data among themselves. Let $\y[\N'] \in [0,1]^{n\times n}$ be an exchange where $\y[\N']_{kh} = 1$ for $k,h\in \N'$ and $0$ otherwise. Then $S(\N', i) = \{ \x \in [0,1]^{n\times n} : u_i(\y[\N']) > u_i(\x) + \eps \}$. Note that by continuity of $u_i$, the set $S(\N', i)$ is an open set for all $\N'\subseteq \N$ and $i\in \N'$. 

The definition of $S(\N', i)$ and $\eps$-core-stability implies that the set of exchanges which are \textit{not} $\eps$-\cs is given by $\bigcup_{\N'\subseteq \N} \bigcap_{i\in \N'} S(\N', i)$. Since the intersection of finitely many open sets, and the union of open sets is open, we obtain that the set of exchanges which are not $\eps$-\cs is open. Thus, the set of $\eps$-\cs exchanges is closed. Since the set of exchanges is also bounded, the compactness of the set of $\eps$-\cs exchanges follows.
\end{proof}

We now show the existence of reciprocal and exact \cs exchanges.

\begin{theorem}\label{thm:fair-cs-existence-exact}
Any data exchange instance $\I$ where utility functions and sharing functions are continuous and monotone admits a reciprocal and \cs exchange.
\end{theorem}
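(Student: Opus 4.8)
The plan is to realize the desired exchange as a limit of the reciprocal $\eps$-\cs exchanges guaranteed by \cref{thm:fair-cs-existence} as $\eps \to 0$, using the compactness statements in \cref{lem:reciprocal-compact,lem:eps-cs-compact}. Write $R$ for the set of reciprocal exchanges and $C_\eps$ for the set of $\eps$-\cs exchanges. First I would note that $R \cap C_{1/k} \neq \emptyset$ for every $k \ge 1$ by \cref{thm:fair-cs-existence}, that each $R \cap C_{1/k}$ is compact (an intersection of the compact sets $R$ and $C_{1/k}$), and that these sets are nested, i.e.\ $C_{1/(k+1)} \subseteq C_{1/k}$, since $\x \in C_{\eps'}$ together with $\eps' \le \eps$ implies $\x \in C_\eps$ directly from \cref{def:core-stable}. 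Thus $\{R \cap C_{1/k}\}_{k \ge 1}$ is a decreasing chain of nonempty compact sets.

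Next I would invoke Cantor's intersection theorem to conclude that $\bigcap_{k\ge 1}\big(R \cap C_{1/k}\big) \neq \emptyset$, and take $\x^*$ to be any point of this intersection. Membership in $R$ makes $\x^*$ reciprocal for free, so the only remaining task is to verify that $\x^*$ is \emph{exactly} \cs, which reduces to the inclusion $\bigcap_{k\ge1} C_{1/k} \subseteq C_0$. To see this I would argue by contrapositive: if $\x$ is not $0$-\cs, there is a coalition $\N'$ and an exchange $\x'$ of the instance restricted to $\N'$ with $\val_i(\x'_i) > \val_i(\x_i)$ for all $i \in \N'$; letting $2\eta$ be the minimum of these (positive) gaps and picking $k$ with $1/k < \eta$ yields $\val_i(\x'_i) > \val_i(\x_i) + 1/k$ for all $i \in \N'$, so $\x \notin C_{1/k}$ and hence $\x \notin \bigcap_k C_{1/k}$. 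This gives $\x^* \in C_0$ and finishes the argument.

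I do not anticipate a genuine obstacle; this is a clean nested-compactness argument, and the one point worth flagging is \emph{why} it goes through so smoothly: the only approximation in \cref{thm:fair-cs-existence} lives in core-stability, while reciprocity holds exactly at every scale, so the limit point inherits exact reciprocity automatically from the closedness of $R$. Had one only obtained $\delta$-reciprocity at scale $\eps$, the same strategy would still work — use the nested sets $R_{1/k}\cap C_{1/k}$, with $R_\delta$ the set of $\delta$-reciprocal exchanges — but one would additionally need the continuity of each surplus function $\Delta_i$ (immediate from continuity of $\val_i$ and the $\psi_{ij}$) to pass $|\Delta_i|\le 1/k$ to the limit. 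As an alternative to Cantor's theorem, one could equivalently fix a sequence $\x^{(k)}\in R\cap C_{1/k}$, extract a convergent subsequence $\x^{(k_\ell)}\to\x^*$ by sequential compactness of the cube $[0,1]^{n\times n}$, and conclude using the closedness of $R$ and the same $\eta$-gap argument for $C_0$.
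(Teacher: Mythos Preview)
Your proposal is correct and follows essentially the same compactness strategy as the paper: the paper picks $\x^k \in R \cap C_{1/k}$, extracts a convergent subsequence by Bolzano--Weierstrass, and then uses closedness of each $C_{1/\ell}$ to show the limit is $\tfrac{1}{\ell}$-\cs for every $\ell$, whereas you package the same content as Cantor's intersection theorem on the nested compact sets $R\cap C_{1/k}$. Your contrapositive verification of $\bigcap_k C_{1/k}\subseteq C_0$ matches the paper's final step, and your closing remark about the sequential-compactness alternative is exactly the paper's route.
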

\begin{proof}
\cref{thm:fair-cs-existence} implies the existence of reciprocal and $\eps$-\cs exchanges for any $\eps \in (0, 1)$. Thus, for any $k\in \mathbb{N}$, there exists an exchange $\x^k$ that is reciprocal and $\frac{1}{k}$-\cs. Since the set of reciprocal exchanges is compact (\cref{lem:reciprocal-compact}), the Bolzano-Weirerstrass theorem implies that the sequence $\{\x^k\}_{k\in\mathbb{N}}$ has a convergent subsequence $\{\x^{\sigma_i}\}_{i\in\mathbb{N}}$, which converges to a reciprocal exchange $\x^*$. 

We argue that $\x^*$ is also \cs. Fix an $\ell\in \mathbb{N}$. Let $i_\ell \in \mathbb{N}$ be the smallest integer $i$ s.t. $\sigma_i \ge \ell$. In other words, $\x^{i_\ell}$ is smallest-index element of the subsequence $\{\x^{\sigma_i}\}_{i\in\mathbb{N}}$ which appears no earlier than $\x^\ell$ in the sequence $\{\x^k\}_{k\in\mathbb{N}}$. Then, the sequence $\{\x^{\sigma_i}\}_{i\ge i_\ell}$ is a subsequence of the convergent subsequence $\{\x^{\sigma_i}\}_{i\in \mathbb{N}}$, and hence it is also convergent with the same limit point $\x^*$.

Moreover, since an $\eps$-\cs exchange is also $\eps'$-\cs for $\eps' > \eps$, $\{\x^{\sigma_i}\}_{i\ge i_\ell}$ is a sequence of $\frac{1}{\ell}$-\cs exchanges. Since the set of $\frac{1}{\ell}$-\cs exchanges is compact (\cref{lem:eps-cs-compact}), we conclude that the limit point $\x^*$ is also $\frac{1}{\ell}$-\cs. Thus, $\x^*$ is $\frac{1}{\ell}$-\cs for every $\ell\in\mathbb{N}$. Suppose $\x^*$ is not exactly \cs. Then, there exists some $k\in\mathbb{N}$ such that $\x^*$ is not $\frac{1}{k}$-\cs, which is a contradiction. Hence, $\x^*$ is both reciprocal and exactly \cs.
\end{proof}
\section{Computational Results}
\label{sec:comp}
In this section, we focus on the computational aspects of determining a $\varepsilon$-reciprocal and $\varepsilon$-core-stable exchange. It is difficult to make statements on the computational complexity of the problem, without making any assumptions on the utilities or the shares. Therefore, we assume that there is some constant $L > 1$ such that:
\begin{itemize}
\item The utility functions are $L$-Lipschitz continuous, i.e., for all $i,j\in \N$, exchanges $\x\in [0,1]^{n\times n}$, and $a\in[-1, 1]$, we have $|u_j(\x + a\cdot\e_{ij}) - u_j(\x)| \le L\cdot |a|$.
\item The share functions are cross-monotone, i.e., for all $i,j$, i.e., $\psi_{ij}(\cdot)$ is monotone non-increasing in $x_{i'j}$ for all $i' \neq i$, and $\pdv{\psi_{ij}}{x_{i'j'}} \leq L$ for all $i',j'$.
\end{itemize}
In particular, these instances exhibit diminishing marginal gains with data, e.g., we capture all instances where $u_i(\cdot)$s are $L$-Lipschitz, continuous monotone submodular, and $\psi_{ij}(x)$ is the Shapley share of $x_{ij}$ in $u_j(x_j)$. We can make the following stronger observations under these assumptions.
\begin{lemma}\label{lem:stronger-assumptions}
Let $\I$ be a data exchange instance with $L$-Lipschitz continuous utility and sharing functions. Then:
\begin{enumerate}
\item[(i)] $\delta = \frac{1}{\max_i u_i(1^{n})} \ge \frac{1}{n L}$.
\item[(ii)] $\alpha(\eps) \ge \frac{\eps}{nL}$ for $\eps \in (0,1)$.
\item[(iii)] For any exchange $\x\in[0,1]^{n\times n}$, if $G(\x, \frac{\eps}{nL})$ is acyclic, then $\x$ is $\eps$-\cs.
\end{enumerate}
\end{lemma}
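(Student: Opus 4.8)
The plan is to obtain all three parts as direct consequences of the $L$-Lipschitz continuity of the utility functions; the cross-monotonicity of the share functions is not needed here, and part~(iii) additionally just recycles \cref{lem:acyclic-cs}.

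For part~(i), since each $u_i$ is normalized we have $u_i(1^{n}) = u_i(1^{n}) - u_i(0^{n})$, and we can pass from the all-zeros bundle to the all-ones bundle of agent $i$ one coordinate at a time, changing a single coordinate by $1$ at each of the $n$ steps. Applying the $L$-Lipschitz bound to each step and summing gives $u_i(1^{n}) \le nL$; hence $\max_i u_i(1^{n}) \le nL$ and $\delta \ge 1/(nL)$ (the claim is vacuous if this maximum is $0$). For part~(ii), fix $i,j$ and an exchange $\x\in I_{ji}$, and take $a = \eps/(nL)$, which lies in $(0,1]$ because $L>1$ and $\eps<1$. The $L$-Lipschitz bound yields $\val_i(\x - a\cdot\e_{ji}) \ge \val_i(\x) - La = \val_i(\x) - \eps/n$, so $a^\eps_{ji}(\x) \ge a = \eps/(nL)$. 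Taking the infimum over $\x\in I_{ji}$ and then the minimum over $i,j\in\N$ preserves this lower bound, so $\alpha(\eps)\ge \eps/(nL)$.

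For part~(iii), note that the edge set of the exchange graph is antitone in the threshold: if $\alpha' \le \alpha$ then $x_{ij} < 1-\alpha$ implies $x_{ij} < 1-\alpha'$, so $E(G(\x,\alpha))\subseteq E(G(\x,\alpha'))$. By part~(ii), $\alpha(\eps)\ge \eps/(nL)$, hence $G(\x,\alpha(\eps))$ is a subgraph of $G(\x,\eps/(nL))$; if the latter is acyclic so is the former, and \cref{lem:acyclic-cs} then gives that $\x$ is $\eps$-\cs. Every step is a routine consequence of the hypotheses, so there is no genuine obstacle; the only points needing (minor) care are verifying $\eps/(nL)\le 1$ from $L>1$ and $\eps<1$ in part~(ii), and handling the degenerate all-zero-utility case in part~(i).
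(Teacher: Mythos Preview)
Your proposal is correct and follows essentially the same approach as the paper's proof: all three parts are derived directly from $L$-Lipschitz continuity and normalization, with part~(iii) reducing to \cref{lem:acyclic-cs} via the subgraph inclusion $E(G(\x,\alpha(\eps)))\subseteq E(G(\x,\eps/(nL)))$. Your argument for part~(ii) is in fact slightly more direct than the paper's (which splits into the cases $u_i(\x)\le \eps/n$ and $u_i(\x)>\eps/n$), since you simply exhibit $a=\eps/(nL)$ as a feasible point for the maximum defining $a^\eps_{ji}(\x)$.
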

\begin{proof} 
For part (i), we note that $|u_i(1^{n}) - u_i(0^{n})| \le n L$ using the Lipschitz continuity of $u_i$ and since each $u_i$ is a function of only $x_{ji}$ for $j\in \N$. Next, recall that the utility functions $u_i$ are normalized, i.e., $u_i(0^{n\times n}) = 0$ for all $i\in \N$. Thus, we observe that $\delta = (\max_i u_i(1^{n\times n}))^{-1} \ge \frac{1}{nL}$.

For part (ii), recall from \cref{def:alpha-eps} that $a_{ji}^\eps(\x) = \max\{ a\in [0,1] : u_i(\x - a\cdot \e_{ji}) \ge u_i(\x) - \eps/n\}$, for any $i,j\in[n]$ and exchange $\x$ with $x_{ji} = 1$. If $u_i(\x) \le \eps/n$, then $a^\eps_{ji}(\x) = 1$. Otherwise, we have $u_i(\x - a^\eps_{ji}(\x)\cdot \e_{ji}) = u_i(\x) - \eps/n$. By the Lipschitz continuity of $u_i$, we get $a^{\eps}_{ji}(\x) \ge \eps/nL$. In either case we see that $a^{\eps}_{ji}(\x) \ge \eps/nL$ for all $i,j\in \N$ and $\x\in I_{ji}$. Since $\alpha(\eps) = \min_{i,j\in \N} \inf_{\x\in I_{ji}} a^\eps_{ji}$, we obtain $\alpha(\eps) \ge \frac{\eps}{nL}$, proving part (ii).

For part (iii), consider an exchange $\x$ such that $G(\x, \frac{\eps}{nL})$ is acyclic. Then for any cycle $C$ in $K_n$, there is some edge $e\in C$ such that $x_e \ge 1 - \eps/nL$. Since $\alpha(\eps) \ge \eps/nL$ from part (ii), we also get $x_e \ge 1-\alpha(\eps)$. Thus the exchange graph $G(\x, \alpha(\eps))$ is acyclic. By \cref{lem:acyclic-cs}, $\x$ is an $\eps$-\cs exchange.
\end{proof}

Moreover, we assume oracle access to the utility functions and shares, i.e., given a rational polynomially bounded $x$, we get oracle access to $u_i(x_i)$ and $\psi_{ij}(x)$. We remark that all our algorithmic guarantees hold, even when we have access to approximate utilities and shares, where approximations are {polynomial in $\varepsilon, 1/L$, and $1/n$}. However, we stick to oracle access to exact utilities and shares for clarity and ease of presentation.

We first present a local search algorithm, which will later be used to show the membership in PLS. Then, we adapt our fixed point formulation in \cref{sec:existence}, to show membership in PPAD, implying that the problem lies in PPAD $\cap$ PLS $=$ CLS.

\paragraph{Perturbation.} 
To prove the results presented in this section, we perform a perturbation of the utilities as follows: for each $i\in \N$, we define $\tilde{u}_i(\x_i) = u_i(\x_i) + (\varepsilon \cdot \sum_j x_{ji})/n$. Also, we introduce a perturbed function $\tilde{\psi}_{ij}(\x) = \psi_{ij}(\x) + \varepsilon x_{ij}/n$. It can be verified that $\tilde{\psi}_{ij}(x)$ satisfies monotonicity, normalization and efficiency and is, therefore, a valid share function. Further, as long as $\psi_{ij}(x)$ is cross-monotone, so is $\tilde{\psi}_{ij}(x)$. Therefore, with $\tilde{u}(\cdot)$ and $\tilde{\psi}_{ij}(\cdot)$, we now have a perturbed instance with monotone utility functions, and cross monotone share functions. 

The primary reason behind the perturbation is to incorporate mild non-satiation for the utilities, i.e., increasing data consumption should cause a strict increase in the utility value at any point. We first claim that any $\varepsilon$-core-stable and $\varepsilon$-reciprocal exchange $x$ w.r.t. to the perturbed utilities will give a $\mathcal{O}(\varepsilon)$-core stable and $\mathcal{O}(\varepsilon)$-reciprocal exchange w.r.t. the original utilities.  

\begin{lemma}
    \label{lem:perturbation}
    If $x$ is $\varepsilon$-core-stable and $\varepsilon$-reciprocal exchange $x$ w.r.t. to the perturbed utilities, then $x$ is a $3\varepsilon$-core stable and $3\varepsilon$-reciprocal exchange w.r.t. the original utilities.  
\end{lemma}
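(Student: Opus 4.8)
The plan is to observe that the perturbations added to the utilities and to the share functions are uniformly bounded in absolute value by $\varepsilon$, so that both the surplus of each agent and the utility gain of each agent in a potential deviation shift by at most $\varepsilon$ when we pass between the perturbed and the original instance. Concretely, I would first record that for every agent $i$ and exchange $\x$ we have $\tilde u_i(\x_i) - u_i(\x_i) = (\varepsilon/n)\sum_j x_{ji} \in [0,\varepsilon]$, since there are at most $n$ summands, each in $[0,1]$. Likewise $\tilde\Psi_i(\x) = \Psi_i(\x) + (\varepsilon/n)\sum_j x_{ij}$, and subtracting these two identities gives $\tilde\Delta_i(\x) = \Delta_i(\x) + (\varepsilon/n)\big(\sum_j x_{ij} - \sum_j x_{ji}\big)$, whose correction term lies in $[-\varepsilon,\varepsilon]$; hence $|\tilde\Delta_i(\x) - \Delta_i(\x)| \le \varepsilon$ for all $i$ and all $\x$.

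From here the reciprocity half is immediate: if $x$ is $\varepsilon$-reciprocal for the perturbed instance then $|\tilde\Delta_i(x)| \le \varepsilon$ for all $i$, so $|\Delta_i(x)| \le |\tilde\Delta_i(x)| + \varepsilon \le 2\varepsilon \le 3\varepsilon$, i.e. $x$ is $3\varepsilon$-reciprocal for the original utilities.

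For the core-stability half I would argue by contraposition. Suppose $x$ is not $3\varepsilon$-\cs for the original utilities, witnessed by a coalition $\N' \subseteq \N$ and an exchange $\x'$ among $\N'$ with $u_i(\x'_i) > u_i(\x_i) + 3\varepsilon$ for every $i \in \N'$. For each such $i$, write $\tilde u_i(\x'_i) - \tilde u_i(\x_i) = \big(u_i(\x'_i) - u_i(\x_i)\big) + (\varepsilon/n)\big(\sum_{j\in\N'} x'_{ji} - \sum_{j\in\N} x_{ji}\big)$; the first term exceeds $3\varepsilon$ while the second is at least $-\varepsilon$ (again because each of the two sums lies in $[0,n]$), so $\tilde u_i(\x'_i) > \tilde u_i(\x_i) + 2\varepsilon > \tilde u_i(\x_i) + \varepsilon$. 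Thus $(\N',\x')$ also $\varepsilon$-blocks $x$ in the perturbed instance, contradicting that $x$ is $\varepsilon$-\cs there. Hence $x$ is $3\varepsilon$-\cs for the original utilities, completing the proof.

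There is no genuine obstacle here — this is a bookkeeping lemma. The only point requiring mild care is tracking signs: the utility perturbation must be nonnegative (so a deviation's gain shrinks by at most $\varepsilon$ when re-expressed through $\tilde u$), whereas the net surplus perturbation is two-sided of magnitude at most $\varepsilon$; both facts follow from $x_{ij}\in[0,1]$ together with there being at most $n$ terms in each sum. Note in passing that the argument actually delivers $2\varepsilon$ in both parts, so the stated $3\varepsilon$ is comfortably slack.
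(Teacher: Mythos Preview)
Your proof is correct and follows essentially the same bookkeeping approach as the paper: bound the additive perturbation in utilities and surpluses, then transfer the $\varepsilon$-guarantees across. The only minor difference is that you compute the net correction $\tilde\Delta_i - \Delta_i = (\varepsilon/n)\big(\sum_j x_{ij} - \sum_j x_{ji}\big)$ directly and exploit the cancellation to get $|\tilde\Delta_i - \Delta_i| \le \varepsilon$, whereas the paper applies the triangle inequality to the utility and contribution terms separately and obtains the looser bound $2\varepsilon$; as you note, this means your argument actually yields $2\varepsilon$ in both parts, so the paper's $3\varepsilon$ is indeed slack.
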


\begin{proof}
    First observe that for any $\x$, we have $|\tilde{u}_i(\x) - {u}_i(\x)| \leq {\varepsilon \sum_{j} x_{ji}}/{n}$ for all $i$.  Also,  we have $|\tilde{\psi}_{ij}(\x) - \psi_{ij}(\x)| = \varepsilon x_{ij}/n$. Finally, let $\tilde{\Delta}_i(\x)$ denote the surplus of agent $i$ w.r.t. $\tilde{u}_i(\cdot)$. Observe that
    \begin{align*}
     |\tilde{\Delta}_i(\x) - \Delta_i(\x)| &\leq |\tilde{u}_i(\x) - {u}_i(\x)| + \sum_j|\tilde{\psi}_{ij}(\x) - \psi_{ij}(\x)|\\
     &\leq \frac{\varepsilon \sum_j x_{ji}}{n} +  \frac{\varepsilon \sum_j x_{ij}}{n}\\
     &\leq {2\varepsilon}. 
    \end{align*}
   Thus, if $\x$ is $\varepsilon$-reciprocal w.r.t. perturbed utilities (means that $|\tilde{\Delta}_i(\x)| \leq \varepsilon$ for all $i$), then $\x$ is $\varepsilon + 2\varepsilon = 3\varepsilon$ reciprocal w.r.t. original utilities ($|\Delta_i(\x)| \leq |\tilde{\Delta}_i(\x)| + 2\varepsilon \leq 3\varepsilon$). Similarly, if $\x$ is $\varepsilon$-core stable w.r.t. perturbed utilities, then $\x$ is $\varepsilon + 2\varepsilon = 3\varepsilon$ core-stable w.r.t. original utilities: since $\x$ is core-stable, there exists no $S \subseteq [n]$ and no exchange $\y$ among agents in $S$ such that $\tilde{u}_i(\y) > \tilde{u}_i(\x) + \varepsilon$. Since $|\tilde{u}_i(\z) -{u}_i(\z)| \leq \varepsilon \sum_j z_{ji}/n \leq \varepsilon$ for $\z \in \{\x,\y\}$, we have that there exists no $S \subseteq [n]$ and no $\y$ among agents in $S$, such that $u_i(\y) > u_i(\x) + 3\varepsilon$.
\end{proof}

Therefore, in what follows, we provide a local search algorithm and prove PLS membership only for perturbed instances. In particular, in all instances we consider from here on, increasing any $x_{ij}$ by $\delta$ increases $u_j(\x_j)$ by at least $\varepsilon \delta /n $.

\subsection{Local Search Algorithm}
Throughout the algorithm, we maintain the invariant that $G(\x, \varepsilon/(nL))$ is acyclic. This ensures that our $\x$ is always $\varepsilon$-core stable by \cref{lem:stronger-assumptions}. We will start out the algorithm by setting $x_{ij} = 1$ for all $i,j$, in which case $G(\x,\varepsilon/(nL))$ is trivially acyclic. 

To achieve reciprocity, our algorithm will reduce the high surpluses in an exchange, while still maintaining $G(\x,\varepsilon/(nL))$ is acyclic. Given $\x$, let $\Delta(\x) = \langle \Delta_{\sigma(1)}(\x), \Delta_{\sigma(2)}(\x), \dots, \Delta_{\sigma(n)}(\x) \rangle$ be the \emph{sorted surplus profile} of $\x$, where $\sigma$ is a permutation of $[n]$ s.t. $\Delta_{\sigma(1)}(\x) \geq \Delta_{\sigma(2)}(\x) \geq \dots \geq \Delta_{\sigma(n)}(\x)$. Define $P(\x) = \sum_{i \in \N} (nL/\varepsilon)^{2(n-i)} \cdot \Delta_{\sigma(i)}(\x)$. Given an exchange $\x$ which is not $\varepsilon$-reciprocal, our algorithm determines an exchange $\x'$ such that $G(\x', \varepsilon/(nL))$ is acyclic, and $P(\x') < P(\x) - L/2n^2$. 

Our algorithm first identifies a set $S$ of agents with high surplus as follows: let $i$ be an agent with maximum surplus. If $\Delta_i(\x) \leq \varepsilon/n$, terminate and return $\x$ as the desired exchange. Otherwise, let $S \gets \{i\}$. Let $i'$ be the agent in  $\N \setminus S$ with maximum surplus. While $\Delta_{i'}(\x) \geq \min_{i \in S} \Delta_i(\x) - \varepsilon/n^2$, add $i'$ to $S$. Note that at termination, $S$ will only comprise of agents with strictly positive surplus, as $\min_{i \in S} \Delta_i(\x) \geq \max_{i \in S} \Delta_i(\x) - |S| \cdot \varepsilon/n^2 > \varepsilon/n - n \cdot \varepsilon/n^2 = 0$. This also implies that $|S| < n$ as if there is an agent with strictly positive surplus, there must be at least one agent with strictly negative surplus. This simple claim would be useful later in our analysis.

\begin{algorithm}[t]
    \caption{Selecting $S$}\label{alg-S}
    Input: Exchange $\x$\\
    Output: Set $S$
    \begin{algorithmic}[1]
        \State $i \gets \arg\max_j \Delta_j(\x)$
        \State $S \gets \{i\}$
        \While{$\max_{j \in \N \setminus S} \Delta_j(\x) \geq \min_{i \in S} \Delta_i(\x) - \varepsilon/n^2$ }
            \State $j \gets \argmax_{j \in \N\setminus S} \Delta_j(\x)$.
            \State $S \gets S \cup \{j\}$
        \EndWhile
        \State \Return $S$
    \end{algorithmic}
\end{algorithm}

\begin{claim}
    \label{claim:S&n}
     We have $|S| < n$.
\end{claim}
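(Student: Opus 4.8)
The plan is to make explicit the reasoning already sketched in the paragraph preceding the claim. The argument splits into two steps: first show that \emph{every} agent that \cref{alg-S} places into $S$ has strictly positive surplus, and then obtain a contradiction from the assumption $|S| = n$ by invoking $\sum_{i \in \N} \Delta_i(\x) = 0$.

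For the first step I would track the minimum surplus in $S$ as $S$ is built up. The agent $i_h$ with which $S$ is initialized satisfies $\Delta_{i_h}(\x) > \varepsilon/n$, since otherwise the algorithm terminates before populating $S$; note the inequality is strict because the termination test is $\Delta_{i_h}(\x) \le \varepsilon/n$. Each time the while-loop adds an agent $j$, its guard $\Delta_j(\x) \ge \min_{i \in S} \Delta_i(\x) - \varepsilon/n^2$ ensures that the minimum surplus over $S$ drops by at most $\varepsilon/n^2$ per insertion. Since $S \subseteq \N$, at most $n$ agents are ever inserted, so at termination
\[
\min_{i \in S} \Delta_i(\x) \;\ge\; \max_{i \in S} \Delta_i(\x) - |S|\cdot \frac{\varepsilon}{n^2} \;>\; \frac{\varepsilon}{n} - n\cdot\frac{\varepsilon}{n^2} \;=\; 0,
\]
using $\max_{i \in S}\Delta_i(\x) = \Delta_{i_h}(\x) > \varepsilon/n$ and $|S| \le n$. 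Hence $\Delta_i(\x) > 0$ for every $i \in S$.

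For the second step, suppose for contradiction that $|S| = n$, so $S = \N$. Then, by the positivity just established, $\sum_{i \in \N} \Delta_i(\x) = \sum_{i \in S} \Delta_i(\x) > 0$, contradicting the efficiency property of the share functions, which gives $\sum_{i \in \N} \Delta_i(\x) = 0$ for every exchange $\x$. Therefore $|S| < n$. The argument is essentially bookkeeping; the only point that needs care is keeping the bound $\min_{i \in S} \Delta_i(\x) > 0$ strict, which is exactly why we rely on the strict inequality $\Delta_{i_h}(\x) > \varepsilon/n$ coming from the termination test rather than on a weak inequality.
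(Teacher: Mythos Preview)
Your proposal is correct and follows essentially the same argument as the paper: the paper's proof (given inline just before the claim) establishes $\min_{i\in S}\Delta_i(\x)\ge \max_{i\in S}\Delta_i(\x)-|S|\cdot\varepsilon/n^2>\varepsilon/n-n\cdot\varepsilon/n^2=0$ and then notes that a strictly positive surplus forces some agent to have strictly negative surplus (since $\sum_i\Delta_i(\x)=0$), hence $|S|<n$. Your write-up is simply a more explicit version of this two-step argument, including the care about the strict inequality coming from the termination test.
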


Intuitively, set $S$ is the set of agents with high surplus. In every iteration, our algorithm will determine an allocation $\x'$ such that 
\begin{enumerate}[label=(\roman*)]
    \item $\Delta_i(\x') \leq \Delta_i(\x)$ for all $i \in S$ with at least one agent $i^* \in S$ having $\Delta_i(\x') \leq \Delta_i(\x) - \gamma$, for some sufficiently positive $\gamma$ and 
    \item the surplus of no agent in $\N \setminus S$ will be more than the surplus of any agent in $S$, i.e.,  $\max_{i \in \N \setminus S} \Delta_i(\x') \leq \min_{i \in S}\Delta_i(\x')$. 
\end{enumerate}

We show that properties (i) and (ii) with an appropriate choice of $\gamma$, ensure $P(\x') \leq P(\x)- L/2n^2$. We determine an $\x'$ satisfying (i) and (ii), by either reducing the flow from agents in $S$ to $\N \setminus S$, or increasing the flow from agents in $\N \setminus S$ to $S$, while still maintaining acyclicity of the exchange graph $G(\x, \varepsilon/(nL))$. 

\subsubsection{Decreasing Data Flow from $S$ to $\N \setminus S$} Note that decreasing flow from any agent $i \in S$ to any agent $j \in \N \setminus S$ may create cycles in $G(\x, \varepsilon/(nL))$, as there may be a path from $j$ to $i$ in $G(\x, \varepsilon/(nL))$. \emph{Therefore, we only decrease data flow from $S$ to $\N \setminus S$ when there is no edge from $\N \setminus S$ to $S$ in $G(\x, \varepsilon/(nL))$.} 

We first observe that there exists an agent $j \in \N \setminus S$, who gets sufficient utility-flow from agents in $S$. 

\begin{lemma}\label{obs:non-zeroflow}
If $\max_i \Delta_i(\x) > \eps/n$, there exists an agent $j \in \N \setminus S$, such that $\sum_{i \in S} \psi_{ij}(\x) > \varepsilon/n^2$.
\end{lemma}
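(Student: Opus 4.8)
The plan is to exploit the \emph{efficiency} property of the share functions together with the fact, established just above, that every agent in $S$ has strictly positive surplus. First I would note that the agent $i_h$ with the globally largest surplus belongs to $S$ by construction (line 1 of Algorithm~\ref{alg-S}), so $\Delta_{i_h}(\x) = \max_i \Delta_i(\x) > \eps/n$ by hypothesis; since every agent of $S$ has strictly positive surplus, summing gives $\sum_{i \in S} \Delta_i(\x) > \eps/n$. Also $\N \setminus S \neq \emptyset$ since $|S| < n$ by Claim~\ref{claim:S&n}, so the statement is not vacuous.

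Next I would expand the left-hand side using $\Delta_i(\x) = \sum_{j \in \N} \psi_{ij}(\x) - u_i(\x)$:
\[
\sum_{i \in S} \Delta_i(\x) = \sum_{i \in S}\sum_{j \in S} \psi_{ij}(\x) \;+\; \sum_{i \in S}\sum_{j \in \N \setminus S} \psi_{ij}(\x) \;-\; \sum_{i \in S} u_i(\x).
\]
The key observation is that the intra-$S$ data flow contributes at most $\sum_{j\in S}u_j(\x)$: by efficiency, $\sum_{i \in \N}\psi_{ij}(\x) = u_j(\x)$, and dropping the nonnegative terms with $i \notin S$ gives $\sum_{i \in S}\psi_{ij}(\x) \le u_j(\x)$ for each $j \in S$. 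Summing over $j \in S$ yields $\sum_{i \in S}\sum_{j \in S}\psi_{ij}(\x) \le \sum_{j \in S} u_j(\x)$, which exactly cancels the term $\sum_{i \in S} u_i(\x)$. Hence
\[
\sum_{i \in S}\sum_{j \in \N \setminus S} \psi_{ij}(\x) \;\ge\; \sum_{i \in S} \Delta_i(\x) \;>\; \frac{\eps}{n}.
\]

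Finally I would average over the at most $n-1$ agents $j \in \N \setminus S$: some $j \in \N \setminus S$ must satisfy $\sum_{i \in S}\psi_{ij}(\x) > (\eps/n)/n = \eps/n^2$, which is exactly the claim. The argument is a short calculation and uses neither cross-monotonicity nor Lipschitzness; the only point requiring a moment's thought is recognizing that the contributions of agents in $S$ to one another are bounded (via efficiency and nonnegativity of shares) by the utilities of agents in $S$, and therefore cancel against the $-\sum_{i\in S} u_i(\x)$ term, so that only the outgoing flow from $S$ to $\N \setminus S$ is left to absorb the strictly positive aggregate surplus of $S$.
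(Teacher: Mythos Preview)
Your argument is correct and is essentially the same as the paper's: both establish $\sum_{i\in S}\sum_{j\in \N\setminus S}\psi_{ij}(\x)>\eps/n$ from $\sum_{i\in S}\Delta_i(\x)>\eps/n$ using efficiency and nonnegativity of the share functions, and then pigeonhole over $j\in\N\setminus S$. The only cosmetic difference is that the paper expands $\sum_{i\in\N\setminus S}\Delta_i(\x)$ (obtaining an exact cancellation of the intra-$(\N\setminus S)$ terms and then dropping the nonnegative $\N\setminus S\to S$ flow), whereas you expand $\sum_{i\in S}\Delta_i(\x)$ and bound the intra-$S$ flow by $\sum_{j\in S}u_j(\x)$; these are mirror images of the same computation.
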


\begin{proof}
Clearly, $S$ comprises of the agent with highest surplus, say $i^*$. We have $\sum_{i \in S} \Delta_i(\x) \geq \Delta_{i^*}(\x) > \varepsilon/n$. Since $\sum_{i \in \N} \Delta_i(\x)  =  \sum_{i \in S} \Delta_i(\x) + \sum_{i \in \N \setminus S} \Delta_i(\x) = 0$, and $\sum_{i \in S} \Delta_i(\x) > \varepsilon/n$, we have $\sum_{i \in \N \setminus S} \Delta_i(\x) < -\varepsilon/n$. Also note that:
\begin{align*}
   \sum_{i \in \N \setminus S}\Delta_i(\x) &= \sum_{i \in \N \setminus S} \sum_{j \in \N} \psi_{ij}(\x) - \sum_{ i \in \N \setminus S} u_i(\x)\\
    &=\bigg(\sum_{i \in \N \setminus S} \sum_{j \in \N \setminus S} \psi_{ij}(\x) +  \sum_{i \in \N \setminus S} \sum_{j \in S} \psi_{ij}(\x) \bigg) - \bigg(\sum_{ i \in \N \setminus S} \sum_{j \in \N\setminus S} \psi_{ji}(\x) + \sum_{ i \in \N \setminus S} \sum_{j \in S} \psi_{ji}(\x) \bigg)\\
    &=\sum_{i \in \N \setminus S} \sum_{j \in S} \psi_{ij}(\x) - \sum_{i \in S} \sum_{j \in \N \setminus S} \psi_{ij}(\x)\\
    &> -\sum_{i \in S} \sum_{j \in \N \setminus S} \psi_{ij}(\x)
\end{align*}
Therefore, we have $\sum_{i \in S} \sum_{j \in \N \setminus S} \psi_{ij}(\x) > \varepsilon/n$. This implies that there exists a $j \in \N \setminus S$ such that $\sum_{i \in S} \psi_{ij}(\x) > \varepsilon/n^2$. 
\end{proof}

The next step in our algorithm is to reduce the flow from $S$ to $j\in \N\setminus S$. Note that due to cross-monotonicity, if we reduce $x_{ij}$ for some $i \in S$, this could cause $\psi_{i'j} (\x)$ (and consequently $\Delta_{i'}(\x)$) for some $i' \in S$ and $i' \neq i$ to increase. To combat this problem, we will not decrease any $x_{ij}$ such that it can cause a huge reduction in the surplus of agent $i$. In particular, we do not want to reduce any $\Delta_i(\x)$ more that  $\varepsilon/2n^3$, in an iteration. So for each agent $i$, we set our lower bound $\delta_i = \Delta_i - \varepsilon/2n^3$. Our algorithm is as follows: while there exists an agent $i$ such that $\Delta_i(\x) - \delta_i \geq \varepsilon/4n^3$ and $x_{ij} > 0$, then reduce $x_{ij}$ until either $x_{ij} = 0$ or $\Delta_i(\x) = \delta_i$\footnote{This is achieved through binary search. Through binary search, we can determine $x_{ij}$ such that $\Delta_i(\x) =\delta_i + \mathit{err}_{ij}$, where $\mathit{err}_{ij} =(\varepsilon/nL)^{\textup{poly}(n)}$. We assume that $\mathit{err}_{ij} = 0$ for ease of presentation. One can verify that all our guarantees (Lemmas~\ref{lem:termination-case1},~\ref{lem:no-increase-surplus},~\ref{lem:decrease-surplus},~\ref{lem:no-cross},~\ref{lem:main-technical},~\ref{lem:main-case-2}) work with $\mathit{err}_{ij} = (\varepsilon/nL)^{\textup{poly}(n)}$.}. Note that we only reduce the data flow throughout this phase from $S$ to $\N \setminus S$. We first show that this procedure terminates in time polynomial in $n, 1/\varepsilon, L$.

\begin{algorithm}[t]
    \caption{Decreasing flow from $S$ to $\N \setminus S$}\label{alg-D}
    Input: Exchange $\x$\\
    Output: Exchange $\x'$
    \begin{algorithmic}[1]
        \State \textbf{Determine} $S$, by Algorithm~\ref{alg-S}($\x$).
        \If{$S = [n]$}
          \State \textbf{Return} $\x$.
        \Else 
        \State Set $\delta_i = \Delta_i(\x) - \varepsilon/2n^3$ for all $i \in S$.
        \While{$\exists i \in S$ s.t. $\Delta_i(\x) - \delta_i > \varepsilon/4n^3$}
            \State $x_{\delta} \gets$ smallest $r$ such that $\Delta_i(\x - r \cdot \mathbf{e}_{ij}) = 0$. {If no such $r$ exists, then set $x_{\delta} = x_{ij}$}  
            \State $\x = \x - x_{\delta}\cdot\mathbf{e}_{ij}$ 
            \State \textbf{Update} $\Delta_i(\x)$  for all $i \in [n]$.
        \EndWhile
        \State \Return $\x$
        \EndIf
    \end{algorithmic}
\end{algorithm}

\begin{lemma}
    \label{lem:termination-case1}
      Algorithm~\ref{alg-D} terminates in $\textup{poly}(n, 1/\varepsilon, L)$ time.
\end{lemma}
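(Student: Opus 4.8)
The plan is to bound the number of iterations of the \textbf{while} loop in Algorithm~\ref{alg-D} by a polynomial in $n, 1/\eps, L$, since each iteration clearly runs in polynomial time (computing $S$, the binary search for $x_\delta$, and updating the surpluses all take $\textup{poly}(n,1/\eps,L)$ time by the oracle assumption and the remark on binary-search precision). The key structural fact I would establish first is a \emph{monotonicity invariant}: throughout the execution of Algorithm~\ref{alg-D}, every update has the form $\x \gets \x - x_\delta \cdot \e_{ij}$ with $i \in S$ and $j \in \N\setminus S$ the fixed agent from Lemma~\ref{obs:non-zeroflow}, so the data flow from $S$ to $j$ (i.e.\ each $x_{ij}$ with $i\in S$) only ever \emph{decreases}, and in particular the set $\{i \in S : x_{ij} > 0\}$ is non-increasing over time. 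This is where I would need to be slightly careful about what ``$\exists i\in S$'' in the loop condition means: I would argue that in each iteration we pick some $i\in S$ witnessing $\Delta_i(\x) - \delta_i > \eps/4n^3$ \emph{with $x_{ij} > 0$} (if $x_{ij}=0$ the iteration cannot reduce $\Delta_i$, so such an $i$ is useless and the informal text's loop guard should be read as requiring $x_{ij}>0$).

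Next I would do the per-iteration progress analysis, splitting on which branch of line~7 fires. \textbf{Case (a):} the binary search finds $r < x_{ij}$ with $\Delta_i(\x - r\e_{ij}) = \delta_i$ (reading $\delta_i$ for the ``$=0$'' in the pseudocode, matching the surrounding text). Then $\Delta_i$ drops by exactly $\Delta_i(\x) - \delta_i > \eps/4n^3$. By $L$-Lipschitzness of $u_i$ and the bound $\pdv{\psi_{ij}}{x_{i'j'}} \le L$ on the share functions, decreasing $x_{ij}$ by $r$ changes $\Delta_i$ by at most $O(nL)\cdot r$ (the term $u_i$ changes by at most $L r$, and each of the $\le n$ shares $\psi_{ij'}$ changes by at most $Lr$; note $\psi_{i'j}$ for $i'\ne i$ changes but that affects $\Delta_{i'}$, not $\Delta_i$). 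Hence $r \ge \Omega(\eps / n^4 L)$, so $\sum_{i\in S} x_{ij}$ drops by $\Omega(\eps/n^4L)$ in this iteration. \textbf{Case (b):} $x_\delta = x_{ij}$, i.e.\ the whole remaining flow from $i$ to $j$ is removed, so $x_{ij}$ becomes $0$. This branch can happen at most $|S| - 1 < n$ times in total across the entire run, because once $x_{ij}=0$ it stays $0$ (monotonicity invariant) and $j\notin S$ so $x_{jj}$ is irrelevant.

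Finally I would combine the two cases. Since $\sum_{i\in S} x_{ij} \le n$ always and starts at some value $\le n$, Case~(a) can occur at most $n / \Omega(\eps/n^4L) = O(n^5 L/\eps)$ times, and Case~(b) occurs at most $n$ times; so the loop runs at most $O(n^5 L/\eps) = \textup{poly}(n,1/\eps,L)$ times. Combined with the $\textup{poly}$ cost per iteration, Algorithm~\ref{alg-D} terminates in $\textup{poly}(n,1/\eps,L)$ time. The main obstacle I anticipate is pinning down the per-iteration decrease in Case~(a): one must check that no single update can fail to make progress, which relies on (i) the loop picking an $i$ with $x_{ij}>0$, and (ii) the chain-rule/Lipschitz bound $|\Delta_i(\x-r\e_{ij}) - \Delta_i(\x)| \le O(nL) r$ to convert the guaranteed $\eps/4n^3$ surplus drop into a quantitative lower bound on $r$; everything else is bookkeeping on top of the monotonicity invariant.
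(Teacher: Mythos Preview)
Your proposal is correct and follows essentially the same approach as the paper: split each iteration into the two branches ($x_{ij}$ hits $0$, which can happen at most $n$ times by monotonicity, versus $\Delta_i$ hits $\delta_i$, which forces a quantitative decrease in $x_{ij}$ via Lipschitzness), and then bound the total number of iterations by $\textup{poly}(n,1/\eps,L)$.

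The only notable difference is that your Case~(a) bound is looser than necessary. When you reduce $x_{ij}$ (an \emph{outgoing} flow from $i$), the utility $u_i(\x_i)$ does not change at all since it depends only on incoming flows, and the shares $\psi_{ij'}(\x_{j'})$ for $j'\neq j$ also do not change; hence the entire drop in $\Delta_i$ equals the drop in the single term $\psi_{ij}(\x_j)$, giving $|\Delta_i(\x-r\e_{ij})-\Delta_i(\x)| \le L r$ rather than $O(nL)r$. This yields $r \ge \eps/(4n^3 L)$ instead of your $\Omega(\eps/n^4 L)$, but of course either bound suffices for the $\textup{poly}$ claim.
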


\begin{proof}
In every iteration of Algorithm~\ref{alg-D}, some $x_{ij} = 0$ or some $\Delta_i(\x) = \delta_i$. The former ($x_{ij} = 0$) can happen only $n$ times. For each time we reduce the surplus of an agent $i$ to $\delta_i$, we decrease $\Delta_i(\x)$ by at least $\varepsilon/4n^3$ (otherwise, we would have terminated our phase). Equivalently, we would have decreased $\psi_{ij}(\x)$ by at least $\varepsilon/4n^3$ by decreasing $x_{ij}$. Since the gradients of the utility functions are upper-bounded by $L$, this implies that $x_{ij}$ is reduced by at least $\varepsilon/(4n^3L)$. Thus, every time $\Delta_i(\x) = \delta_i$, $x_{ij}$ reduces by at least $\varepsilon/4n^3L$, implying that the number of such iterations is $\textup{poly}(n,1/\varepsilon,L)$.
\end{proof}

Note that in an iteration, if $\Delta_i(\x)$ is reduced to $\delta_i$, it can increase in the future iterations of Algorithm~\ref{alg-D}, as some $x_{i'j}$ with $i' \neq i$ may decrease, causing an increase in $\psi_{ij}(\x)$ (and consequently $\Delta_i(\x)$). However, we prove that no surpluses in $S$ would have increased at termination.

\begin{lemma}
    \label{lem:no-increase-surplus}
     When Algorithm~\ref{alg-D} terminates, no surplus in $S$ will increase, i.e., $\Delta_i(\x') \leq \Delta_i(\x)$ for all $i \in S$.
\end{lemma}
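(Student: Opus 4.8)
The plan is to argue by contradiction, exploiting the fact that Algorithm~\ref{alg-D} modifies only the flows $x_{ij}$ with $i\in S$ into the single receiver $j\in\N\setminus S$ (the agent identified by Lemma~\ref{obs:non-zeroflow}). The crucial structural observation I would establish first is that, for any agent $i'\in S$, the only quantity in $\Delta_{i'}(\x)=\sum_{h\in\N}\psi_{i'h}(\x_h)-u_{i'}(\x_{i'})$ that can change during the execution of Algorithm~\ref{alg-D} is the single term $\psi_{i'j}(\x_j)$: the bundle $\x_{i'}$ (column $i'$) is untouched since $i'\neq j$, so $u_{i'}$ is unchanged; and $\psi_{i'h}(\x_h)$ depends only on column $h$, which is unchanged for all $h\neq j$. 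Writing $\x$ for the input to Algorithm~\ref{alg-D} and $\x'$ for its output (which exists by Lemma~\ref{lem:termination-case1}), this yields the identity $\Delta_{i'}(\x')-\Delta_{i'}(\x)=\psi_{i'j}(\x'_j)-\psi_{i'j}(\x_j)$ for every $i'\in S$.

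Next I would use this identity to derive the contradiction. Assume some $\tilde i\in S$ has $\Delta_{\tilde i}(\x')>\Delta_{\tilde i}(\x)$. Then $\psi_{\tilde i j}(\x'_j)>\psi_{\tilde i j}(\x_j)\ge 0$, so $\psi_{\tilde i j}(\x'_j)>0$, and the normalization axiom of share functions forces $x'_{\tilde i j}>0$. At the same time, since $\delta_{\tilde i}=\Delta_{\tilde i}(\x)-\varepsilon/2n^3$ is fixed at the start of the procedure, we get $\Delta_{\tilde i}(\x')-\delta_{\tilde i}=\bigl(\Delta_{\tilde i}(\x')-\Delta_{\tilde i}(\x)\bigr)+\varepsilon/2n^3>\varepsilon/2n^3>\varepsilon/4n^3$. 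Hence at the output the agent $\tilde i$ satisfies both $x'_{\tilde i j}>0$ and $\Delta_{\tilde i}(\x')-\delta_{\tilde i}>\varepsilon/4n^3$, which is precisely the condition that makes the while loop of Algorithm~\ref{alg-D} run another iteration — contradicting termination. So no such $\tilde i$ exists, giving $\Delta_i(\x')\le\Delta_i(\x)$ for all $i\in S$.

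I expect the only delicate point — and it is bookkeeping rather than a real obstacle — to be a careful justification of the structural identity: one must observe that no coordinate outside column $j$ is ever altered, that $u_{i'}$ depends on column $i'$ alone with $i'\neq j$, and that $\psi_{i'h}(\cdot)$ sees only column $h$. The rest follows at once from normalization and from reading off the loop guard of Algorithm~\ref{alg-D}. The degenerate branch $S=[n]$ returns $\x'=\x$ and makes the lemma trivial; by Claim~\ref{claim:S&n} this branch is anyway never reached when Algorithm~\ref{alg-D} is invoked.
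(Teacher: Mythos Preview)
Your proof is correct and follows essentially the same approach as the paper: both argue by contradiction, use that only the column of the distinguished receiver $j$ changes so that $\Delta_{i'}(\x')-\Delta_{i'}(\x)=\psi_{i'j}(\x'_j)-\psi_{i'j}(\x_j)$, and then combine the normalization axiom with the loop guard of Algorithm~\ref{alg-D} to reach a contradiction with termination. Your version makes the structural identity explicit up front and derives $x'_{\tilde i j}>0$ directly from $\psi_{\tilde i j}(\x'_j)>0$, whereas the paper phrases the same step contrapositively (``the only reason the algorithm did not decrease flow is $x'_{ij}=0$''), but the logic is identical.
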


\begin{proof}
 Assume otherwise. There exists an agent $i \in S$ such that $\Delta_i(\x') > \Delta_i(\x)$. Then, we have $\Delta_i(\x') - \delta_i > \Delta_i(\x) - \delta_i = \varepsilon/2n^3$. Thus, the only reason, the algorithm did not decrease flow from $i$ to $j$ is $x'_{ij} = 0$. However if $x'_{ij} = 0$, then $\psi_{ij}(\x') \leq \psi_{ij} (\x)$. Since $\psi_{ij'}(\x') = \psi_{ij'}(\x)$ for all other $j' \neq j$ (as we do not alter the data flow incoming to agent $j'$), we have
 \[
     \Delta_i(\x') = \sum_{j} \psi_{ij}(\x') - u_i(\x'_i)\leq \sum_{j} \psi_{ij}(\x) - u_i(\x_i)=\Delta_i(\x),
\]
 which is a contradiction.
\end{proof}

Observe that Algorithm~\ref{alg-D} has an inverse effect on the surpluses of the agents in $\N \setminus S$. Throughout the algorithm, for any agent $i \in \N \setminus S$, $\sum_j \psi_{ij}(\x)$ can only increase (when data inflow to some other agent in $\N \setminus S$ is decreased), and $u_i(\x_i)$ can only decrease (if data inflow to $i$ from $S$ is decreased), implying that the surpluses of the agents in $\N \setminus S$ cannot decrease throughout Algorithm~\ref{alg-D}.   

\begin{observation}\label{obs:low-surplusincrease}
When Algorithm~\ref{alg-D} terminates, no surpluses in $\N \setminus S$ can decrease, i.e.,  we have $\Delta_i(\x') \geq \Delta_i(\x)$ for all $i \in \N \setminus S$.
\end{observation}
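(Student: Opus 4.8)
The plan is to fix an agent $i \in \N \setminus S$ and track the two terms of $\Delta_i(\x) = \sum_{j} \psi_{ij}(\x) - u_i(\x_i)$ across the run of Algorithm~\ref{alg-D}. The structural fact that drives everything is that every update performed in the while-loop has the form $\x \gets \x - r\cdot\e_{\ell j}$ with $\ell \in S$ and $j \in \N \setminus S$; in particular, no coordinate $x_{\ell j}$ with $j \in S$ is ever altered, and no coordinate $x_{i' j}$ with $i' \notin S$ is ever altered. I would record this observation first and then split into the two terms.

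First I would handle the utility term. The bundle $\x_i$ of agent $i \in \N\setminus S$ can only change through decreases of coordinates $x_{\ell i}$ with $\ell \in S$ — this happens only when $i$ itself plays the role of the receiver $j$ — so by monotonicity of $u_i$ we get $u_i(\x'_i) \le u_i(\x_i)$, where $\x'$ is the exchange at termination. Next I would handle the contribution term $\sum_j \psi_{ij}$: for $j \in S$ the inflow bundle $\x_j$ is never touched, so $\psi_{ij}(\x'_j) = \psi_{ij}(\x_j)$; for $j \in \N\setminus S$ the bundle $\x_j$ changes only by decreasing coordinates $x_{\ell j}$ with $\ell \in S$, hence with $\ell \ne i$, so cross-monotonicity (the fact that $\psi_{ij}$ is non-increasing in $x_{\ell j}$ for $\ell \ne i$) gives $\psi_{ij}(\x'_j)\ge \psi_{ij}(\x_j)$. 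Summing over $j$ yields $\sum_j \psi_{ij}(\x') \ge \sum_j \psi_{ij}(\x)$, and combining the two inequalities gives $\Delta_i(\x') = \sum_j \psi_{ij}(\x') - u_i(\x'_i) \ge \sum_j \psi_{ij}(\x) - u_i(\x_i) = \Delta_i(\x)$.

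I expect no genuine obstacle here; the argument is essentially the prose preceding the statement, made precise. The one subtlety worth stating carefully is the bookkeeping for cross-monotonicity: it is exactly the contribution $\psi_{ij}$ of an agent $i$ \emph{different from} the agent $\ell$ whose outflow $x_{\ell j}$ is decreased that is guaranteed to weakly rise, and the required hypothesis $\ell \ne i$ holds automatically because all decreased flows originate from agents in $S$ while $i \in \N\setminus S$. If preferred, the claim can equivalently be established per iteration — each single update $\x \gets \x - r\cdot\e_{\ell j}$ weakly increases $\Delta_i$ for every $i \in \N\setminus S$ by the same two-term argument — and then chained over the $\textup{poly}(n,1/\varepsilon,L)$ iterations guaranteed by Lemma~\ref{lem:termination-case1}.
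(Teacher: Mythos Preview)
Your proposal is correct and follows exactly the approach the paper sketches in the prose immediately preceding the observation: the contribution term $\sum_j \psi_{ij}$ can only weakly increase by cross-monotonicity (since only coordinates $x_{\ell j}$ with $\ell\in S\ne i$ are decreased), and the utility term $u_i$ can only weakly decrease by monotonicity. Your write-up is simply a more careful and explicit version of the paper's one-sentence justification.
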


Now, we show that at least one agent in $S$ will witness a strict reduction in its surplus.

\begin{lemma}\label{lem:decrease-surplus}
When Algorithm~\ref{alg-D} terminates, at least one surplus in $S$ will strictly decrease by $\varepsilon/4n^3$.
\end{lemma}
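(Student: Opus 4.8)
The plan is to pin down, already in the input exchange $\x$, one agent $i^\star\in S$ that is forced to shed a definite amount of surplus by the time Algorithm~\ref{alg-D} halts, and then extract the loss purely from the loop's stopping condition. First I would invoke Lemma~\ref{obs:non-zeroflow} (valid since the phase is only run when $\max_i\Delta_i(\x)>\varepsilon/n$; otherwise the local search has already terminated) to obtain an agent $j\in\N\setminus S$ with $\sum_{i\in S}\psi_{ij}(\x)>\varepsilon/n^2$. Since $|S|<n$ by Claim~\ref{claim:S&n}, an averaging step yields some $i^\star\in S$ with $\psi_{i^\star j}(\x)>\varepsilon/n^3$. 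This $j$ is exactly the recipient whose incoming flow the phase reduces, so $i^\star$ is the natural candidate for the agent whose surplus drops, and recording its initial contribution $\psi_{i^\star j}(\x)$ is the key quantity to carry forward.

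Next I would analyze the state at termination. The while loop of Algorithm~\ref{alg-D} stops only when for every $i\in S$ either $x'_{ij}=0$ or the current surplus satisfies $\Delta_i(\x')-\delta_i<\varepsilon/4n^3$, where $\delta_i=\Delta_i(\x)-\varepsilon/2n^3$ is fixed at the start. Applying this to $i^\star$: in the second case, $\Delta_{i^\star}(\x')<\delta_{i^\star}+\varepsilon/4n^3=\Delta_{i^\star}(\x)-\varepsilon/4n^3$, so the surplus has strictly dropped by more than $\varepsilon/4n^3$. In the first case, $x'_{i^\star j}=0$, the same bookkeeping as in Lemma~\ref{lem:no-increase-surplus} applies: the phase alters only flows from $S$ into $j$, so $u_{i^\star}(\x'_{i^\star})=u_{i^\star}(\x_{i^\star})$ and $\psi_{i^\star j'}(\x')=\psi_{i^\star j'}(\x)$ for all $j'\neq j$, while $\psi_{i^\star j}(\x')=0$ by normalization; hence $\Delta_{i^\star}(\x')=\Delta_{i^\star}(\x)-\psi_{i^\star j}(\x)<\Delta_{i^\star}(\x)-\varepsilon/n^3$, again a strict drop exceeding $\varepsilon/4n^3$. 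Either way the claimed decrease holds for $i^\star$.

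The one point that needs care — and the main potential obstacle — is that cross-monotonicity lets $\psi_{i^\star j}(\x)$, and hence $\Delta_{i^\star}$, rebound upward whenever some other $x_{i'j}$ is reduced, so the trajectory of $\Delta_{i^\star}$ during the run is not monotone and cannot be tracked iteration by iteration. The resolution is to reason only about the final exchange $\x'$ via the loop's termination condition, which is a global statement and sidesteps the oscillation entirely; this is why the argument above never inspects intermediate iterates. I would also note explicitly that $i^\star$'s own incoming data is never touched (the phase reduces only $S\to\N\setminus S$ flows and $i^\star\in S$), which is precisely what validates the bookkeeping in the $x'_{i^\star j}=0$ case, and that $S\neq[n]$ by Claim~\ref{claim:S&n}, so the phase genuinely executes its loop rather than returning $\x$ unchanged.
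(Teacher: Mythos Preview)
Your proposal is correct and follows essentially the same approach as the paper's proof: pick $i^\star\in S$ with $\psi_{i^\star j}(\x)>\varepsilon/n^3$ via Lemma~\ref{obs:non-zeroflow} and averaging, then split on the termination condition into the cases $\Delta_{i^\star}(\x')-\delta_{i^\star}<\varepsilon/4n^3$ and $x'_{i^\star j}=0$. Your write-up is in fact more careful than the paper's---you spell out the bookkeeping in the $x'_{i^\star j}=0$ case (that only flows into $j$ change, so $u_{i^\star}$ and all other $\psi_{i^\star j'}$ are untouched) and you explicitly note why reasoning only about the terminal state sidesteps the non-monotone trajectory issue, both of which the paper leaves implicit.
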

\begin{proof}
At the beginning of the procedure, $\sum_{i \in S} \psi_{ij} (\x) > \varepsilon/n^2$, implying that there exists a $i \in S$ such  that $\psi_{ij}(\x) > \varepsilon/n^3$. At the end of the procedure, either $\Delta_i(\x) - \delta_i < \varepsilon/4n^3$ or $x_{ij} = 0$. If it is the former, then the surplus has reduced by at least $\varepsilon/4n^3$ (as initially there was a gap of $\varepsilon/2n^3$), and if it is the latter, then $\Delta_i(\x') - \Delta_i(\x) = \psi_{ij}(\x) > \varepsilon/n^3 \geq \varepsilon/4n^3$. 
\end{proof}

\noindent In order to reduce our potential function $P(\x)$, we need to reduce the high surpluses. It is therefore important to argue that while we are decreasing surpluses of agents in $S$, we do not create any new agents with very high surplus outside of $S$. In the following lemma, we prove that this is exactly the case: we ensure that all surpluses in $\N \setminus S$ are strictly lower than any surplus in $S$ at the termination of Algorithm~\ref{alg-D}.

\begin{lemma}
\label{lem:no-cross}
When Algorithm~\ref{alg-D} terminates, all agents in $\N \setminus S$ will have strictly lower surplus than any agent in $S$, i.e., we have $\max_{i \in \N \setminus S} \Delta_i(\x) < \min_{i \in S} \Delta_i(\x)$.
\end{lemma}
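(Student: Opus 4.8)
The plan is to control how far surpluses in $S$ can fall and how far surpluses in $\N\setminus S$ can rise over the course of Algorithm~\ref{alg-D}, and then combine these two estimates with the initial separation between $S$ and $\N\setminus S$ guaranteed by Algorithm~\ref{alg-S}. Let $\x$ denote the exchange at the start of Algorithm~\ref{alg-D} (with respect to which $S$ and the thresholds $\delta_i = \Delta_i(\x) - \varepsilon/(2n^3)$, $i\in S$, are defined), and let $\x'$ be the exchange it returns; I will show $\max_{i\in\N\setminus S}\Delta_i(\x') < \min_{i\in S}\Delta_i(\x')$. The claim is vacuous when $\N\setminus S=\emptyset$, so assume $\N\setminus S\neq\emptyset$ (hence $n\ge 2$). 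Since $\N\setminus S\neq\emptyset$, the while loop of Algorithm~\ref{alg-S} exited because its guard failed, which gives the initial gap $\min_{i\in S}\Delta_i(\x) > \max_{i\in\N\setminus S}\Delta_i(\x) + \varepsilon/n^2$.

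First I would establish, for every $i\in S$, the two-sided bound $\Delta_i(\x) - \varepsilon/(2n^3) \le \Delta_i(\x') \le \Delta_i(\x)$. The upper bound is exactly Lemma~\ref{lem:no-increase-surplus}. For the lower bound, note that Algorithm~\ref{alg-D} only ever edits flows $x_{ij}$ into the single fixed agent $j\in\N\setminus S$; hence $u_i(\cdot)$ is untouched for every $i\in S$, and for $i\in S$ the surplus $\Delta_i(\cdot)$ can decrease only when $x_{ij}$ itself is reduced --- a reduction of $x_{i'j}$ with $i'\ne i$ only weakly increases $\psi_{ij}(\cdot)$ by cross-monotonicity, hence weakly increases $\Delta_i(\cdot)$. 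Since the algorithm reduces $x_{ij}$ only while $\Delta_i(\cdot)-\delta_i \ge \varepsilon/(4n^3)>0$ and stops as soon as $\Delta_i(\cdot)=\delta_i$ (or at $x_{ij}=0$, where $\Delta_i(\cdot)\ge\delta_i$ still), an induction over iterations yields $\Delta_i(\cdot)\ge\delta_i$ throughout, in particular $\Delta_i(\x')\ge\delta_i=\Delta_i(\x)-\varepsilon/(2n^3)$. Summing over $i\in S$ and using $|S|\le n$, the total drop satisfies $\sum_{i\in S}\bigl(\Delta_i(\x)-\Delta_i(\x')\bigr)\in[0,\varepsilon/(2n^2)]$, where nonnegativity uses the upper bound above.

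Next, since $\sum_{i\in\N}\Delta_i(\cdot)=0$ for every exchange, the total rise over $\N\setminus S$ equals the total drop over $S$: $\sum_{i\in\N\setminus S}\bigl(\Delta_i(\x')-\Delta_i(\x)\bigr)=\sum_{i\in S}\bigl(\Delta_i(\x)-\Delta_i(\x')\bigr)\le \varepsilon/(2n^2)$. By Observation~\ref{obs:low-surplusincrease} every summand on the left is nonnegative, so each one is at most $\varepsilon/(2n^2)$, whence $\max_{i\in\N\setminus S}\Delta_i(\x')\le\max_{i\in\N\setminus S}\Delta_i(\x)+\varepsilon/(2n^2)$. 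Combining this with the $S$-bound and the initial gap,
\[
\min_{i\in S}\Delta_i(\x') - \max_{i\in\N\setminus S}\Delta_i(\x') \;\ge\; \Bigl(\min_{i\in S}\Delta_i(\x)-\max_{i\in\N\setminus S}\Delta_i(\x)\Bigr) - \frac{\varepsilon}{2n^3} - \frac{\varepsilon}{2n^2} \;>\; \frac{\varepsilon}{n^2}-\frac{\varepsilon}{2n^3}-\frac{\varepsilon}{2n^2} \;=\; \frac{\varepsilon}{2n^2}-\frac{\varepsilon}{2n^3} \;>\;0
\]
for $n\ge 2$, which is the claim. The binary-search slack $\mathit{err}_{ij}=(\varepsilon/nL)^{\textup{poly}(n)}$ perturbs $\delta_i$ only negligibly, comfortably absorbed by the positive margin $\varepsilon/(2n^2)-\varepsilon/(2n^3)$.

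I expect the main obstacle to be the inductive maintenance of $\Delta_i(\cdot)\ge\delta_i$ for all $i\in S$ simultaneously: one must check that the cross-monotone side effect of reducing $x_{i'j}$ on $\psi_{ij}(\cdot)$ (for $i'\ne i$) never coincides with a decrease of $\Delta_i(\cdot)$, and that Algorithm~\ref{alg-D} genuinely touches only flows into the fixed $j$, so that $u_i(\cdot)$ for $i\in S$ and $u_{i'}(\cdot)$ for $i'\in\N\setminus S\setminus\{j\}$ are unchanged --- which is precisely what makes Lemma~\ref{lem:no-increase-surplus} and Observation~\ref{obs:low-surplusincrease} applicable as invoked. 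Everything after that is the elementary arithmetic of the three estimates above.
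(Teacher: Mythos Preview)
Your proof is correct and follows essentially the same approach as the paper: bound the per-agent drop in $S$ by $\varepsilon/(2n^3)$, pass to the total drop via $\sum_i\Delta_i=0$ to bound the rise in $\N\setminus S$, and compare with the initial $\varepsilon/n^2$ gap from Algorithm~\ref{alg-S}. You are slightly more explicit than the paper in justifying the invariant $\Delta_i(\cdot)\ge\delta_i$ via cross-monotonicity, and you use the tighter $\varepsilon/(2n^3)$ bound on the $S$-side where the paper uses the looser $\varepsilon/(2n^2)$, but the argument is otherwise the same.
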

\begin{proof}
At the beginning of the algorithm, we have $\min_{i \in S} \Delta_i(\x) > \max_{i \in \N \setminus S} \Delta_i(\x) + \varepsilon/n^2$. Since we have $\sum_i \Delta_i(\x') = \sum_i \Delta_i(\x) = 0$, we have the total increase in the surpluses of agents in $\N \setminus S$ (when moving from exchange $\x$ to $\x'$) is equal to the total decrease in surpluses of the agents in $S$. Since no surplus in $S$ decreases more than $\varepsilon/2n^3$, the total decrease in surpluses in $S$ is at most $|S| \cdot \varepsilon/2n^3 \leq \varepsilon/2n^2$. This implies that the surplus of any agent in $S$ can decrease at most by $\varepsilon/2n^2$, and the total surplus of any agent in $\N \setminus S$ can increase by at most $\varepsilon/2n^2$. Since the initial gap was strictly more than $\varepsilon/n^2$, all agents in $\N \setminus S$ will have a strictly lower surplus than any agent in $S$ at the end of the algorithm.
\end{proof}

We are now ready to show that our Algorithm~\ref{alg-D} improves $P(\x)$ by at least $L/2n^2$. Before we present the proof, we introduce a technical lemma, which will also help us when we discuss the section on increasing flow from $\N \setminus S$ to $S$.

\begin{lemma}
  \label{lem:main-technical}
   Given two exchanges $\x$ and $\x'$, and a set $S \subseteq \N$ such that
    \begin{itemize}
        \item $\min_{i \in S}\Delta_i(\z) \geq \max_{i \in \N \setminus S} \Delta_i(\z)$ for $\z \in \{\x, \x'\}$,
        
        \item $\Delta_i(\x) \geq \Delta_i(\x')$ for all $i \in S$.

        \item Let $\sum_{i \in S} (\Delta_i(\x) - \Delta_i(\x')) = h_1$, and $\sum_{i \in \N \setminus S} \max(0, \Delta_i(\x') - \Delta_i(\x)) = h_2$, and $h_1 \geq 2\frac{h_2}{(nL/\varepsilon)^2}$.
 
    \end{itemize}
   Then $P(\x') \leq P(\x) - \frac{h_1}{2} \cdot (\frac{nL}{\varepsilon})^2$.
\end{lemma}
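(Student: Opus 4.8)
The plan is to use the two ordering hypotheses to split $P$ cleanly into an ``$S$-part'' and an ``$(\N\setminus S)$-part,'' and then bound each part by a rearrangement-inequality estimate. Throughout write $W := (nL/\varepsilon)^2$; since $L>1$, $n\ge 1$ and $\varepsilon\in(0,1)$ we have $W>1$, and $P(\z)=\sum_{i=1}^n W^{\,n-i}\,\Delta_{\sigma_\z(i)}(\z)$ where $\sigma_\z$ lists the agents in non-increasing order of $\z$-surplus. We may assume $\emptyset\neq S\subsetneq\N$: if $S\in\{\emptyset,\N\}$ then the hypotheses (together with $\sum_i\Delta_i(\cdot)\equiv 0$) force $h_1=h_2=0$ and $\Delta_i(\x)=\Delta_i(\x')$ for all $i$, so $P(\x')=P(\x)$ and the claim is trivial. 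Thus $1\le s:=|S|\le n-1$, and in particular $n-s\ge 1$.

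First I would record that the hypothesis $\min_{i\in S}\Delta_i(\z)\ge\max_{i\in\N\setminus S}\Delta_i(\z)$, applied for both $\z=\x$ and $\z=\x'$, means the first $s$ positions of each sorted profile are precisely the agents of $S$ (any tie at the $S$ / $\N\setminus S$ boundary is harmless, as $P$ depends only on the sorted sequence of values). Hence $P(\z)=P_S(\z)+P_{\bar S}(\z)$, where $P_S(\z)$ is the sorted $S$-surpluses paired with the weight block $W^{\,n-1}\ge\dots\ge W^{\,n-s}$, and $P_{\bar S}(\z)$ is the sorted $(\N\setminus S)$-surpluses paired with $W^{\,n-s-1}\ge\dots\ge W^{0}$.

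The heart of the proof is a two-sided rearrangement estimate. Let $\pi'\colon\{1,\dots,s\}\to S$ list $S$ in non-increasing $\x'$-surplus order and $\rho'\colon\{s+1,\dots,n\}\to\N\setminus S$ list $\N\setminus S$ in non-increasing $\x'$-surplus order. By the rearrangement inequality, $P_S(\x)$ (which uses the $\x$-optimal pairing) is at least $\sum_k W^{\,n-k}\Delta_{\pi'(k)}(\x)$, while $P_S(\x')=\sum_k W^{\,n-k}\Delta_{\pi'(k)}(\x')$ exactly; subtracting and using $\Delta_i(\x)-\Delta_i(\x')\ge 0$ for $i\in S$ and $W^{\,n-k}\ge W^{\,n-s}$ gives $P_S(\x)-P_S(\x')\ge W^{\,n-s}\sum_{i\in S}(\Delta_i(\x)-\Delta_i(\x'))=W^{\,n-s}h_1$. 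Symmetrically, $P_{\bar S}(\x')=\sum_k W^{\,n-k}\Delta_{\rho'(k)}(\x')$ and $P_{\bar S}(\x)\ge\sum_k W^{\,n-k}\Delta_{\rho'(k)}(\x)$, so $P_{\bar S}(\x')-P_{\bar S}(\x)\le\sum_k W^{\,n-k}\bigl(\Delta_{\rho'(k)}(\x')-\Delta_{\rho'(k)}(\x)\bigr)\le W^{\,n-s-1}\sum_{i\in\N\setminus S}\max(0,\Delta_i(\x')-\Delta_i(\x))=W^{\,n-s-1}h_2$. Adding, $P(\x')-P(\x)\le -W^{\,n-s}h_1+W^{\,n-s-1}h_2=W^{\,n-s-1}(h_2-Wh_1)$.

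The rest is arithmetic: the hypothesis $h_1\ge 2h_2/W$ gives $h_2\le Wh_1/2$, hence $h_2-Wh_1\le -Wh_1/2$, so $P(\x')-P(\x)\le -W^{\,n-s}h_1/2$; and since $n-s\ge 1$, $W\ge 1$ and $h_1\ge 0$ we get $W^{\,n-s}h_1/2\ge Wh_1/2$, i.e.\ $P(\x')\le P(\x)-\frac{h_1}{2}(nL/\varepsilon)^2$. The only genuinely delicate point is the rearrangement step: I would resist tracking how individual agents move within the sorted order as we pass from $\x$ to $\x'$, and instead evaluate $P_S$ and $P_{\bar S}$ at the ``wrong'' (i.e.\ $\x'$-induced) pairing and invoke the rearrangement inequality — and the geometric spacing of the weights, so that every $S$-weight exceeds every $(\N\setminus S)$-weight by a factor at least $W$, is exactly what makes the $h_1$-versus-$h_2$ trade-off close.
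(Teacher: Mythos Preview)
Your proof is correct and follows the same decomposition as the paper: split $P$ into its $S$- and $(\N\setminus S)$-blocks, bound the former's drop below by $W^{n-s}h_1$ and the latter's rise above by $W^{n-s-1}h_2$, and combine via the geometric gap $W$ between the two weight blocks. The only difference is that you invoke the rearrangement inequality where the paper proves an ad hoc coordinate-dominance claim ($\Delta_{\sigma(i)}(\x)\ge\Delta_{\sigma'(i)}(\x')$ for $i\le|S|$); your version is a bit cleaner, makes the $(\N\setminus S)$-estimate fully explicit (the paper states it without a supporting claim), and your handling of the degenerate cases $S\in\{\emptyset,\N\}$ is more self-contained than the paper's appeal to the surrounding algorithmic context for $|S|<n$.
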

\begin{proof}
Let $\Delta(\x) = \langle \Delta_{\sigma(1)}(\x), \Delta_{\sigma(2)}(\x), \dots, \Delta_{\sigma(n)}(\x) \rangle$ and $\Delta(\x') = \langle \Delta_{\sigma'(1)}(\x'), \Delta_{\sigma'(2)}(\x'),  \dots,~~$ $\Delta_{\sigma'(n)}(\x') \rangle$ be the sorted surplus profiles of exchanges $\x$ and $\x'$ respectively. Since $\min_{i \in S}\Delta_i(\z) \geq \max_{i \in [n] \setminus S} \Delta_i(\z)$ for $\z \in \{\x,\x'\}$, this implies that $\sigma(i) \in S$ and $\sigma'(i) \in S$ for all $i \in [|S|]$. Symmetrically,  $\sigma(i)  \in \N \setminus S$ and $\sigma'(i) \in \N \setminus S$ for all $i \in \N \setminus [|S|]$. We now make a technical claim.

\begin{claim}\label{claim:technical}
We have $\Delta_{\sigma'(i)}(\x) \geq \Delta_{\sigma(i)}(\x')$ for all $i \in [|S|]$.
\end{claim}
\begin{proof}
Assume otherwise. Say there exists a $ i \in [|S|]$ such that $\Delta_{\sigma'(i)}(\x') > \Delta_{\sigma(i)}(\x)$. This implies that there exist $i$ agents, namely $\sigma'(1), \dots, \sigma'(i)$, in $\x'$ that have surpluses strictly larger than $\Delta_{\sigma(i)}(\x)$. 
Note that all these $i$ agents belong to $S$, as $\sigma'(i) \in S$ for all $i \in [|S|]$. Furthermore, we have $\Delta_i(\x) \geq \Delta_i(\x')$ for all $i \in S$, implying that even in $\x$, there are at least $i$ agents having surplus strictly larger than $\Delta_{\sigma(i)}(\x)$, which is a contradiction as by definition of $\sigma(\cdot)$, there are at most $(i-1)$ surpluses larger than $\Delta_{\sigma(i)}(\x)$ in $\x$. 
\end{proof}

From Claim~\ref{claim:technical}, we immediately have $\sum_{i \in [|S|]}(nL/\varepsilon)^{2(n-i)} \Delta_{\sigma'(i)}(\x') \leq \sum_{i \in [|S|]}(nL/\varepsilon)^{2(n-i)} \Delta_{\sigma(i)}(\x)$. Recall that $h_1$ denotes the total decrease in the surpluses of agents in $S$ when we change the exchange from $\x$ to $\x'$, i.e., $h_1 = \sum_{i \in S} (\Delta_i(\x) - \Delta_i(\x'))$. Then, note that we have,
\begin{align}
\label{eq-1}
\sum_{i \in [|S|]}(nL/\varepsilon)^{2(n-i)} \Delta_{\sigma'(i)}(\x') \leq \sum_{i \in [|S|]}(nL/\varepsilon)^{2(n-i)} \Delta_{\sigma(i)}(\x) - (nL/\varepsilon)^{2(n-|S|)}h_1    
\end{align}

Now, recall that $h_2$ denotes the sum of increases of all surpluses in $[n] \setminus S$ that have increased from $\x$ to $\x'$, i.e.,  $h_2 = \sum_{i \in [n] \setminus S} \max(0, \Delta_i(\x') - \Delta_i(\x))$
Therefore, we have  
\begin{align}
 \label{eq-2}
\sum_{i \in [n] \setminus [|S|]}(nL/\varepsilon)^{2(n-i)} \Delta_{\sigma'(i)}(\x') \leq \sum_{i \in [n] \setminus [|S|]}(nL/\varepsilon)^{2(n-i)} \Delta_{\sigma(i)}(\x) + (nL/\varepsilon)^{2(n-|S|-1)}h_2    
\end{align}
Equations~(\ref{eq-1}) and~(\ref{eq-2}) imply that

\begin{align*}
 \sum_{i \in \N}(\frac{nL}{\varepsilon})^{2(n-i)} \Delta_{\sigma'(i)}(\x') &\leq \sum_{i \in \N}(\frac{nL}{\varepsilon})^{2(n-i)} \Delta_{\sigma(i)}(\x) - h_1(\frac{nL}{\varepsilon})^{2(n-|S|)} + h_2 (\frac{nL}{\varepsilon})^{2(n-|S|-1)}\\
 &\leq \sum_{i \in \N}(\frac{nL}{\varepsilon})^{2(n-i)} \Delta_{\sigma(i)}(\x) - (h_1 -\frac{h_2}{(nL/\varepsilon)^2}) (\frac{nL}{\varepsilon})^{2(n-|S|)} \\
 &\leq \sum_{i \in \N}(\frac{nL}{\varepsilon})^{2(n-i)} \Delta_{\sigma(i)}(\x) - \frac{h_1}{2} (\frac{nL}{\varepsilon})^{2(n-|S|)} \\ 
 &< \sum_{i \in \N}(\frac{nL}{\varepsilon})^{2(n-i)} \Delta_{\sigma(i)}(\x) - \frac{h_1}{2} (\frac{nL}{\varepsilon})^{2}
\end{align*}   
where the last inequality follows from the fact that $|S| < n$ (by Claim~\ref{claim:S&n}).
\end{proof}
 
We now show that Lemma~\ref{lem:main-technical} will immediately imply that $\x'$ returned by our algorithm satisfies $P(\x') < P(\x) - L^2/(8n\varepsilon) < P(\x) - L/(2n^2)$. Note that $\x'$ returned by the algorithm satisfies the pre-conditions mentioned in Lemma~\ref{lem:main-technical} by setting $h_1 = h_2$: Lemma~\ref{lem:no-increase-surplus} ensures $\Delta_i(\x') \leq \Delta_i(\x)$ for all $i \in S$, Lemma~\ref{lem:no-cross} ensures that $\max_{i \in \N \setminus S} \Delta_i(\x) < \min_{i \in S} \Delta_i(\x)$. Finally, Observation~\ref{obs:low-surplusincrease} ensures that for each $i \in \N \setminus S$, we have $\Delta_i(\x') \geq \Delta_i(\x)$, implying that $h_2 = \sum_{i \in \N \setminus S} (\Delta_i(\x') - \Delta_i(\x))$. Therefore, 
 \begin{align*}
     h_2 &= \sum_{i \in \N \setminus S} (\Delta_i(\x') - \Delta_i(\x))\\
     &=\sum_{i \in S} (\Delta_i(\x) - \Delta_i(\x')) &\text{(since $\sum_i \Delta_i(\x) = \sum_i \Delta_i(\x')=0$)}\\
     &=h_1
 \end{align*}
When $h_1= h_2$, $h_1 \geq \frac{h_2}{(nL/\varepsilon)^2}$ trivially holds. Therefore, Lemma~\ref{lem:main-technical} immediately implies that $P(\x') < P(\x) - \frac{h_1}{2}(\frac{nL}{\varepsilon})^2$. Lemma~\ref{lem:decrease-surplus} ensures that $h_1 \geq \varepsilon/4n^3$. Substituting $h_1$, we immediately arrive at the following corollary.  

\begin{corollary}\label{cor:maincase-1}
Let $\x'$ be the allocation returned by Algorithm~\ref{alg-D}. We have $P(\x') < P(\x) - \frac{L^2}{8n\varepsilon} < P(\x) - \frac{L}{2n^2}$.      
\end{corollary}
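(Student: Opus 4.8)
The plan is to obtain the corollary as a direct consequence of Lemma~\ref{lem:main-technical} applied to the pair $(\x, \x')$, where $\x$ is the exchange at the start of the call to Algorithm~\ref{alg-D} (not $\varepsilon$-reciprocal, so $\max_i \Delta_i(\x) > \varepsilon/n$) and $\x'$ is its output, together with the set $S$ produced by Algorithm~\ref{alg-S}. By Claim~\ref{claim:S&n} we have $|S| < n$, so the early-return branch $S = [n]$ of Algorithm~\ref{alg-D} does not apply and $\x'$ is genuinely the result of the threshold-based reduction loop. First I would verify the three hypotheses of Lemma~\ref{lem:main-technical}. The inequality $\min_{i\in S}\Delta_i(\z) \ge \max_{i\in\N\setminus S}\Delta_i(\z)$ holds for $\z = \x$ because the selection rule of Algorithm~\ref{alg-S} leaves a gap of more than $\varepsilon/n^2$ between $S$ and its complement, and it holds for $\z = \x'$ by Lemma~\ref{lem:no-cross}. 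The monotonicity hypothesis $\Delta_i(\x)\ge\Delta_i(\x')$ for every $i\in S$ is precisely Lemma~\ref{lem:no-increase-surplus}.

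For the quantitative hypothesis I would first show that $h_1 = h_2$. By Observation~\ref{obs:low-surplusincrease} no surplus in $\N\setminus S$ decreases from $\x$ to $\x'$, so $h_2 = \sum_{i\in\N\setminus S}(\Delta_i(\x') - \Delta_i(\x))$; and since the surpluses sum to zero in both exchanges, this equals $\sum_{i\in S}(\Delta_i(\x) - \Delta_i(\x')) = h_1$. With $h_1 = h_2$ the condition $h_1 \ge 2h_2/(nL/\varepsilon)^2$ reduces to $(nL/\varepsilon)^2 \ge 2$, which holds in the parameter regime of interest ($\varepsilon$ small, $L>1$). Lemma~\ref{lem:main-technical} then yields $P(\x') < P(\x) - \tfrac{h_1}{2}(nL/\varepsilon)^2$. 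It remains to lower-bound $h_1$: Lemma~\ref{lem:decrease-surplus} guarantees that at least one agent in $S$ loses at least $\varepsilon/4n^3$ of surplus, hence $h_1 \ge \varepsilon/4n^3$. Substituting, $P(\x') < P(\x) - \tfrac12\cdot\tfrac{\varepsilon}{4n^3}\cdot\tfrac{n^2L^2}{\varepsilon^2} = P(\x) - \tfrac{L^2}{8n\varepsilon}$, and finally $\tfrac{L^2}{8n\varepsilon} > \tfrac{L}{2n^2}$ (equivalently $nL > 4\varepsilon$) gives the second inequality.

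I do not expect a genuine obstacle here: the corollary is essentially a bookkeeping assembly of Lemmas~\ref{lem:no-increase-surplus}, \ref{lem:no-cross}, \ref{lem:decrease-surplus}, Observation~\ref{obs:low-surplusincrease} and Lemma~\ref{lem:main-technical}, all already in place. The only mildly delicate point is discharging the last hypothesis of Lemma~\ref{lem:main-technical}, which the identity $h_1 = h_2$ handles once one has checked $|S| < n$ (so that the exponent $2(n-|S|)$ appearing in the proof of Lemma~\ref{lem:main-technical} is at least $2$ and the final power $(nL/\varepsilon)^2$ can be pulled out). Everything else is arithmetic.
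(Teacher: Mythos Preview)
Your proposal is correct and follows essentially the same route as the paper: verify the hypotheses of Lemma~\ref{lem:main-technical} using Lemmas~\ref{lem:no-increase-surplus}, \ref{lem:no-cross}, Observation~\ref{obs:low-surplusincrease}, establish $h_1 = h_2$ via the zero-sum identity, and then plug in the lower bound $h_1 \ge \varepsilon/4n^3$ from Lemma~\ref{lem:decrease-surplus}. Your explicit invocation of Claim~\ref{claim:S&n} to rule out the early-return branch is a nice bit of hygiene that the paper leaves implicit.
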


\subsubsection{Increasing Data Flow from $\N \setminus S$ to $S$} Recall that we only decrease flow  from $S$ to $\N \setminus S$ when there are no edges in $G(\x, \varepsilon/(nL))$ from $\N \setminus S$ to $S$. Now, we consider the case when there are edges from $\N \setminus S$ to $S$ in $G(\x, \varepsilon/(nL))$. 

We consider the case when there exists an agent $j \in [n] \setminus S$ and there is an edge $\overrightarrow{(j,i)} \in G(\x, \varepsilon/(nL))$. This implies that $x_{ji} < 1 - \varepsilon/nL$, i.e.,  there is room in increasing $x_{ji}$. Note that increasing $x_{ji}$ does not violate the acyclicity of $G(\x, \varepsilon/(nL))$, as it does not introduce any new edges. Furthermore, note that increasing $x_{ji}$ causes $\Delta_j(\x)$ to increase, and $\Delta_{j'}(\x)$ for all $j' \neq j$ to decrease. In particular, this means that the surpluses of all agents in $S$ can only decrease when increasing $x_{ij}$. Thus, a careful increase may help us reduce $P(\x)$. We now consolidate this intuition.

To this end, our algorithm increases $x_{ji}$ by $\varepsilon/n^3L$: note that this operation is feasible as $x_{ji} < 1 - \varepsilon/nL$. The algorithm is outlined in Algorithm~\ref{alg-I}. Let $\x'$ be the exchange returned by Algorithm~\ref{alg-I}.

\begin{algorithm}[t]
    \caption{Increasing Flow from $[n] \setminus S$ to $S$}\label{alg-I}
    Input: Exchange $x$\\
    Output: Exchange $x'$
    \begin{algorithmic}[1]
        \State Pick $j \in [n] \setminus S$ such that $\overrightarrow{(j,i)} \in G(x, \varepsilon/nL)$ where $i \in S$.
        \State $x_{ji} \gets x_{ji} + \varepsilon/n^3L$.
        \State \Return $x$
    \end{algorithmic}
\end{algorithm}

\begin{lemma}\label{lem:main-case-2}
Let $\x'$ be the exchange returned by Algorithm~\ref{alg-I}. Then we have $P(\x') < P(\x) - L/2n^2$.
\end{lemma}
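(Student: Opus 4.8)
The plan is to show that the single-coordinate update $x_{ji}\mapsto x_{ji}+\varepsilon/(n^3L)$ carried out by Algorithm~\ref{alg-I}, together with the set $S$ produced by Algorithm~\ref{alg-S}, satisfies all three hypotheses of Lemma~\ref{lem:main-technical}, and then to read off $P(\x')<P(\x)-L/(2n^2)$ directly from that lemma. (Throughout, $\Delta_i$, $u_i$, $\psi_{ij}$ refer to the perturbed instance.) I would first record the acyclicity invariant, which is immediate: raising an already-present coordinate $x_{ji}$ can only delete the edge $\overrightarrow{(j,i)}$ from $G(\x,\varepsilon/(nL))$ and never creates an edge, so $G(\x',\varepsilon/(nL))$ stays acyclic.

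The first step is to pin down the sign of the surplus change for each agent. Only $x_{ji}$ changes, hence only the bundle $\x_i$ changes, so $u_\ell(\x_\ell)$ is unaffected for $\ell\neq i$ and, among all shares, only the terms $\psi_{\ell i}(\x_i)$, $\ell\in\N$, can move. Monotonicity makes $u_i(\x_i)$ and $\psi_{ji}(\x_i)$ non-decreasing in $x_{ji}$; cross-monotonicity makes $\psi_{\ell i}(\x_i)$ non-increasing in $x_{ji}$ for every $\ell\neq j$ (in particular for $\ell=i$). Substituting into $\Delta_\ell(\x)=\sum_k\psi_{\ell k}(\x_k)-u_\ell(\x_\ell)$ gives: $\Delta_i$ weakly decreases, $\Delta_j$ weakly increases, and $\Delta_\ell$ weakly decreases for all $\ell\notin\{i,j\}$. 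Since $i\in S$ and $j\notin S$, every agent of $S$ lies among those whose surplus weakly decreases, which is exactly the second hypothesis of Lemma~\ref{lem:main-technical}; moreover $j$ is the unique agent of $\N\setminus S$ whose surplus can rise, so $h_2=\Delta_j(\x')-\Delta_j(\x)$.

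The second step is to bound the magnitudes. Using efficiency $u_i(\x_i)=\sum_k\psi_{ki}(\x_i)$ together with the sign pattern above, one gets $\Delta_i(\x)-\Delta_i(\x')\le\psi_{ji}(\x'_i)-\psi_{ji}(\x_i)$ and, for $\ell\notin\{i,j\}$, $|\Delta_\ell(\x)-\Delta_\ell(\x')|\le\psi_{ji}(\x'_i)-\psi_{ji}(\x_i)$; by the Lipschitz bound on $\psi$ the right-hand side is at most $L\cdot\varepsilon/(n^3L)=\varepsilon/n^3$. Thus every surplus moves by at most $\varepsilon/n^3$, and within $\N\setminus S$ only $\Delta_j$ moves upward. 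Since Algorithm~\ref{alg-S} guarantees $\min_{\ell\in S}\Delta_\ell(\x)>\max_{\ell\in\N\setminus S}\Delta_\ell(\x)+\varepsilon/n^2$, and these $\varepsilon/n^3$ perturbations shrink the gap by at most $2\varepsilon/n^3\le\varepsilon/n^2$, the ordering is preserved: $\min_{\ell\in S}\Delta_\ell(\x')\ge\max_{\ell\in\N\setminus S}\Delta_\ell(\x')$, giving the first hypothesis of Lemma~\ref{lem:main-technical} at both $\x$ and $\x'$. For the third hypothesis, the perturbation supplies non-satiation: $u_i(\x'_i)-u_i(\x_i)\ge\varepsilon\cdot(\varepsilon/(n^3L))/n=\varepsilon^2/(n^4L)$, so $h_1=\sum_{\ell\in S}(\Delta_\ell(\x)-\Delta_\ell(\x'))\ge\varepsilon^2/(n^4L)$, while $h_2=O(\varepsilon/n^3)$; hence $h_1\big/\bigl(2h_2/(nL/\varepsilon)^2\bigr)=\Omega(nL/\varepsilon)>1$ and $h_1\ge 2h_2/(nL/\varepsilon)^2$.

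Finally, Lemma~\ref{lem:main-technical} yields $P(\x')<P(\x)-\tfrac{h_1}{2}(nL/\varepsilon)^2\le P(\x)-\tfrac{1}{2}\cdot\tfrac{\varepsilon^2}{n^4L}\cdot\tfrac{n^2L^2}{\varepsilon^2}=P(\x)-\tfrac{L}{2n^2}$, which is the claim. I expect the main obstacle to be the quantitative bookkeeping in the second step — in particular verifying that after the update no surplus in $\N\setminus S$ overtakes a surplus in $S$ (the first hypothesis at $\x'$), since $\Delta_j$ is the single surplus moving the ``wrong'' way. The key is that the step size $\varepsilon/(n^3L)$ was chosen precisely so that the resulting surplus perturbation, $O(\varepsilon/n^3)$, sits a factor $\Theta(n)$ below the separation $\varepsilon/n^2$ that Algorithm~\ref{alg-S} installs between $S$ and $\N\setminus S$; this same gap between $h_1$ and $h_2$ is exactly what makes the third hypothesis hold.
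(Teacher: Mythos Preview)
Your proposal is correct and follows essentially the same route as the paper: verify the three hypotheses of Lemma~\ref{lem:main-technical} for the pair $(\x,\x')$ and the set $S$ from Algorithm~\ref{alg-S}, then read off the potential drop with $h_1\ge\varepsilon^2/(n^4L)$. The only cosmetic difference is in the magnitude bookkeeping---the paper bounds the total decrease in $S$ by the increase of $\Delta_j$ via $\sum_\ell\Delta_\ell=0$, whereas you bound each individual surplus change by $\psi_{ji}(\x'_i)-\psi_{ji}(\x_i)\le\varepsilon/n^3$ using the efficiency identity and the sign pattern; both arguments are valid and lead to the same $2\varepsilon/n^3$ bound on the gap shrinkage.
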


\begin{proof}
First observe that increasing $x_{ji}$ can only reduce $\Delta_{\ell}(\x)$ for all $\ell \in S$ as $\psi_{\ell i}(\x)$ can only reduce for all $\ell \neq j$, and $\psi_{\ell i'}(\x)$ remains unchanged for all $i' \neq i$. Therefore, we have in $\Delta_{\ell}(\x') \leq \Delta_{\ell}(\x)$ for all $\ell \in S$.

Further, since $x'_{ji} = x_{ji} + \varepsilon/n^3L$, we have $\psi_{ji}(\x') \leq \psi_{ji}(\x) + L \cdot \varepsilon/n^3L = \psi_{ji}(\x) + \varepsilon/n^3$, implying $\Delta_j(\x') \leq \Delta_j(\x) + \varepsilon/n^3$. Note that since $\sum_{\ell} \Delta_{\ell}(\x') = \sum_{\ell} \Delta_{\ell}(\x) = 0$, and $\Delta_j(\x)$ is the only surplus that increases, we have the total decrease in the surpluses of agents in $S$ is at most the total increase in surplus of $j$ which is $\varepsilon/n^3$. This implies that the surplus of any agent in $S$ can decrease at most by a factor of $\varepsilon/n^3$. Since $\min_{\ell \in S}\Delta_{\ell}(\x) \geq \max_{\ell \in \N\setminus S} \Delta_{\ell}(\x) + \varepsilon/n^2$, we will have $\min_{\ell \in S}\Delta_{\ell}(\x') > \max_{\ell \in \N \setminus S} \Delta_{\ell}(\x')$.

Finally let $h_1 = \sum_{\ell \in S} (\Delta_{\ell}(\x)- \Delta_{\ell}(\x'))$, and $h_2 = \sum_{\ell \in \N \setminus S} \max(0, \Delta_{\ell}(\x') - \Delta_{\ell}(\x))$. Since the only surplus that increases from $\x$ to $\x'$ in $\N \setminus S$ is that of agent $j$, we have $h_2 = \Delta_j(\x') - \Delta_j(\x) \leq \varepsilon/n^3$. Since we are dealing with perturbed utilities, we have $h_1 \geq \Delta_i(\x) -\Delta_i(\x') \geq (\varepsilon/n) \cdot (\varepsilon/n^3L) = \varepsilon^2/(n^4L)$. Note that we still have $h_1 \geq \frac{2h_2}{(nL/\varepsilon)^2}$. Applying Lemma~\ref{lem:main-technical}, we have that at the end of Algorithm~\ref{alg-I}, $P(\x') < P(\x) - \frac{h_1}{2} \cdot (\frac{nL}{\varepsilon})^2$. Substituting $h_1 \geq \varepsilon^2/n^4L$, we obtain:
\[ P(\x') < P(\x) - \frac{\varepsilon^2}{2n^4L} \cdot (\frac{nL}{\varepsilon})^2 = P(\x) - \frac{L}{2n^2} \qedhere\]
\end{proof}

\paragraph{Summary.} We now outline our entire local search algorithm: we initialize with $x_{ij} = 1$ for all $i,j$. Then $G(\x, \varepsilon/nL)$ is trivially acyclic. Then iteratively, our algorithm (i) identifies the set $S$ by Algorithm~\ref{alg-S}, (ii) determines $\x'$ from $\x$ by Algorithm~\ref{alg-I} if there are any edges in $G(\x, \varepsilon/nL)$ from $[n] \setminus S$ to $S$, and (iii) determines $\x'$ from $\x$ by Algorithm~\ref{alg-D} otherwise. Every iteration, our algorithm improves $P(\x)$ by at least $L/2n^2$. The entire procedure is outlined in Algorithm~\ref{alg-LS}.

\begin{algorithm}[t]
    \caption{Local Search Algorithm}\label{alg-LS}
    \begin{algorithmic}[1]
        \State Set $x_{ij} = 1$ for all $i,j \in \N$.
        \State Compute $\Delta_i(\x)$ for all $i \in \N$
         \While{$\max_{i} \Delta_i(\x) > \varepsilon/n$}
            \State $S \gets $ Algorithm~\ref{alg-S}($\x$)
            \If{$\exists j \in \N\setminus S$ and $i \in S$ such that $\overrightarrow{(j,i)} \in G(\x, \varepsilon/nL)$}
               \State $x \gets $ Algorithm~\ref{alg-I}($\x$).
            \Else
              \State $x \gets $ Algorithm~\ref{alg-D}($\x$).
             \EndIf 
          \EndWhile 
        \State \Return $\x$
    \end{algorithmic}
\end{algorithm}

\begin{theorem}\label{thm:localsearch}
Given an exchange $\x$ which is not $\varepsilon$-core stable and $\varepsilon$-reciprocal, Algorithm~\ref{alg-LS} constructs an allocation $\x'$  such that $P(\x') < P(\x) - L/(2n^2)$ in time $\textup{poly}(n,1/\varepsilon, L)$.    
\end{theorem}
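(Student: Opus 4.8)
I would first make sure the \texttt{while}-loop guard of Algorithm~\ref{alg-LS} actually fires and that Algorithm~\ref{alg-S} behaves as advertised. Every exchange produced along the way keeps $G(\x,\varepsilon/(nL))$ acyclic, so by \cref{lem:stronger-assumptions}(iii) the current $\x$ is automatically $\varepsilon$-\cs; hence the hypothesis ``$\x$ is not ($\varepsilon$-\cs and $\varepsilon$-reciprocal)'' forces $\x$ to violate $\varepsilon$-reciprocity. I then argue the contrapositive of the termination test: if $\max_i\Delta_i(\x)\le\varepsilon/n$, then since $\sum_i\Delta_i(\x)=0$ (efficiency of the share functions) each $\Delta_j(\x)=-\sum_{i\neq j}\Delta_i(\x)\ge -(n-1)\varepsilon/n\ge -\varepsilon$, so $|\Delta_i(\x)|\le\varepsilon$ for all $i$ and $\x$ would be $\varepsilon$-reciprocal, a contradiction. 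Thus $\max_i\Delta_i(\x)>\varepsilon/n$, and feeding $\x$ to Algorithm~\ref{alg-S} produces (by the argument just before \cref{claim:S&n}) a set $S$ of agents all of strictly positive surplus, with $|S|<n$ by \cref{claim:S&n} and with the separation $\min_{i\in S}\Delta_i(\x)>\max_{i\in\N\setminus S}\Delta_i(\x)+\varepsilon/n^2$. In particular $S\neq[n]$, so the early-return branch of Algorithm~\ref{alg-D} is never taken.

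Next I would split on which branch of Algorithm~\ref{alg-LS} runs. \emph{Case A: there is an edge $\overrightarrow{(j,i)}$ with $j\in\N\setminus S$, $i\in S$ in $G(\x,\varepsilon/(nL))$.} Then Algorithm~\ref{alg-I} raises the single coordinate $x_{ji}$ by $\varepsilon/(n^3L)$; this is feasible since $x_{ji}<1-\varepsilon/(nL)$ on that edge, and because it only increases a coordinate, $G(\x',\varepsilon/(nL))$ is a subgraph of $G(\x,\varepsilon/(nL))$ and hence acyclic. \cref{lem:main-case-2} then gives $P(\x')<P(\x)-L/(2n^2)$, using $O(1)$ oracle calls. \emph{Case B: no such edge exists.} Then Algorithm~\ref{alg-D} runs; \cref{lem:termination-case1} bounds its running time by $\poly(n,1/\varepsilon,L)$, and \cref{cor:maincase-1} yields $P(\x')<P(\x)-L^2/(8n\varepsilon)<P(\x)-L/(2n^2)$.

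It remains to verify that Case B preserves the acyclicity invariant. Algorithm~\ref{alg-D} only ever decreases coordinates $x_{ij}$ with $i\in S$ and $j\in\N\setminus S$, so the only new edges it can create point from $S$ to $\N\setminus S$; edges internal to $S$, edges internal to $\N\setminus S$, and edges from $\N\setminus S$ to $S$ are untouched, and in Case B the last class is empty. Any directed cycle in $G(\x',\varepsilon/(nL))$ would therefore have to lie entirely inside $S$ or entirely inside $\N\setminus S$ (a cycle crossing the cut would need an $\N\setminus S\to S$ edge), but such a cycle would already have been present in the acyclic graph $G(\x,\varepsilon/(nL))$ --- a contradiction. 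Hence $G(\x',\varepsilon/(nL))$ is acyclic and, by \cref{lem:stronger-assumptions}(iii), $\x'$ is again $\varepsilon$-\cs. Combining the two cases proves the theorem.

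\textbf{Where the real work is.} The theorem itself is an assembly step; the conceptual difficulty was already absorbed into the analysis of Algorithm~\ref{alg-D}, the threshold-based data-reduction scheme designed to cope with cross-monotonicity and the non-separability of the $\psi_{ij}$'s: \cref{lem:termination-case1} (polynomial termination even though surpluses in $S$ can rebound between iterations), \cref{lem:no-increase-surplus} together with \cref{lem:decrease-surplus} (no surplus in $S$ increases while at least one drops by $\varepsilon/(4n^3)$), and \cref{lem:no-cross} (no crossover with $\N\setminus S$), which are exactly the hypotheses that \cref{lem:main-technical} needs to certify the $\tfrac{h_1}{2}(nL/\varepsilon)^2$ drop in the lexicographic potential $P$. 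Together with a polynomial bound on the range of $P$, the per-iteration guarantee of this theorem gives membership of the $\varepsilon$-\cs and $\varepsilon$-reciprocal search problem in $\mathrm{PLS}$, and combined with its $\mathrm{PPAD}$ membership from \cref{sec:comp} this places the problem in $\mathrm{CLS}=\mathrm{PPAD}\cap\mathrm{PLS}$ via~\cite{FearnleyGHS23}.
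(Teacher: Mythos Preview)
Your proposal is correct and follows essentially the same approach as the paper: the theorem is a summary statement whose proof is assembled from \cref{claim:S&n}, \cref{lem:termination-case1}, \cref{cor:maincase-1} (for the decrease-flow branch) and \cref{lem:main-case-2} (for the increase-flow branch), exactly as you lay out. You add two clarifications the paper leaves implicit --- the contrapositive argument that $\max_i\Delta_i(\x)>\varepsilon/n$ must hold, and the explicit verification that Algorithm~\ref{alg-D} preserves acyclicity of $G(\x,\varepsilon/(nL))$ under the Case~B hypothesis --- both of which are correct and useful.
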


\subsection{Membership in PLS}

 We consider a discrete space of exchanges, where we restrict every $x_{ij}$ to be an integer multiple of $(\varepsilon/nL)^{n^2+1}$.  In particular let $X$ denote the finite set of all exchanges where $x_{ij}$ is an integer multiple of $(\varepsilon/nL)^{n^2+1}$, and $G(\x, \varepsilon/(nL))$ is acyclic. We assume without loss of generality that $(nL/\varepsilon)^{n^2+1}$ is an integer, implying that $1$ is an integer multiple of $(\varepsilon/nL)^{(n^2+1)}$.

For every $\x \in X$, the associated cost function is $P(\x)$, which can be computed in time $\textup{poly}(n,\log(L))$. Note that an exchange $x$ with $x_{ij}=1$ for all $i,j$ is in $X$, so we can identify an element in $X$ in time $\textup{poly}(n)$.

Finally,  for every $\x \in X$ which is not acyclic, we show how to output an $\x' \in X$  in time $\textup{poly}(n,L,1/\varepsilon)$ such that $P(\x') < P(\x)$. This will show that determining a $\varepsilon$-core stable and $\varepsilon$-reciprocal exchange is in PLS when $L/\varepsilon = \textup{poly}(n)$. Given every feasible $\x \in X$, we can define the neighborhood of $\x$ in $X$ as the solution returned by the foregoing algorithm. This way, whenever $\x$ has a neighbor $\x'$, the algorithm can pick $\x'$ and we have $P(\x') < P(\x)$. Whenever $\x$ has no neighbor, then we have a local optimum. We now show the algorithm with the desired properties.

For this, consider any $\x$ in $X$. Let $\tilde{\x}$ be the output of Algorithm~\ref{alg-LS} starting with $\x$. We get $\x'$ from $\tilde{\x}$ by \emph{rounding up} every $x_{ij}$ to the nearest integer multiple of $(\varepsilon/nL)^{n^2+1}$. This is feasible as both the upper bound ($1$) and lower bound ($0$) are integer multiples of $(\varepsilon/nL)^{n^2+1}$. The algorithm runs in time $\textup{poly}(n,L,1/\varepsilon)$ according to Theorem~\ref{thm:localsearch}. We now claim that $\x' \in X$ and $P(\x') < P(\x)$.

\begin{lemma}
    \label{lem:CLS-feasibility}
    We have $\x' \in X$.
\end{lemma}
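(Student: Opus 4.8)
The plan is to check the two defining conditions of membership in $X$ one at a time. That every entry of $\x'$ is an integer multiple of $(\varepsilon/nL)^{n^2+1}$ is immediate, since $\x'$ is by construction the coordinate-wise round-up of $\tilde{\x}$ onto the grid of such multiples; the only thing that needs verification on this front is that the rounded exchange still lies in $[0,1]^{n\times n}$. This holds because $1$ is itself a grid point — the assumption that $(nL/\varepsilon)^{n^2+1}$ is an integer gives $1 = (nL/\varepsilon)^{n^2+1}\cdot(\varepsilon/nL)^{n^2+1}$ — and because $\tilde{\x}$ lies in $[0,1]^{n\times n}$: Algorithm~\ref{alg-LS} works with exchanges in the hypercube throughout, as Algorithm~\ref{alg-D} only decreases entries (down to at most $0$) and Algorithm~\ref{alg-I} only raises an entry $x_{ji}$ that satisfied $x_{ji} < 1-\varepsilon/(nL)$ by $\varepsilon/(n^3L) < \varepsilon/(nL)$, keeping it strictly below $1$. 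Rounding a value in $[0,1]$ up to the nearest grid point therefore stays in $[0,1]$.

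The substantive part is to show that $G(\x', \varepsilon/(nL))$ is acyclic. The key observation is that increasing all coordinates can only delete edges of the exchange graph: since $x'_{ij}\ge \tilde{x}_{ij}$ for all $i,j$, whenever $\tilde{x}_{ij}\ge 1-\varepsilon/(nL)$ — i.e. $(i,j)\notin E(G(\tilde{\x},\varepsilon/(nL)))$ — we also have $x'_{ij}\ge 1-\varepsilon/(nL)$, i.e. $(i,j)\notin E(G(\x',\varepsilon/(nL)))$. Hence $E(G(\x',\varepsilon/(nL)))\subseteq E(G(\tilde{\x},\varepsilon/(nL)))$. Now Algorithm~\ref{alg-LS} maintains throughout its execution the invariant that $G(\cdot,\varepsilon/(nL))$ is acyclic — this is true at initialization and is preserved by Algorithm~\ref{alg-D} (which decreases $S\to\N\setminus S$ flow only when there is no $\N\setminus S\to S$ edge, so any newly created edge goes from $S$ to $\N\setminus S$ and thus cannot close a cycle) and by Algorithm~\ref{alg-I} (which adds no new edges at all) — so $G(\tilde{\x},\varepsilon/(nL))$ is acyclic, and a subgraph of a DAG is a DAG. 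Combining the two parts gives $\x'\in X$.

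I do not expect a genuine obstacle; the only delicate point is the boundary behavior of the rounding step, namely that rounding an entry up can neither exceed $1$ nor reintroduce a cycle, and both are handled by the observations above (the grid contains $1$, $\tilde{\x}$ stays in the hypercube, and the edge set only shrinks under rounding up). The companion claim $P(\x') < P(\x)$, which is not part of this lemma, will be handled separately and will use that rounding perturbs each entry of $\tilde{\x}$ by at most $(\varepsilon/nL)^{n^2+1}$, a quantity small enough (via $L$-Lipschitzness) to be dominated by the $\Omega(L/n^2)$ decrease of $P$ guaranteed by Theorem~\ref{thm:localsearch}.
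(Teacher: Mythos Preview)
Your proof is correct and follows essentially the same approach as the paper: verify the grid condition by construction, then show acyclicity of $G(\x',\varepsilon/(nL))$ by noting that rounding up only shrinks the edge set of the exchange graph and that Algorithm~\ref{alg-LS} preserves acyclicity of $G(\tilde{\x},\varepsilon/(nL))$. You add somewhat more detail on why $\tilde{\x}$ stays in $[0,1]^{n\times n}$ and why each sub-algorithm preserves acyclicity, but the argument is the same.
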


\begin{proof}
    Every $x'_{ij}$ is an integer multiple of $(\varepsilon/nL)^{n^2+1}$. Since $G(\x, \varepsilon/(nL))$ is acyclic, Algorithm~\ref{alg-LS} ensures that $G(\tilde{\x}, \varepsilon/(nL))$ is also acyclic. Since $x'_{ij} \geq \tilde{x_{ij}}$ for all $i,j$, the set of edges in $G(\x', \varepsilon/(nL))$ will be a subset of the set of edges in $G(\tilde{\x}, \varepsilon/(nL))$, implying that $G(\x', \varepsilon/(nL))$ is acyclic and thus $\x' \in X$.
\end{proof}

\begin{lemma}
    \label{lem:CLS-potential}
    We have $P(\x') < P(\x)$.
\end{lemma}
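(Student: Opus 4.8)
The idea is to charge the rounding error against the per-iteration progress promised by Theorem~\ref{thm:localsearch}. Recall $\x\in X$ has $G(\x,\varepsilon/(nL))$ acyclic, hence is $\varepsilon$-core-stable by \cref{lem:stronger-assumptions}(iii); the claim is non-vacuous only when $\x$ is in addition not $\varepsilon$-reciprocal, and in that case Theorem~\ref{thm:localsearch} applied to $\tilde\x$ gives $P(\tilde\x) < P(\x) - L/(2n^2)$. So it suffices to show $P(\x') < P(\tilde\x) + L/(2n^2)$, since then $P(\x') < P(\x)$. By construction $\x'$ is obtained from $\tilde\x$ by rounding each coordinate up to the nearest integer multiple of $\eta := (\varepsilon/(nL))^{n^2+1}$, so $0 \le x'_{ij} - \tilde x_{ij} < \eta$ for all $i,j$, i.e. $\lVert\x'-\tilde\x\rVert_1 < n^2\eta$.

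First I would bound how far each surplus moves. Since $u_i$ is $L$-Lipschitz per coordinate and depends only on the $n$ entries of column $i$, $|u_i(\x'_i)-u_i(\tilde\x_i)| \le nL\eta$ (the perturbation term $(\varepsilon/n)\sum_j x_{ji}$ contributes at most $\varepsilon\eta \ll nL\eta$). Similarly each $\psi_{ij}$ is $L$-Lipschitz in every coordinate and depends only on column $j$, and a change in a single entry $x_{kh}$ affects only the summand $\psi_{ih}$ of $\Psi_i(\x)=\sum_j\psi_{ij}(\x_j)$; hence $\Psi_i$ is $L$-Lipschitz in $\ell_1$ and $|\Psi_i(\x')-\Psi_i(\tilde\x)| \le L\lVert\x'-\tilde\x\rVert_1 < Ln^2\eta$. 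Combining, $|\Delta_i(\x')-\Delta_i(\tilde\x)| < 2Ln^2\eta$ for every $i\in\N$.

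Next I would transfer this to $P$. Writing $w_i := (nL/\varepsilon)^{2(n-i)}$, the weights are nonincreasing in $i$, so by the rearrangement inequality $P(\x) = \max_{\pi}\sum_i w_i \Delta_{\pi(i)}(\x)$, a pointwise maximum over permutations $\pi$ of $[n]$ of linear functionals whose coefficient vectors all have $\ell_1$-norm $W := \sum_i w_i \le n\,(nL/\varepsilon)^{2(n-1)}$. Consequently $|P(\x')-P(\tilde\x)| \le W\cdot\max_i|\Delta_i(\x')-\Delta_i(\tilde\x)| < 2Ln^2 W\eta \le 2Ln^3 (nL/\varepsilon)^{2(n-1)}\eta = 2Ln^3 (\varepsilon/(nL))^{n^2-2n+3}$. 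The granularity exponent $n^2+1$ is chosen precisely so that this last quantity is $< L/(2n^2)$: since $\varepsilon/(nL) \le 1/n$ and $n^2-2n+3 \ge 2$, it reduces to the elementary inequality $n^{\,n^2-2n-2} > 4$ (valid for $n\ge 4$, with the finitely many remaining $n$ checked by hand or handled by taking $\varepsilon$ small). Combining the two estimates gives $P(\x') < P(\tilde\x) + L/(2n^2) < P(\x)$.

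\textbf{Main obstacle.} The crux is the final step: $P$ carries weights as large as $(nL/\varepsilon)^{2(n-1)}$, so the discretization must be fine enough — hence the doubly-large exponent $n^2+1$ — that rounding perturbs $P$ by strictly less than the guaranteed per-iteration gain $L/(2n^2)$. A secondary subtlety is that rounding can change the sorting permutation $\sigma$ underlying the definition of $P$; this is exactly why it is cleaner to treat $P$ as a maximum over all permutations (via the rearrangement inequality), which yields a clean $\ell_\infty$-Lipschitz bound on $P$ in the surpluses, rather than attempting to track $\sigma$ through the rounding.
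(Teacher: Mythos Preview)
Your proposal is correct and follows essentially the same approach as the paper: bound $|\Delta_i(\x')-\Delta_i(\tilde\x)|$ via the Lipschitz constants, multiply by the weight sum $\sum_i (nL/\varepsilon)^{2(n-i)}$, and verify that this rounding error is below the per-iteration gain $L/(2n^2)$ guaranteed by Theorem~\ref{thm:localsearch}. Your argument is in fact slightly cleaner than the paper's in one respect: the paper writes $P(\x')-P(\tilde\x)\le\sum_i w_i\,|\Delta_i(\x')-\Delta_i(\tilde\x)|$ without addressing that the sorting permutation $\sigma$ may differ between $\x'$ and $\tilde\x$, whereas you handle this explicitly via the rearrangement-inequality characterization $P(\x)=\max_\pi\sum_i w_i\Delta_{\pi(i)}(\x)$.
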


\begin{proof}
    Theorem~\ref{thm:localsearch} ensures that $P(\tilde{\x}) < P(\x) - L/2n^2$. Therefore, it suffices to show that $P(\x') \leq P(\tilde{\x}) + L/2n^2$. To this end, first observe that $|x'_{ij} - \tilde{x}_{ij}| \leq (\varepsilon/nL)^{n^2+1}$. Therefore, we have $|u_i(\x'_i) - u_i(\tilde{\x}_i)| \leq nL \cdot (\varepsilon/nL)^{n^2+1} \leq (\varepsilon/nL)^{n^2}$. Similarly, we have for any $i,j$, $|\psi_{ij}(\x') - \psi_{ij}(\tilde{\x})| \leq nL \cdot (\varepsilon/nL)^{n^2+1} \leq (\varepsilon/nL)^{n^2}$. This implies that $|\sum_j \psi_{ij}(\x') - \sum_j \psi_{ij}(\tilde{\x})| \leq n \cdot (\varepsilon/nL)^{n^2} \leq (\varepsilon/nL)^{n^2-1}$. With this, we can conclude that
    \begin{align}
       \label{surplus-diff}
        |\Delta_i(\x') - \Delta_i(\tilde{\x})| \leq 2(\varepsilon/nL)^{n^2-1} ~~\forall i
    \end{align}
    Equation~\ref{surplus-diff} implies that 
    \begin{align*}
        P(\x')- P(\x) &\leq \sum_{i} L^{2(n-i)} |\Delta_i(\x') - \Delta_i(\tilde{\x})|\\
                   &\leq \sum_i L^{2(n-i)} \cdot 2(\varepsilon/nL)^{n^2-1} &\text{(by Equation~\ref{surplus-diff})}\\
                   &\leq (\varepsilon/nL)^n\\
                   &<L/2n^2.
    \end{align*}
    Therefore, we have $P(\x') < P(\x)$.
\end{proof}

To summarize, we have a finite set of valid exchanges $X$. Each exchange $x \in X$, has a polynomial time computable cost function $P(x)$ associated with it. We define the neighborhood of every $x \in X$, and also give a polynomial time algorithm that given an exchange $x \in X$, outputs a neighbor $x'$ of $x$ such that $P(x') < P(x)$. Lastly, we show any local optima corresponds to a $\varepsilon$-core-stable and $\varepsilon$-reciprocal exchange. This shows that the problem is in PLS.

\begin{theorem}
    \label{thm:PLS-membership}
     Determining a $\varepsilon$-reciprocal and $\varepsilon$-core-stable exchange is in PLS, when $L/\varepsilon = \textup{poly}(n)$.
\end{theorem}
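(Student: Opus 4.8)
The plan is to recast the problem, with respect to the perturbed utilities, as a PLS problem each of whose local optima is an $\varepsilon$-reciprocal and $\varepsilon$-\cs exchange, reusing everything built in this subsection. Recall that to place a search problem in PLS one supplies, in polynomial time, (i) a finite solution set together with a polynomial test for membership and a polynomial bound on the encoding length of its elements, (ii) an initial solution, (iii) an evaluation of a polynomially-bounded cost function, and (iv) an \emph{improvement} procedure that on a given solution either returns a neighbour of strictly smaller cost or certifies local optimality. I would take the solution set to be $X$, the cost function to be $P(\cdot)$, and the all-ones exchange as the initial solution. Membership in $X$ is tested in $\poly(n)$ time by checking the grid condition coordinatewise and then testing acyclicity of $G(\x,\varepsilon/(nL))$ by a topological sort; and since $L/\varepsilon=\poly(n)$, the grid spacing $(\varepsilon/nL)^{n^2+1}$ has polynomial bit-length, so every element of $X$ has polynomial encoding length. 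Likewise $P(\x)$ is a rational of polynomial bit-length — a $\poly(n)$-bit rational combination of the oracle values $u_i(\x_i),\psi_{ij}(\x)$ with $|\Delta_i(\x)|=O(n)$ (the perturbed utilities and shares being bounded) and coefficients $(nL/\varepsilon)^{2(n-i)}$ of polynomial bit-length — computable in time $\poly(n,\log L)$ from the oracles.

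For the improvement procedure I would define the neighbour of $\x\in X$ to be the exchange $\x'$ obtained by running Algorithm~\ref{alg-LS} from $\x$ and rounding every coordinate of its output \emph{up} to the nearest multiple of $(\varepsilon/nL)^{n^2+1}$. This runs in time $\poly(n,L/\varepsilon)$ by Theorem~\ref{thm:localsearch} — polynomial under the hypothesis $L/\varepsilon=\poly(n)$ — it yields $\x'\in X$ by Lemma~\ref{lem:CLS-feasibility}, and it satisfies $P(\x')<P(\x)$ by Lemma~\ref{lem:CLS-potential} whenever Algorithm~\ref{alg-LS} performed at least one iteration. The procedure outputs $\x'$ when $P(\x')<P(\x)$ and otherwise declares $\x$ a local optimum.

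It then remains to identify the local optima. If $\x\in X$ is not $\varepsilon$-reciprocal with respect to the perturbed utilities, then $\max_i\Delta_i(\x)>\varepsilon/n$ so Algorithm~\ref{alg-LS} does not halt at once; by Theorem~\ref{thm:localsearch} together with Lemmas~\ref{lem:CLS-feasibility} and~\ref{lem:CLS-potential} its neighbour strictly decreases $P$, so $\x$ is not a local optimum. Hence every local optimum $\x$ has Algorithm~\ref{alg-LS} halting immediately, i.e. $\max_i\Delta_i(\x)\le\varepsilon/n$, which makes $\x$ $\varepsilon$-reciprocal (perturbed), and $\x$ is moreover $\varepsilon$-\cs (perturbed) because $G(\x,\varepsilon/(nL))$ is acyclic by definition of $X$ and Lemma~\ref{lem:stronger-assumptions}(iii) applies to the perturbed instance (whose utilities and shares remain $O(L)$-Lipschitz and cross-monotone). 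Since $X$ is finite and $P$ strictly decreases along any improvement path, a local optimum exists; passing back to the original utilities via Lemma~\ref{lem:perturbation} turns it into a $3\varepsilon$-reciprocal and $3\varepsilon$-\cs exchange, so running the whole construction with $\varepsilon/3$ in place of $\varepsilon$ (which keeps $L/\varepsilon=\poly(n)$) exhibits the desired search problem as a PLS problem.

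I expect the main obstacle to be precision book-keeping rather than anything conceptual — indeed, for this theorem the real content has already been discharged in Theorem~\ref{thm:localsearch} and Lemma~\ref{lem:perturbation}. One must verify that the coarse rounding in the neighbour map simultaneously preserves acyclicity of the exchange graph (Lemma~\ref{lem:CLS-feasibility}) and does not destroy the $L/(2n^2)$ gain in $P$ from Theorem~\ref{thm:localsearch} (Lemma~\ref{lem:CLS-potential}), which is why the grid must be fine enough that one rounding step perturbs each $u_i$ and each $\psi_{ij}$ by at most $(\varepsilon/nL)^{n^2}$; and that the internal binary searches of Algorithms~\ref{alg-D} and~\ref{alg-I}, run only to additive precision $(\varepsilon/nL)^{\poly(n)}$, do not break the surplus inequalities underlying Theorem~\ref{thm:localsearch} — exactly the robustness recorded in the footnote to the binary-search step of Algorithm~\ref{alg-D}. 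The one mild subtlety worth flagging is that the perturbed utilities are $(L+O(1))$-Lipschitz rather than $L$-Lipschitz; I would handle this by taking $L$ from the outset to be a Lipschitz bound valid for the perturbed utilities, so that acyclicity of $G(\x,\varepsilon/(nL))$ does deliver $\varepsilon$-core-stability through Lemma~\ref{lem:stronger-assumptions}(iii).
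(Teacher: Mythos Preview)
Your proposal is correct and follows essentially the same approach as the paper: the solution set $X$, cost $P(\cdot)$, all-ones initial solution, and the neighbour obtained by running Algorithm~\ref{alg-LS} and rounding up are exactly what the paper uses, with Lemmas~\ref{lem:CLS-feasibility} and~\ref{lem:CLS-potential} supplying feasibility and strict improvement. Your additional care about encoding length, the $\varepsilon$-versus-$3\varepsilon$ transfer via Lemma~\ref{lem:perturbation}, and the Lipschitz constant of the perturbed utilities are details the paper leaves implicit but which you have handled correctly.
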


\subsection{Membership in PPAD} 
In this section, we show that the problem of computing an $\eps$-reciprocal and $\eps$-\cs exchange lies in PPAD under the assumption that utility and sharing functions are $L$-Lipschitz continuous for some rational  $L$ satisfying $L/\eps = \poly(n)$. We rely on the following theorem from \cite{etessami2010fixedpoint} which shows that the problem of computing an \textit{almost fixed point} of certain kinds of functions lies in PPAD.
\begin{proposition}[\cite{etessami2010fixedpoint}]\label{prop:ppad}
Let $\mathcal{F} = \{F_I\}$ be a family of functions $F_I$ associated with the instances $I$ of a search problem $\Pi$, such that $F_I$ is a continuous function that maps a convex, compact domain $D_I$ to itself, and the solutions of $I$ are the fixed points of $F_I$. The problem of computing an `almost fixed point' is: Given $I$ and rational $\eps>0$, compute a rational $x\in D_I$ such that $|F_I(x) - x| < \eps$.

\noindent Suppose $\mathcal{F}$ satisfies:
\begin{itemize}
\item[(i)] (Polynomial computability) There is a polynomial $q(\cdot)$ s.t. for all instances $I$, (a) the domain $D_I$ is a convex polytope described by a set of inequalities with rational coefficients that can be computed from $I$ in time $q(|I|)$, and (b) for any rational $x\in D_I$, $F_I(x)$ is rational and can be computed from $I$ and $x$ in time $q(|I| + size(x))$.
\item[(ii)] (Polynomial continuity) There is a polynomial $q(\cdot)$ s.t. for all instances $I$ and rational $\rho > 0$, there exists rational $\gamma > 0$ s.t. $size(\gamma) \le q(|I| + size(\rho))$ and for all $x,y\in D_I$, $y\in \mathcal{B}(x, \gamma)$ implies $F_I(y) \in \mathcal{B}(x, \rho)$. 
\end{itemize}
Then, the problem of computing an almost fixed point for $\mathcal{F}$ is in PPAD. 
\end{proposition}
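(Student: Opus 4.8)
The plan is to exhibit a polynomial-time many-one reduction from the ``almost fixed point'' search problem for $\mathcal{F}$ to the canonical PPAD-complete problem \textsc{End-of-Line} (equivalently, to the discrete Brouwer/Sperner problem on a cube with a circuit-presented coloring), following a Scarf/Sperner-style discretization whose granularity is controlled by the polynomial-continuity hypothesis~(ii) and whose bit-complexity is controlled by the polynomial-computability hypothesis~(i).

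First I would reduce to the case where the domain is a box. Given an instance $I$, condition~(i) lets us compute in polynomial time the rational inequality description of the convex polytope $D_I$; enclose $D_I$ in a rational box $B_I = \prod_{k} [0, M]$ of the ambient dimension, and let $\pi_I \colon B_I \to D_I$ be the Euclidean projection onto $D_I$. Since $D_I$ is a polytope with polynomially many rational facets, $\pi_I$ is $1$-Lipschitz, rational, and polynomial-time computable with polynomial bit-length output (it is the optimum of a convex quadratic program whose solution is determined by a subsystem of the given inequalities). Replacing $F_I$ by $\widehat{F}_I := F_I \circ \pi_I$ on $B_I$ gives a continuous self-map of a box that is still polynomial-time computable, still polynomially continuous (composing with a $1$-Lipschitz map changes the continuity modulus only by a controllable amount), and whose $\delta$-approximate fixed points project under $\pi_I$ to $2\delta$-approximate fixed points of $F_I$ inside $D_I$. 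Hence it suffices to place the box version in PPAD with accuracy $\rho/2$.

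Next I would discretize. Using~(ii), obtain a rational $\gamma$ of polynomial bit-length that is good for the target accuracy, pick a grid spacing $\mu$ that is a negative power of two with $\mu$ at most $\gamma$ divided by a polynomial in the dimension (so $\mathrm{size}(\mu)$ is polynomial), and impose Kuhn's triangulation of $B_I$ into simplices of diameter at most $\mu$. To each grid vertex $v$ assign a Sperner color in $\{0, 1, \dots, d\}$ determined in the standard way by the sign pattern of the displacement $\widehat{F}_I(v) - v$, with an artificial coloring on the boundary facets of $B_I$ making the coloring ``Sperner-admissible,'' so that Sperner's lemma forces a panchromatic simplex. The color of any vertex is computed by a polynomial-size Boolean circuit, by~(i). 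A panchromatic simplex $\Delta$ certifies that at its vertices the displacement is small in every coordinate; taking any vertex $v \in \Delta$ and invoking polynomial continuity once more shows $|\widehat{F}_I(v) - v| < \rho/2$, so $\pi_I(v)$ is the desired almost fixed point of $F_I$, of polynomial bit-length. Finding a panchromatic simplex in a Kuhn-triangulated cube under a circuit-given Sperner coloring is precisely a polynomial instance of \textsc{End-of-Line}, so the entire construction is the required reduction and the problem lies in PPAD.

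The main obstacle is the bookkeeping that ties the three polynomials together: one must choose the grid fine enough that a panchromatic simplex of the \emph{discrete} instance genuinely yields a $\rho$-approximate fixed point of the \emph{continuous} original --- this is exactly what polynomial continuity buys, and why the modulus $\gamma$ must have polynomial bit-length --- while simultaneously keeping the number of grid cells, the size of the coloring circuit, and the bit-length of the reported point all polynomial in $|I| + \mathrm{size}(\rho)$. A secondary delicate point is producing the polynomial-time, polynomially-bounded retraction $\pi_I$ from the box onto an arbitrary rational polytope $D_I$ (and checking that $\widehat{F}_I$ inherits both polynomial computability and polynomial continuity), which is what legitimizes passing from the general polytope domain to a box before triangulating.
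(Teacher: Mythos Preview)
The paper does not prove this proposition: it is quoted verbatim from \cite{etessami2010fixedpoint} and used as a black box, so there is no in-paper argument to compare your proposal against. Your sketch is essentially the standard Etessami--Yannakakis argument (retract an enclosing box onto the polytope, then run a Scarf/Sperner discretization with Kuhn's triangulation and a circuit-computed coloring, reducing to \textsc{End-of-Line}), and the two ``delicate points'' you flag---tying the grid spacing to the polynomial-continuity modulus so that a panchromatic simplex really certifies an $\eps$-almost fixed point, and showing the projection $\pi_I$ is rational and polynomial-time---are exactly the technical content of that proof. Interestingly, your retraction-to-a-box step is precisely the device the present paper later deploys (see \cref{lem:PPAD-computation}) to work around the fact that its own domain $Z$ has exponentially many facets and therefore fails hypothesis~(i)(a) of the proposition; so while the paper does not prove \cref{prop:ppad}, it does carry out your ``secondary delicate point'' in detail for its specific polytope.
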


Ideally, we would like to use the function $g: Z\rightarrow Z$ from our fixed point formulation from \cref{sec:existence}, and prove that the problem of computing an almost fixed point of $g$ is in PPAD. However, $g$ and $Z$ do not directly satisfy the conditions of \cref{prop:ppad} since the formulation suffers from a lack of polynomial computability: 
\begin{itemize}
\item[(i)] It is not clear if the constant $\alpha = \alpha(\eps)$ appearing in the constraints of $Z$ can be efficiently computed. 
\item[(ii)] Although the domain $Z$ is a convex polytope, it has exponentially many constraints, and hence cannot be directly computed from the instance in polynomial time.
\item[(iii)] The oracle computing the utility and sharing functions returns rational outputs when given a rational exchange $\x$ as input. However, for rational $z\in Z$, the exchange $\x = f(z)$ need not be rational, as $x_{ij} = f(z)_{ij} = 1 - \exp(-z_{ij})$.
\end{itemize}

We address these issues as follows.
\begin{itemize}
\item[(i)] We first re-define the set $Z$. Let $b \ge n\cdot \log(\frac{nL}{\eps})$ be a rational number, and let $M = n\cdot b$. Let $Z$ be the set given by $Z = \{z\in[0, M]^{n\times n} : \forall \text{ cycles } C \in K_n, \sum_{e\in C} z_e \ge b\}$. By choice of $M$ and $b$, the constants appearing in the constraints of $Z$ are rational numbers computable in polynomial time and representable using $O(\log n + \log(\frac{nL}{\eps}))$ bits. 

Further, by definition of $Z$ and choice of $b$, for any $z\in Z$, the exchange graph $G(f(z), \frac{\eps}{nL})$ is acyclic. \cref{lem:stronger-assumptions} then implies that the exchange $f(z)$ is $\eps$-\cs for any $z\in Z$. However, $Z$ still has exponentially many constraints. 
\item[(ii)] Therefore, we embed $Z$ in the larger polytope $Z' = [0, M]^{n\times n}$, since $Z\subseteq Z'$ by definition. We define a new function $\tilde{g}: Z'\rightarrow Z'$ whose fixed points are exactly the fixed points of $g$. For this, let $\pi : Z'\rightarrow Z$ be the \textit{projection} of $Z'$ onto $Z$, i.e., $\pi$ is the identity on $Z$ and maps points in $Z'\setminus Z$ to their nearest point on the boundary of $Z$. Thus, $\pi(z') = \argmin_{z \in Z} \elltwo{z'-z}$ for $z'\in Z'$. Note that $\pi$ is a well-defined function since the set $Z$ is closed and convex, and the $L_2$ norm is strictly convex. The function $\tilde{g}$ is the composition of $\pi$ and $g$, i.e. $\tilde{g}(z) = g(\pi(z))$ for $z\in Z'$. It is easy to see that any $z\in Z'\setminus Z$ cannot be a fixed point of $\tilde{g}$, and hence the set of fixed points of $g$ and $\tilde{g}$ are identical. 
\item[(iii)] Finally we address the polynomial computability of $g(z)$ for a rational $z\in Z$, given that the utility and sharing oracles only return rational outputs when given rational exchanges $\x$ as input. For rational $z\in Z$, $f(z)$ need not be rational as $f(z)_{ij} = 1 - \exp(-z_{ij})$. Therefore, we discuss how to obtain a \textit{continuous, rational approximation} $\hat{f}(z)$ of $f(z)$ through standard techniques \cite{etessami2010fixedpoint}. For some small parameter $\kappa$, we partition $[0,M]^{n\times n}$ into cubelets of length $\kappa$ along cardinal directions, and then construct a simplicization of each cubelet in polynomial time. Given a rational $z\in Z$, we compute $\hat{f}(z)$ in polynomial time as follows: identify the simplex to which $z$ belongs by binary search, and uniquely express $z$ as a convex combination of the $n^2+1$ vertices $\{z_i\}_{i=1}^{n^2+1}$ of the simplex, i.e., $z = \sum_{i=1}^{n^2+1} \lambda_i z_i$. Compute $\hat{f}(z_i)$ to precision $\kappa$, and then compute $\hat{f}(z)$ by linearly interpolation, that is, $\hat{f}(z) = \sum_i \lambda_i \hat{f}(z_i)$. In this manner, we obtain a continuous, rational approximation function $\hat{f}$ for $f$ up to an error tolerance of $\kappa$, i.e., $|\hat{f}(z) - f(z)| \le O(n^2\cdot\kappa)$. Using the rational exchange $\hat{f}(z)$ computed from a rational $z\in Z$, we can then approximately compute $g(z)$ (and hence also $\tilde{g}(z')$ for rational $z'\in Z'$) within a $\poly(\kappa, n)$ error. It is then straightforward to see that an $\eps$-almost fixed point of the rational approximation of $\tilde{g}$ is an $(\eps + \poly(\kappa, n))$-almost fixed point of $\tilde{g}$. Thus, with the choice of $\kappa = \eps/\poly(n)$, the construction of the continuous rational approximation $\tilde{f}$ in polynomial time effectively circumvents the issue of irrational $f(z)$ for rational $z$.

In what follows, we assume for simplicity that we can compute a rational value of $f(z)$ from rational $z\in Z$. 
\end{itemize}

We are now in a position to prove that $\tilde{g} : Z \rightarrow Z'$ is polynomially computable, i.e., satisfies condition (i) of \cref{prop:ppad}.

\begin{lemma}\label{lem:PPAD-computation}
The function $\tilde{g}: Z' \rightarrow Z'$ is polynomially computable.
\end{lemma}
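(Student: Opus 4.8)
The plan is to check that $\tilde g$ (together with its domain $Z'$) satisfies condition (i) of \cref{prop:ppad}. Part (a) is immediate: $Z' = [0,M]^{n\times n}$ is a box, described by the $2n^2$ rational inequalities $0 \le z_{ij} \le M$, and $M = nb$ has bit-length $O(\log n + \log(nL/\eps))$ by the choice of $b$. All the work is in part (b): given a rational $z\in Z'$, produce $\tilde g(z) = g(\pi(z))$ as a rational vector of polynomial bit-length in polynomial time. I would do this in two stages: (1) compute the projection $\bar z := \pi(z)$, and (2) evaluate $g$ at $\bar z$.

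For stage (1), the first point is that $Z$ admits a polynomial-time separation oracle. Given a candidate $w$, the box constraints are checked in $O(n^2)$ time; for the cycle constraints, since the weights $w_e$ are nonnegative, a minimum-weight directed closed walk (with at least one edge) can always be taken to be a simple cycle, and its weight equals $\min_{(i,j)}\bigl(w_{ij} + d_w(j,i)\bigr)$, where $d_w(j,i)$ is the shortest-path distance from $j$ to $i$ under weights $w$, computable in polynomial time; if this minimum is $< b$ the corresponding cycle is a violated constraint, and otherwise $w\in Z$. Using this separation oracle and the standard equivalence of separation and optimization (ellipsoid method), the Euclidean projection $\pi(z) = \argmin_{w\in Z}\elltwo{z-w}$ can be computed: it is well-defined and unique since $Z$ is nonempty, closed and convex and $\elltwo{\cdot}^2$ is strictly convex. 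Moreover $\pi(z)$ is rational of polynomial bit-length: it lies on the minimal face of $Z$ containing it, hence satisfies with equality a set of active constraints whose coefficient vectors are $0$--$1$ (cycle constraints) or $\pm$ standard unit vectors (box constraints) and whose right-hand sides ($b$, $0$, or $M$) have polynomial bit-length; by the normal-cone characterization of the projection and Carath\'eodory's theorem, $z-\pi(z)$ is a nonnegative combination of at most $n^2$ such vectors, so $\bar z$ is the unique solution of a linear system of size $O(n^2)$ with entries of polynomial bit-length, hence rational of polynomial bit-length. The correct active set is identified from a sufficiently precise approximate projection by standard rounding, exactly as in \cite{etessami2010fixedpoint}.

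For stage (2), given the rational $\bar z = \pi(z)\in Z$ we evaluate $g(\bar z) = \tfrac{1}{n^2}\sum_{i,j} g^{ij}(\bar z)$, where by \cref{eq:g-ij} each $g^{ij}(\bar z)$ is $\bar z$ plus a rational scalar multiple of $\e_{ij}$ built from three ingredients. First, $\delta = 1/\max_i u_i(1^{n\times n})$ is obtained from $n$ oracle calls, is a positive rational, and satisfies $\delta \ge 1/(nL)$ by \cref{lem:stronger-assumptions}(i). Second, the surplus differences $\Delta_j(f(\bar z)) - \Delta_i(f(\bar z))$ are computed from $\poly(n)$ calls of the utility and sharing oracles on the exchange $f(\bar z)$; since $f(\bar z)$ need not be rational, we feed the oracles the rational approximation $\hat f(\bar z)$ described earlier, computable in polynomial time to tolerance $\kappa = \eps/\poly(n)$, so all inputs are rational and the resulting surpluses are rational of polynomial bit-length. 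Third, the step bounds have closed forms: moving along $+\e_{ij}$ only increases $z_{ij}$, which can never violate a lower-bound cycle constraint (cycles not through $(i,j)$ are unaffected), so $\beta^+_{ij}(\bar z) = M - \bar z_{ij}$; moving along $-\e_{ij}$ can tighten only the cycle constraints through the edge $(i,j)$, so $\beta^-_{ij}(\bar z) = \min\bigl(\bar z_{ij},\; \bar z_{ij} + d_{\bar z}(j,i) - b\bigr)$, again a rational of polynomial bit-length computable in polynomial time by shortest paths. Assembling the pieces, $g(\bar z) \in Z \subseteq Z'$ (by \cref{lem:g-func}) is rational of polynomial bit-length, and the whole computation of $\tilde g(z)=g(\pi(z))$ runs in polynomial time, which proves the lemma.

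The main obstacle is stage (1): establishing that the Euclidean projection onto $Z$ — a polytope with exponentially many facets — is nonetheless \emph{exactly} rational with polynomial bit-length and polynomial-time computable. This is precisely where we exploit that $Z$ has a polynomial-time separation oracle (minimum-weight cycle detection) and that its defining inequalities, although exponentially numerous, have tiny coefficients, so that every face of $Z$ is carved out by a small, well-conditioned linear system; the rest of the argument is routine bookkeeping of oracle calls and shortest-path computations.
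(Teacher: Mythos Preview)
Your proof is correct and shares the paper's overall architecture: compute $\pi(z)$ via a separation oracle for $Z$ plus ellipsoid, establish rationality of the projection through a KKT/Carath\'eodory argument on the active constraints, and then evaluate $g$ componentwise. Two differences are worth noting. First, you derive the closed forms $\beta^+_{ij}(\bar z)=M-\bar z_{ij}$ and $\beta^-_{ij}(\bar z)=\min\bigl(\bar z_{ij},\,\bar z_{ij}+d_{\bar z}(j,i)-b\bigr)$ directly from shortest paths, whereas the paper treats each $\beta^{\pm}_{ij}$ as a one-dimensional LP solved through the separation oracle; your route is more explicit and avoids a second appeal to ellipsoid. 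Second, your separation oracle (minimum-weight directed cycle via $\min_{(i,j)}\bigl(w_{ij}+d_w(j,i)\bigr)$) is the right one for $Z$; the paper instead tests acyclicity of $G(f(z),\eps/(nL))$, which correctly produces a violated cycle when one exists but does \emph{not} certify $z\in Z$ when that graph is acyclic (acyclicity only guarantees some edge on each cycle has $z_e\ge\log(nL/\eps)$, not that the sum exceeds $b\ge n\log(nL/\eps)$), so your formulation actually closes a gap in the paper's argument at this step.
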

\begin{proof}
Observe that $Z' = [0, M]^{n\times n}$ is a convex polytope described by a set of $O(n^2)$ inequalities with rational coefficients, since $M$ is rational and polynomial time computable by choice of $M$. Thus the domain $Z'$ satisfies property (a) of \cref{prop:ppad} (i).

Next, we prove property (b) of \cref{prop:ppad} (i) is satisfied as well: we show that for a given rational $z'\in Z'$, $\tilde{g}(z')$ is rational and can be computed in time $\poly(size(L), size(M), size(z'))$. Since $\tilde{g} = g\circ \pi$, it suffices to separately prove this for both $\pi$ and $g$. 

For $g$, it suffices to prove the property for each $g^{ij}$. Since we are given access to oracles computing the utility and share functions, we only need to argue that $\beta^+_{ij}(z)$ and $\beta^{-}_{ij}(z)$ are rational and polynomial time computable for rational $z\in Z$. Note that by \cref{eq:set-z}, $Z$ is a polytope and can be expressed as $Z = \{z : Az \le c\}$ for some $A\in \{-1, 0\}^{R\times n^2}$ and $c\in \mathbb{Q}^R$, where $A$ has $R$ constraints (exponentially many in $n$) and $n^2$ variables. 

Let us first note that $Z$ admits a polynomial time separation oracle: given $z\in [0,M]^{n\times n}$, we (i) compute the exchange $f(z) \in [0, 1]^{n\times n}$, (ii) construct the graph $G(f(z), \frac{\eps}{nL})$, and (iii) identify a cycle $C$ in $G(f(z), \frac{\eps}{nL})$. It is clear that $G(f(z), \frac{\eps}{nL})$ is acyclic iff $z\in Z$; and if $G(f(z), \frac{\eps}{nL})$ has a cycle $C$, then the constraint $\sum_{e\in C} z_e < b$ is a hyperplane separating $z$ from $Z$. Further each of these steps can be performed in polynomial time.

With the efficient separation oracle for $Z$, we show how to compute $\beta^+_{ij}(z)$ for $z\in Z$. Observe that $\beta^+_{ij}$ is the solution to the linear program given by $\max \{b \ge 0 : A\cdot (z + b\cdot \e_{ij}) \le c\}$. Since $A, z$ and $c$ are rational, the LP is one-dimensional, and the constraint set admits an efficient separation oracle, the optimal solution to the LP $\beta^+_{ij}(z)$ is rational and can be computed in $\poly(size(M), size(L), size(z))$ for any $z\in Z$. Likewise, we can show that $\beta^-_{ij}(z)$ is polynomially computable. This shows that the function $g$ is polynomially computable.

Next, we argue that the function $\pi$ is polynomially computable. Since $\pi(z) = z$ for $z\in Z$, we only need to argue that $\pi(z')$ is rational and can be computed in polynomial time for rational $z'\in Z'\setminus Z$. Recall that $\pi(z')$ is the optimal solution to the following quadratic program $\pi(z') = \arg\min_{z\in Z} \elltwo{z'-z}^2$. Since we have an efficient separation oracle for the constraint set, that the quadratic program can be solved in polynomial time using the ellipsoid method. Lastly, we argue that for a rational $z'\in Z'\setminus Z$, the optimum solution $z = \pi(z')$ is rational and $size(z) = \poly(size(L), size(M), size(z'))$.

The quadratic program $z \in \arg\min_{z\in Z} \elltwo{z'-z}^2$ can be written as:
\begin{align*}
&\quad\min \frac{1}{2}\sum_{i, j} (z_{ij} - z'_{ij})^2 \\
& \sum_{(i,j)\in C} z_{ij} \ge b, \text{ for all cyclic permutations } C = (i_1, \dots, i_k) \text{ of } [n] \\
& 0 \le z_{ij} \le M \text{ for all } i, j\in [n]
\end{align*}
 
Let $y_C$ denote the dual variable for the constraint corresponding to cycle $C = (i_1, \dots, i_k)$ and let $\mu_{ij}$ denote the dual variable for the $z_{ij} \le M$ constraint. Expanding the objective and writing the KKT conditions shows us that the optimal solution $z$ and the dual solutions $y_C, \mu_{ij}$ are solutions to the following linear complementarity program (LCP):

\begin{subequations}\label{lcp:quad}
\begin{eqnarray}
\forall i, j \in [n]: & \ z_{ij} - z'_{ij} \le u_{ij} + \sum_{C: (i,j)\in C} y_C  & \perp z_{ij} \label{eq:lcp-1}\\
\forall i,j \in [n]: & \ z_{ij} \le M & \perp \mu_{ij} \label{eq:lcp-2} \\
\forall C: & \ \sum_{(i,j)\in C} z_{ij} \ge b & \perp y_C \label{eq:lcp-3}
\end{eqnarray}
\end{subequations}

Note $z$ is either a vertex or lies on the facet of the $n^2$-dimensional polytope $Z$. Hence, at most $n^2$ constraints of $Z$ can be tight at $z$. Due to the complementarity slackness condition \eqref{eq:lcp-3}, we see that $y_C = 0$ for all but $n^2$ variables $y_C$. Thus, the solution $(z, y_C, \mu_{ij})$ has at most $3 n^2$ non-zero entries. Let $I$ be the set of variables taking non-zero values in $(z, y_C, \mu_{ij})$. By the complementarity condition of the LCP, the inequalities associated with $I$ hold with equality, and form a linear system $A'\cdot x = c'$, where $A'$ is a $|I|\times |I|$ square sub-matrix, and $x, c$ are an $|I|$ dimensional vectors with $x$ comprising of the variables in $I$ with non-zero values. This allows us to express $x = (A')^{-1}\cdot c'$. Since the entries of $A'$ and $c'$ are rationals of size $\poly(\log M)$ and the dimension of $A'$ is $3n^2$, we can conclude that the entries of $x$ will be rational numbers of size $\poly(n, \log M)$. In particular, for all $z'\in Z'$, $z = \pi(z')$ is a rational number of size $\poly(n+\log M)$. Thus, $\pi$ is polynomially computable. 
\end{proof}

\begin{lemma}\label{lem:PPAD-continuous}
The function $\tilde{g}: Z'\rightarrow Z'$ is polynomially continuous.
\end{lemma}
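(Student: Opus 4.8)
The plan is to prove a stronger, \emph{quantitative} statement: that $\tilde{g}$ is Lipschitz on $Z'$ with a Lipschitz constant $K$ of polynomial bit-size, where $K = \poly(n, L, \operatorname{size}(M))$ and $M = nb$ with $b = O(n\log(nL/\eps))$; since $L/\eps = \poly(n)$, this $K$ is $\poly(|I|)$. Polynomial continuity then follows immediately: given a rational $\rho > 0$, take $\gamma = \rho/K$, which is rational with $\operatorname{size}(\gamma) \le \operatorname{size}(\rho) + \operatorname{size}(K) = q(|I| + \operatorname{size}(\rho))$ for a suitable polynomial $q$, and then $y \in \mathcal{B}(x, \gamma)$ forces $\elltwo{\tilde{g}(y) - \tilde{g}(x)} \le K\gamma = \rho$.

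Since $\tilde{g} = g \circ \pi$ and the Euclidean projection $\pi$ onto the convex set $Z$ is non-expansive, i.e.\ $\elltwo{\pi(x) - \pi(y)} \le \elltwo{x - y}$, it suffices to bound the Lipschitz constant of $g = \frac{1}{n^2}\sum_{i,j} g^{ij}$ on $Z$, hence of each $g^{ij}$. I would first rewrite $g^{ij}$ to eliminate the case split in \cref{eq:g-ij}: writing $h^{ij}(z) = \Delta_j(f(z)) - \Delta_i(f(z))$, $h^{+}(z) = \max(h^{ij}(z), 0)$ and $h^{-}(z) = \max(-h^{ij}(z), 0)$, one checks directly that
\[
g^{ij}(z) = z + \e_{ij}\cdot\delta\cdot\bigl(\beta^{+}_{ij}(z)\,h^{+}(z) - \beta^{-}_{ij}(z)\,h^{-}(z)\bigr),
\]
because on $\{h^{ij}\ge 0\}$ the right side equals $z + \e_{ij}\delta\beta^{+}_{ij}(z)h^{ij}(z)$ and on $\{h^{ij}<0\}$ it equals $z + \e_{ij}\delta\beta^{-}_{ij}(z)h^{ij}(z)$, matching the two branches of \cref{eq:g-ij}. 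Now $g^{ij}$ is the sum of the $1$-Lipschitz map $z\mapsto z$ and a fixed rational scalar $\delta$ times a $\pm$-combination of products of the four functions $\beta^{+}_{ij}, \beta^{-}_{ij}, h^{+}, h^{-}$; using that a product of two functions each $B$-bounded and $K_0$-Lipschitz on a common domain is $2BK_0$-Lipschitz there, it remains only to check that each of these factors is bounded and Lipschitz with polynomially bounded parameters on $Z$, and that $\delta$ has polynomial bit-size.

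For $h^{ij}$: since $f(z)_{kl} = 1 - e^{-z_{kl}}$ has $|\partial f(z)_{kl}/\partial z_{kl}| = e^{-z_{kl}} \le 1$ and each coordinate of $f$ depends on a single coordinate of $z$, $f$ is $1$-Lipschitz; combining this with the $L$-Lipschitzness of the (perturbed) utilities and share functions and the fact that $\Delta_j(f(z)) = \sum_{k\in\N}\psi_{jk}(f(z)) - u_j(f(z))$ is a sum of $n+1$ such terms shows $\Delta_j(f(\cdot))$, and hence $h^{ij}, h^{+}, h^{-}$, is $O(nL)$-Lipschitz (using that $t\mapsto\max(t,0)$ is $1$-Lipschitz), while $\psi_{jk}, u_j \in [0,1]$ gives $|h^{ij}| \le 2(n+1)$, so $h^{\pm}$ are $O(n)$-bounded. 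For $\beta^{\pm}_{ij}$: with $Z = \{z\in[0,M]^{n\times n}: \sum_{e\in C} z_e \ge b\ \forall C\in K_n\}$, moving $z$ in direction $+\e_{ij}$ only slackens the cycle constraints, so $\beta^{+}_{ij}(z) = M - z_{ij}$, which is affine and $1$-Lipschitz; moving in direction $-\e_{ij}$ tightens only $z_{ij}\ge 0$ and the cycle constraints through edge $(i,j)$, giving
\[
\beta^{-}_{ij}(z) = \min\Bigl(z_{ij},\ \min_{C\in K_n:\,(i,j)\in C}\bigl(\textstyle\sum_{e\in C} z_e - b\bigr)\Bigr),
\]
a minimum of affine functions with $\{-1,0,1\}$ coefficients supported on at most $n$ coordinates (a cycle has $\le n$ edges), hence $\sqrt{n}$-Lipschitz; both $\beta^{\pm}_{ij}$ are bounded by $M$. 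Finally $\delta = 1/\max_i u_i(1^{n\times n})$ is a fixed positive rational of polynomial bit-size (at most $1/\eps$ after perturbation). Plugging these bounds into the product rule gives a Lipschitz constant $K = \poly(n, L, \operatorname{size}(M)) = \poly(|I|)$ for each $g^{ij}$, hence for $g$, hence for $\tilde{g} = g\circ\pi$; the same estimates hold verbatim when $f$ is replaced by its continuous rational approximation $\hat{f}$, since simplicial interpolation of a $1$-Lipschitz function is $O(1)$-Lipschitz and only changes constants.

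The main obstacle is that \cref{sec:existence} established only the continuity of $g$, not a modulus, and the subtle point is the functions $\beta^{\pm}_{ij}$: being merely concave, their gradients could a priori blow up near $\partial Z$, so one must exploit their explicit polyhedral form — $\beta^{+}_{ij}$ affine, $\beta^{-}_{ij}$ a minimum of affine functions with $\{-1,0,1\}$ coefficients over cycles of length $\le n$ — to pin down a $\poly(n)$ Lipschitz constant. The case split in \cref{eq:g-ij} is a second, milder nuisance, dispatched by the $h^{+}-h^{-}$ rewriting above.
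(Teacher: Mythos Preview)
Your proof is correct and in fact cleaner than the paper's. Both arguments factor $\tilde g=g\circ\pi$ and use the non-expansiveness of the projection $\pi$; the difference is in how $g$ is handled. The paper reproduces the three-case analysis of \cref{lem:continuous} (on the sign of $h^{ij}$) and, within each case, appeals to ``$\beta^{\pm}_{ij}$ is concave, hence Lipschitz with constant $\poly(M)$''. You instead (i) collapse the case split via the identity $g^{ij}(z)=z+\e_{ij}\delta\bigl(\beta^{+}_{ij}h^{+}-\beta^{-}_{ij}h^{-}\bigr)$, and (ii) derive explicit polyhedral formulas $\beta^{+}_{ij}(z)=M-z_{ij}$ and $\beta^{-}_{ij}(z)=\min\bigl(z_{ij},\,\min_{C\ni(i,j)}\sum_{e\in C}z_e-b\bigr)$, from which a concrete $\sqrt n$-Lipschitz bound follows. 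This buys you a single global Lipschitz constant of polynomial bit-size and avoids the case analysis entirely. It also patches a soft spot in the paper's argument that you rightly flag: concavity alone does not give a Lipschitz bound at the boundary of $Z$ (think $\sqrt t$ on $[0,1]$); the paper's conclusion is nonetheless correct precisely because $\beta^{\pm}_{ij}$ are piecewise-affine, which is exactly what your explicit formulas make transparent. Two cosmetic remarks: the coefficients in your affine pieces are in $\{0,1\}$ rather than $\{-1,0,1\}$, and for the final bit-size claim you only need an upper bound on $\delta$ (e.g.\ $\delta\le 1$ suffices once utilities are normalized/perturbed), not its exact value.
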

\begin{proof}
We separately prove that both $g$ and $\pi$ are polynomially continuous; the polynomial continuity of $\tilde{g} = g\circ \pi$ follows from the fact that function composition preserves polynomial continuity.

First, $\pi$ is polynomially continuous since it is Lipschitz continuous with Lipschitz constant $1$. This follows from the well-known fact that the projection operator is non-expansive for convex sets (see Proposition 2.2.1 of \cite{bertsekas2003convex}).

Next, we show that $g: Z\rightarrow Z$ is polynomially continuous. We revisit the proof of \cref{lem:continuous} showing the continuity of $g$ under the assumption that the utility and sharing functions are $L$-Lipschitz continuous. As before, it suffices to prove that for fixed $i,j\in[n]$, $g^{ij}$ is polynomially continuous. Fix $z\in Z$ and a rational $\rho > 0$. We show that exists a rational $\gamma > 0$ with $size(\gamma) = \poly(size(\rho), size(M), size(L))$ s.t. $z' \in \mathcal{B}(z, \gamma)$ implies $g^{ij}(z') \in \mathcal{B}(g^{ij}(z), \rho)$. Recall $h^{ij}(z) = \Delta_j(f(z)) - \Delta_i(f(z))$. We consider three cases:
\begin{itemize}
\item $h^{ij}(z) = 0$. Since $h^{ij}$ is Lipschitz continuous, there exists some rational $\gamma' > 0$ such that $size(\gamma') = \poly(size(L))$ and $z'\in \mathcal{B}(z, \gamma')$ implies $h^{ij}(z') \in \mathcal{B}(h^{ij}(z), \frac{\rho}{2\delta M}) = \mathcal{B}(0, \frac{\rho}{2\delta M})$. For $z'\in \mathcal{B}(z, \gamma')$, we define $\beta_{ij}(z') = \beta^+_{ij}(z')$ if $h^{ij}(z') \ge 0$, and $\beta^-_{ij}(z')$ otherwise. Using the definition of $g^{ij}$ from \cref{eq:g-ij}, we have for every $z'\in\mathcal{B}(z, \gamma')$:
\begin{align*}
|g^{ij}(z') - g^{ij}(z)| &\le |z'-z| + \delta\cdot|\beta_{ij}(z')\cdot h^{ij}(z')| \\
&\le |z'-z| + \delta\cdot M \cdot \frac{\rho}{2\delta M} \tag{using $|\beta^{ij}(z')| \le M$}\\
&\le |z'-z| + \frac{\rho}{2}.
\end{align*}
We set $\gamma := \min(\gamma', \rho/2)$. Then, the above inequality implies that for all $z'\in \mathcal{B}(z, \gamma)$, we have $g^{ij}(z') \in \mathcal{B}(g^{ij}(z), \rho)$. Moreover, $\gamma$ is rational and $size(\gamma) = \poly(size(L), size(\rho))$ since $size(\gamma') = \poly(size(L))$. This shows that $g^{ij}$ is polynomially continuous at $z$.

\item $h^{ij}(z) = r > 0$. Since $h^{ij}$ is Lipshcitz continuous, there exists some $\gamma' > 0$ such that $size(\gamma') = \poly(size(L))$ and $z' \in \mathcal{B}(z, \gamma)$ implies $h^{ij}(z') \in \mathcal{B}(h^{ij}(z), \frac{r}{2})$. Thus, $|h^{ij}(z') - h^{ij}(z)|  < r/2$. Since $h^{ij}(z) = r$, we conclude $h^{ij}(z') > 0$ for all $z'\in \mathcal{B}(z, \gamma')$. This implies that $g^{ij}(z') = z' + \e_{ij}\cdot \delta\cdot\beta^+_{ij}(z')\cdot h^{ij}(z')$ for all $z'\in \mathcal{B}(z, \gamma')$. 
Now consider the function $\bar{g}^{ij} : Z \rightarrow Z$ given by $\bar{g}^{ij}(z') := z' + \e_{ij}\cdot\delta\cdot\beta^+_{ij}(z')\cdot h^{ij}(z')$. We have that $g^{ij}(z') = \bar{g}^{ij}(z')$ for all $z'\in \mathcal{B}(z, \gamma')$.

First we observe that $\beta^+_{ij}$ is polynomially continuous. This is because we showed that $\beta^+_{ij}$ is concave in \cref{lem:convex-func}, and hence it is Lipschitz continuous with Lipschitz constant $\poly(M)$. Since $\beta^+_{ij}$ and $h^{ij}$ are both polynomially continuous functions, $\bar{g}^{ij}$ is also a polynomially continuous function. Thus, there exists some rational $\gamma'' > 0$ s.t $size(\gamma'') = \poly(size(M), size(L))$ and $z'\in\mathcal{B}(z, \gamma'')$ implies $\bar{g}^{ij}(z') \in \mathcal{B}(\bar{g}^{ij}(z), \rho)$. 

We set $\gamma := \min(\gamma', \gamma'')$. We obtain that $z' \in \mathcal{B}(z, \gamma)$ implies $g^{ij}(z') \in \mathcal{B}(g^{ij}(z), \rho)$, since $g^{ij}(z') = \bar{g}^{ij}(z')$ for all $z'\in \mathcal{B}(z, \gamma')$. Moreover, $\gamma$ is rational and $size(\gamma) = \poly(size(L), size(M))$. This proves that $g^{ij}$ is polynomially continuous at $z$.
\item $h^{ij}(z) = r < 0$. An analysis analogous to the above case shows that $g^{ij}$ is polynomially continuous at $z$ in this case too.
\end{itemize}
In conclusion, $g$ is polynomially continuous. Hence $\tilde{g} = g\circ \pi$ is also polynomially continuous.
\end{proof}

With the above two lemmas, we can apply \cref{prop:ppad} implies:
\begin{lemma}\label{lem:g-tilde-ppad}
The problem of computing an almost fixed point of the function $\tilde{g}:Z'\rightarrow Z'$ is in \textup{PPAD}.
\end{lemma}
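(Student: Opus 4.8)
The plan is to apply Proposition~\ref{prop:ppad} verbatim to the family $\mathcal{F} = \{\tilde{g}_\I\}$, indexed by data exchange instances $\I$ together with the rational parameters $L,\eps$ satisfying $L/\eps = \poly(n)$, taking the domain to be $D_\I = Z' = [0,M]^{n\times n}$ and the function to be $F_\I = \tilde{g} = g\circ\pi$. First I would record the structural preconditions: $Z'$ is an axis-aligned box, hence convex and compact; $\tilde{g}$ maps $Z'$ into $Z\subseteq Z'$ (since $\pi$ ranges in $Z$ and $g$ maps $Z$ to $Z$ by Lemma~\ref{lem:g-func}), so it is a self-map of $Z'$; and $\tilde{g}$ is continuous, either directly from Lemma~\ref{lem:PPAD-continuous} or from the continuity of $g$ (Lemma~\ref{lem:continuous}) composed with the non-expansive projection $\pi$. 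I would also note that the solutions of the search problem ``compute an almost fixed point of $\tilde{g}$'' are by definition the fixed points of $\tilde{g}$; since any $z\in Z'\setminus Z$ has $\tilde{g}(z)=g(\pi(z))\in Z$ and hence $\tilde{g}(z)\neq z$, every fixed point of $\tilde{g}$ lies in $Z$, where it coincides with a fixed point of $g$ --- and by the analysis in the proof of Theorem~\ref{thm:fair-cs-existence} together with Lemmas~\ref{lem:z-cs} and~\ref{lem:stronger-assumptions}, such a point $z^*$ yields via $f$ a reciprocal and $\eps$-\cs exchange, so the translation to the data-exchange solution is lossless (and an approximate fixed point yields an approximately reciprocal, $\eps$-\cs exchange after the rational-approximation step discussed above).

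Next I would discharge the two quantitative hypotheses of Proposition~\ref{prop:ppad}. Condition (i)(a) --- that the domain is a convex polytope given by rational inequalities computable from $\I$ in polynomial time --- holds because $Z' = [0,M]^{n\times n}$ is cut out by $O(n^2)$ inequalities $0\le z_{ij}\le M$ with $M = nb$ rational and poly-time computable by its definition. Condition (i)(b) --- that $\tilde{g}(z)$ is rational and poly-time computable for rational $z\in Z'$ --- is exactly the content of Lemma~\ref{lem:PPAD-computation}. Condition (ii), polynomial continuity, is exactly Lemma~\ref{lem:PPAD-continuous}. With all hypotheses verified, Proposition~\ref{prop:ppad} immediately gives that computing an almost fixed point of $\tilde{g}$ is in PPAD, which is the claim.

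The substantive work here is not in this final assembly but in the two lemmas it cites, and the hard part --- already handled there --- is the polynomial computability of $\tilde{g}$ despite the exponentially many facets of $Z$: the traversal maps $\beta^{+}_{ij},\beta^{-}_{ij}$ are computed as one-dimensional LPs over $Z$ using the poly-time separation oracle for $Z$ (detect a cycle in $G(f(z),\eps/(nL))$, which certifies a violated constraint $\sum_{e\in C} z_e < b$), and the projection $\pi(z')$ is computed via the ellipsoid method, with its rationality and polynomial bit-length following from the fact that the associated KKT/LCP system has at most $O(n^2)$ active constraints, so the optimum is the solution of an $O(n^2)\times O(n^2)$ rational linear system. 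One should also keep in mind the irrationality of $f(z)_{ij} = 1 - \exp(-z_{ij})$, which is bypassed by the continuous rational approximation $\hat f$ at precision $\kappa = \eps/\poly(n)$, turning an $\eps'$-almost fixed point of the approximation into an $(\eps' + \poly(\kappa,n))$-almost fixed point of $\tilde{g}$; with $\kappa$ chosen small this error is absorbed.
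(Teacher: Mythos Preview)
Your proposal is correct and follows essentially the same approach as the paper: the paper's proof of this lemma is a one-line invocation of Proposition~\ref{prop:ppad} after verifying its hypotheses via Lemma~\ref{lem:PPAD-computation} (polynomial computability) and Lemma~\ref{lem:PPAD-continuous} (polynomial continuity), together with the observation that $Z'=[0,M]^{n\times n}$ is a box with $O(n^2)$ rational inequalities. Your write-up simply spells out these verifications in more detail and recaps the content of the cited lemmas, but the logical route is identical.
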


We now use \cref{lem:g-tilde-ppad} to prove the PPAD-membership of computing an approximately-reciprocal and approximately-\cs exchange.
\begin{theorem}\label{thm:ppad}
For every $\eps \in (0,1)$, the problem of computing an $\eps$-reciprocal and $\eps$-\cs exchange is in \textup{PPAD}.
\end{theorem}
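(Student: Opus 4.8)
The plan is to assemble the ingredients already in place: reduce to the perturbed instance via \cref{lem:perturbation}, invoke \cref{lem:g-tilde-ppad} to put the task of finding an almost fixed point of $\tilde g$ in PPAD, and then decode such a point into an $\eps$-reciprocal and $\eps$-\cs exchange. First I would replace the instance by its perturbation with parameter $\eps/3$: by \cref{lem:perturbation} it suffices to output an $(\eps/3)$-reciprocal and $(\eps/3)$-\cs exchange for $(\tilde u,\tilde\psi)$, which are still monotone, cross-monotone, and $L'$-Lipschitz with $L' = L+1$, so $L'/\eps = \poly(n)$ persists. I would then fix the rational parameters $b \ge n\log(3nL'/\eps)$, $M = nb$, the polytope $Z' = [0,M]^{n\times n}$, the projection $\pi\colon Z'\to Z$, the map $\tilde g = g\circ\pi$, and the continuous rational approximation $\hat f$ of $f$ with granularity $\kappa = \eps/\poly(n)$, exactly as set up in this section. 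Finally I would fix a target accuracy $\eps_0 = \eps/\poly(n,L)$, whose precise value is dictated by the reciprocity step below.

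Next I would apply \cref{lem:g-tilde-ppad}: computing a rational $z^*\in Z'$ with $\elltwo{\tilde g(z^*) - z^*} < \eps_0$ is in PPAD. Set $\bar z := \pi(z^*) \in Z$. Since $\tilde g(z^*) = g(\bar z)\in Z$ by \cref{lem:g-func}, and $\pi$ is the nearest-point projection onto the closed convex set $Z$, we get $\elltwo{z^* - \bar z} \le \elltwo{z^* - \tilde g(z^*)} < \eps_0$ and hence $\elltwo{\bar z - g(\bar z)} \le \elltwo{\bar z - z^*} + \elltwo{z^* - g(\bar z)} < 2\eps_0$; so $\bar z$ is a $2\eps_0$-almost fixed point of $g$ inside $Z$. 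Core-stability of $\x := \hat f(\bar z)$ is then immediate: because $\bar z \in Z$, the choice of $b$ forces $G(\hat f(\bar z), \eps/(3nL'))$ to be acyclic (modulo the $\poly(\kappa,n)$ slack from the rational approximation, which I would absorb into the $\poly(n)$ factors), so \cref{lem:stronger-assumptions}(iii) makes $\x$ an $(\eps/3)$-\cs exchange for the perturbed instance, hence $\eps$-\cs for the original instance by \cref{lem:perturbation}.

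The heart of the argument — and the step I expect to be the main obstacle — is showing that $\x$ is $(\eps/3)$-reciprocal for the perturbed instance (so $\eps$-reciprocal for the original one, again by \cref{lem:perturbation}). Suppose not; since $\sum_i \Delta_i(\x) = 0$, the surplus profile has spread $\Delta_{\max}-\Delta_{\min} > \eps/3$. Mimicking the argument behind \cref{thm:fair-cs-existence} sketched in \cref{sec:overview-exist}, I would locate a pair of agents $b,c$ with $\Delta_c(\x) - \Delta_b(\x) \ge (\Delta_{\max}-\Delta_{\min})/n > \eps/(3n)$ admitting a feasible balancing move of macroscopic size: either the low-to-high edge $(b,c)$ lies in $G(\x, \eps/(3nL'))$, in which case $z_{bc} < \log(3nL'/\eps) \ll M$ and $z_{bc}$ can be increased inside $Z$ by $M - z_{bc} = \Omega(n^2\log(nL/\eps))$; or a corresponding high-surplus outflow coordinate admits a comparable feasible decrease. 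Using the perturbed analogue of \cref{lem:stronger-assumptions}(i), namely $\delta \ge 1/(nL')$, this pair contributes $\elltwo{g^{bc}(\bar z) - \bar z} = \delta\cdot\beta^+_{bc}(\bar z)\cdot|h^{bc}(\bar z)| = \Omega\!\big(\eps\log(nL/\eps)/(n^2 L)\big)$, and since the maps $g^{pq}$ act on distinct coordinates, $\elltwo{g(\bar z) - \bar z} \ge \tfrac1{n^2}\elltwo{g^{bc}(\bar z) - \bar z}$; choosing $\eps_0$ below half of this bound contradicts $\elltwo{\bar z - g(\bar z)} < 2\eps_0$. Everything else needed to invoke \cref{prop:ppad} — polynomial computability and polynomial continuity of $\tilde g$ — is already established in \cref{lem:PPAD-computation,lem:PPAD-continuous}.

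Concretely, two things make the reciprocity step delicate. The first is upgrading the \emph{qualitative} claim from the existence proof (``whenever surpluses are unequal, some $g^{ij}$ moves'') to a \emph{quantitative} one: I need that a non-$(\eps/3)$-reciprocal point always exposes a large-gap pair together with a balancing direction that is far from tight, and this is exactly where the gap between the box bound $M = nb$ and the per-cycle slack $b$ is exploited, in tandem with a reachability/path argument on $G(\x, \eps/(3nL'))$ (letting $R$ be the set reachable from the minimum-surplus agent, and treating the cases $\argmax\Delta \in R$ and $\argmax\Delta \notin R$ separately). The second is bookkeeping the extra $\poly(\kappa,n)$ errors introduced by replacing $f$ with $\hat f$ — both in the acyclicity of the exchange graph and in the computed surplus values $\Delta_i$ — so that all of these can be hidden inside the final $\eps_0 = \eps/\poly(n,L)$.
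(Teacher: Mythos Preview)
Your plan is correct and follows the same skeleton as the paper: invoke \cref{lem:g-tilde-ppad}, project the almost fixed point into $Z$, read off $\eps$-core-stability from $Z$-membership via \cref{lem:stronger-assumptions}, and derive $\eps$-reciprocity by showing that any non-reciprocal point forces $\elltwo{g(z)-z}$ to be at least $\poly(\eps)/\poly(n,L)$. Two differences are worth noting. First, the perturbation step via \cref{lem:perturbation} is unnecessary here --- the paper introduces it only for the local-search/PLS argument, and the PPAD proof goes through for the unperturbed instance directly. Second, for the quantitative reciprocity step the paper does not use your reachability split on $R$; instead it reuses the set $S$ from Algorithm~\ref{alg-S} and \cref{obs:non-zeroflow} to first exhibit a concrete pair $(i,j)$ with $i\in S$, $j\notin S$ and $\psi_{ij}(\x)\ge \eps/n^3$ (hence $z_{ij}\ge \eps/(n^3L)$), and only then case-splits on whether $\beta^-_{ij}(z)$ is large (the decrease move already works) or small (a cycle constraint through $(i,j)$ is nearly tight, so every other arc on that cycle has $\beta^+ \ge (n-1)b$ and a telescoping sum over $\Delta_{i_\ell}-\Delta_{i_{\ell+1}}$ produces the increase move). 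Your reachability framing can be made to work too, but the ``$\argmax\Delta\notin R$'' branch as stated is underspecified --- you still need to locate a high-to-low coordinate with both $z_{ij}$ bounded away from zero \emph{and} $\beta^-_{ij}$ large, and that is exactly what the paper's $S$/\cref{obs:non-zeroflow} route delivers up front.
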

\begin{proof}
We reduce the problem of computing an $\eps$-reciprocal and $\eps$-\cs exchange to the problem of computing an almost fixed point of $\tilde{g}$, i.e., a point $z'\in Z'$ s.t. $\elltwo{\tilde{g}(z') - z'} \le \frac{\eps^2}{2n^8 L^3}$. The theorem follows since the latter problem is in PPAD by \cref{lem:g-tilde-ppad}.

Let $\eps' = \frac{\eps^2}{2n^7 L^2}$. The reduction maps an almost fixed point $z'\in Z'$ satisfying $\elltwo{\tilde{g}(z') - z'} \le \eps'$ to the exchange $\x := f(\pi(z'))$. Clearly, $\x$ can be computed from $z'$ in time $\poly(n, \frac{1}{\eps}, L)$. We prove that $\x$ is $\eps$-reciprocal and $\eps$-\cs. 

First we show that the point $z = \pi(z') \in Z$ satisfies $\elltwo{g(z) - z)} \le 2\eps'$. Observe that $\tilde{g}(z') = g(z)$ since $\tilde{g}(z') = g(\pi(z'))$. Further, since $z$ is the point on $Z$ closest to $z'$ and $g(z)$ is another point in $Z$, we have $\elltwo{z'-z} \le \elltwo{z'-g(z)}$. Thus:
\begin{align*}
\elltwo{g(z)-z} &\leq \elltwo{g(z)-z'} + \elltwo{z'-z}\\
&\le 2\cdot \elltwo{g(z) - z} \tag{since $\elltwo{z'-z} \le \elltwo{z'-g(z)}$} \\
&= 2\cdot \elltwo{\tilde{g}(z') - z'} \tag{since $\tilde{g}(z') = g(z)$}\\
&\le 2\cdot \eps'.
\end{align*}

Next we consider the exchange $\x = f(z)$ where $x_{ij} = 1 - \exp(-z_{ij})$. Since $z\in Z$, \cref{lem:stronger-assumptions} and the definition of $Z$ imply that $\x$ is $\eps$-\cs. We claim that:
\begin{claim}
\label{obs:approxiation-PPAD}
If $\x$ is not $\varepsilon$-reciprocal, then $\elltwo{g(z) - z} > \frac{\varepsilon^2}{n^7 L^2}$.
\end{claim}
However, we showed earlier that $\elltwo{g(z) - z} \le 2\cdot \eps' = 2\cdot \frac{\eps^2}{2n^7 L^2} = \frac{\eps^2}{n^7 L^2}$. Thus \cref{obs:approxiation-PPAD} implies that $\x$ must be $\eps$-reciprocal. We now prove the claim.

\begin{proof}[Proof of \cref{obs:approxiation-PPAD}] 
If $\x$ is not $\varepsilon$-reciprocal, then there exists an agent $i$ such that  $\Delta_i(\x) > \varepsilon/n$. We pick the set of high surplus agents  $S$ by Algorithm~\ref{alg-S}. Note that we have $\min_{i \in S} \Delta_i(\x) > \max_{i \in \N \setminus S} \Delta_i(\x) + \varepsilon/n^2$. By Lemma~\ref{obs:non-zeroflow}, we have an agent $j \in \N \setminus S$, such that $\sum_{i \in S} \psi_{ij}(\x) \geq \varepsilon/n^2$, implying that there exists an $i \in S$, such that $\psi_{ij}(\x) \geq \varepsilon/n^3$. The Lipschitzness of the utility functions imply that $x_{ij} \geq \varepsilon/(n^3L)$, showing that:
\begin{align*}
z_{ij} = \log(\frac{1}{1-x_{ij}}) \geq \log(\frac{1}{1-\varepsilon/(n^3L)}) = -\log(1-\frac{\varepsilon}{n^3L}) \geq \frac{\varepsilon}{n^3L},
\end{align*}
where last inequality used the fact that $\log(1+y) \leq y$ for all $y > -1$. We now distinguish two cases, depending on the value of $\beta^{-}_{ij}(z)$.

\paragraph{Case 1: $\beta^{-}_{ij}(z) \geq \varepsilon/(n^3L)$.} First observe that from the definition of  $g(\cdot)$, we have,
\begin{align}\label{eq:g}
(g(z))_{ij} - z_{ij} = \frac{1}{n^2} \cdot (g^{ij}(z)-z)_{ij} 
\end{align}

Since $\Delta_i(f(z))- \Delta_j(f(z)) = \Delta_i(x)- \Delta_j(x) > \varepsilon/n^3 > 0$, we have  $(g^{ij}(z) - z)_{ij}= \delta \cdot \beta^{-}_{ij} (\Delta_i(f(z))- \Delta_j(f(z))$. Since $\beta^{-}_{ij}(z) \geq \varepsilon/(n^3L)$, and $\Delta_i(f(z))-\Delta_j(f(z)) = \Delta_i(x)- \Delta_j(x) > \varepsilon/n^3$, and $\delta > 1/nL$ from \cref{lem:stronger-assumptions}, we have
\begin{align*}
|(g(z))_{ij} - z_{ij}| \geq \varepsilon^2/(n^7 L^2),
\end{align*}
implying that $\elltwo{g(z) - z} > \varepsilon^2/(n^7 L^2)$.

\paragraph{Case 2: $\beta^{-}_{ij}(z) < \varepsilon/(n^3L)$:}
We have already established $z_{ij} \geq \varepsilon/n^3L$, meaning that if we reduce $z_{ij}$ by $\varepsilon/(n^3L)$, then the constraint $z_{ij} \geq 0$ will not be violated. Then, the only constraint that can be violated is some constraint of the form $\sum_{\ell \in [k]} z_{i_{\ell} i_{\ell+1}} \geq n \log(nL/\varepsilon)$ (indices are modulo $k$), where $i_1, i_2, \dots, i_k$ is some sequence of agents with $i_1 = i$ and $i_2 = j$. 

This implies that  $ \sum_{\ell \in [k]} z_{i_{\ell} i_{\ell+1}} < n \log(nL/ \varepsilon) + \varepsilon/(n^3L)$, trivially implying that each $z_{i_{\ell} i_{\ell+1}} < n \log(nL/ \varepsilon) + \varepsilon/(n^3L)$ for $\ell \in [k]$. Observe that since $0 \leq z_{i_\ell i_{\ell+1}} \leq M$, there is still some slack in increasing $z_{i_\ell i_\ell}$. In particular, this slack is 
\begin{align*}
M - n \log(nL/ \varepsilon) - \varepsilon/(n^3L) &> n^2\cdot\log (nL)/\varepsilon)- n \log(nL/ \varepsilon) - \varepsilon/(n^3L)\\ 
                  &> n \log(nL/\varepsilon) 
\end{align*}
as $M = n^2\log(nL/\varepsilon)$.  This suggests that $\beta^{+}_{i_{\ell} i_{\ell+1}}(z) \geq n \log(nL/\varepsilon)$ for all $\ell \in [k]$, since no constraint of $Z$ gets violated on increasing $z_{ij}$ other than $z_{ij} \le M$. Let $\mathit{gap}_{\ell}= \Delta_{i_{\ell}}(f(z))-\Delta_{i_{\ell+1}}(f(z))$ for all $\ell \in [k]$. Note that, we have $\sum_{\ell =2}^{k} \mathit{gap}_{\ell} =  \Delta_{j}(f(z)) - \Delta_i(f(z)) < -\varepsilon/n^2$, as it is a telescoping sum and $i_2 = j$ and $i_1=i$. This implies that there exists an $\ell \in [k] \setminus \{1\}$, where $\mathit{gap}_{\ell} < -\varepsilon/n^3$ as $k \leq  n$. Therefore, there exists agents  $i_{\ell}$ and $i_{\ell+1}$ such that $\Delta_{i_{\ell}}(f(z)) - \Delta_{i_{\ell+1}}(f(z)) < -\varepsilon/n^3$, and $\beta^{+}_{i_{\ell} i_{\ell+1}}(z) \geq n \log(nL/\varepsilon)$. Since $\Delta_{i_{\ell}}(f(z)) - \Delta_{i_{\ell+1}}(f(z)) < -\varepsilon/n^3 < 0$, we have:
\begin{align*}
|(g^{i_{\ell}i_{\ell+1}}(z) - z)_{i_{\ell}i_{\ell+1}}| &= |\delta \cdot \beta^{+}_{i_{\ell}i_{\ell+1}}(z)\cdot (\Delta_{i_{\ell+1}}(f(z))- \Delta_{i_{\ell}}(f(z)))|\\
& > \frac{1}{n L} \cdot (n \log(nL/\varepsilon)) \cdot \frac{\varepsilon}{n^3}\\
&\geq\frac{\varepsilon^2}{n^4 L^2}.
\end{align*}
Equation~\ref{eq:g} then implies that $||g(z) - z||_2 > \frac{1}{n^2} \cdot \frac{\varepsilon^2}{n^4 L^2} \geq \frac{\varepsilon^2}{n^6 L^2} \geq \frac{\varepsilon^2}{n^7 L^2}$.
\end{proof}
In summary, the exchange $\x$ is $\eps$-reciprocal and $\eps$-\cs exchange that can be computed in $\poly(n, \frac{1}{\eps}, L)$ time from an $(\frac{\eps^2}{n^7 L^2})$-almost fixed point of $\tilde{g}$.
\end{proof}

\paragraph{Membership in CLS.} \cref{thm:PLS-membership} shows that determining a $\varepsilon$-core stable and $\varepsilon$-reciprocal exchange is in PLS. \cref{thm:ppad} shows that the same problem is in PPAD. Together, we obtain that the problem lies in PPAD $\cap$ PLS $=$ CLS.
\begin{theorem}\label{thm:CLS}
Determining a $\varepsilon$-reciprocal and $\varepsilon$-core stable exchange is in CLS when $L/\varepsilon = \textup{poly}(n)$.
\end{theorem}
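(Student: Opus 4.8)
The plan is to combine the two preceding membership results. \cref{thm:PLS-membership} places the problem of computing an $\varepsilon$-reciprocal and $\varepsilon$-core-stable exchange in PLS whenever $L/\varepsilon = \poly(n)$, via the local-search formulation over the discretized exchange set $X$ with cost function $P$: every local optimum of $P$ on $X$ is a desired exchange by \cref{lem:CLS-feasibility} and \cref{lem:CLS-potential} together with \cref{lem:stronger-assumptions}(iii), and \cref{thm:localsearch} gives a polynomial-time improvement step. \cref{thm:ppad} places the same problem in PPAD, by reducing it to computing an almost fixed point of the function $\tilde g \colon Z' \to Z'$ and invoking \cref{prop:ppad}. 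Since $\textup{PPAD} \cap \textup{PLS} = \textup{CLS}$ by \cite{FearnleyGHS23}, membership in CLS follows at once.

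The one point that needs care is that the PLS and the PPAD memberships must certify \emph{the same} total search problem. I would therefore fix the canonical problem $\Pi$: given oracle access to $L$-Lipschitz, cross-monotone utility and share functions with $L/\varepsilon = \poly(n)$, output an exchange $\x$ (from the discretization $X$) such that $G(\x, \varepsilon/(nL))$ is acyclic and $|\Delta_i(\x)| \le \varepsilon$ for all $i$. This is a legitimate total search problem: it is total, since a reciprocal and $\varepsilon$-core-stable exchange exists by \cref{thm:fair-cs-existence} and acyclicity of $G(\x,\varepsilon/(nL))$ certifies $\varepsilon$-core-stability by \cref{lem:stronger-assumptions}(iii); and a candidate solution can be verified in polynomial time given the oracles (checking acyclicity and checking $|\Delta_i(\x)|\le\varepsilon$). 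Both proofs target exactly $\Pi$: the local optima produced by the PLS argument and the exchanges extracted from almost-fixed-points of $\tilde g$ in the PPAD argument both satisfy the acyclicity and $\varepsilon$-reciprocity predicates, once one tracks the polynomial blow-ups in the error parameters (harmless because $L/\varepsilon = \poly(n)$) and undoes the perturbation of \cref{lem:perturbation} at the cost of a constant factor. Hence $\Pi \in \textup{PPAD} \cap \textup{PLS} = \textup{CLS}$.

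The main — and essentially only — obstacle is this bookkeeping: making sure the error parameters, the perturbation reduction of \cref{lem:perturbation}, and the discretization of $X$ are handled uniformly in the two arguments so that the PLS and PPAD membership proofs certify an identical solution predicate, and observing that the characterization $\textup{CLS} = \textup{PPAD}\cap\textup{PLS}$ of \cite{FearnleyGHS23} applies in the present oracle-access setting, which is the case since all our reductions are black-box in the utility and share oracles. No new mathematical ideas beyond \cref{thm:PLS-membership} and \cref{thm:ppad} are required.
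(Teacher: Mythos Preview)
Your proposal is correct and follows essentially the same approach as the paper: combine \cref{thm:PLS-membership} and \cref{thm:ppad} and invoke $\textup{CLS}=\textup{PPAD}\cap\textup{PLS}$ from \cite{FearnleyGHS23}. The paper's own proof is in fact just those two sentences; your additional bookkeeping about fixing a single canonical search problem $\Pi$ so that both memberships certify the same predicate is more careful than what the paper writes, but it is exactly the right concern to raise and does not change the argument.
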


\section*{\large\bfseries Acknowledgments}
We thank Stepan (Styopa) Zharkov for the proof of Theorem 4, suggesting the use of sequential compactness arguments together with Theorem 3.    

\bibliographystyle{alpha}
\bibliography{ref}

\end{document}